\renewcommand{\S}{{\mathcal S}}
\newcommand{\I}{\iota}
\newcommand{\R}{\mathbb{R}}
\newcommand{\Pro}{\mathbb{P}}
\newcommand{\Z}{\mathbb{Z}}
\newcommand{\C}{\mathbb{C}}
\newcommand{\Cm}{\mathcal{C}}
\newcommand{\Sm}{\mathcal{S}}
\newcommand{\B}{\mathcal{B}}
\newcommand{\T}{\mathcal{T}}
\newcommand{\sign}{\text{sign}}
\newcommand\K{\mathbb{K}}
\newcommand\MOD{\textrm{ mod }}
\DeclareMathOperator{\rank}{rank}
\newcommand{\rev}[1]{{\color{black}{#1}}}
\setlist[enumerate]{leftmargin=.5in}
\setlist[itemize]{leftmargin=.5in}
\crefname{hypothesis}{Hypothesis}{Hypotheses}
\title{Toward a mathematical theory of the crystallographic phase retrieval problem}
\author{Tamir Bendory\thanks{School of Electrical Engineering, Tel Aviv University, Tel Aviv, Israel
		(\email{bendory@tauex.tau.ac.il}).  Partially supported by the NSF-BSF award 2019752, and the Zimin Institute for Engineering Solutions Advancing Better Lives.}
	\and Dan Edidin\thanks{Department of Mathematics, University of Missouri, Columbia, MO, USA 
		(\email{edidind@missouri.edu}). Partially supported by Simons Collaboration Grant 315460.}}
\begin{document}

\maketitle

% REQUIRED
\begin{abstract}
 Motivated by the X-ray crystallography technology to  determine the atomic structure of biological molecules, we study the crystallographic phase retrieval problem,  arguably the leading and hardest  phase retrieval setup. This problem entails recovering a K-sparse signal of length $N$ from its Fourier magnitude or, equivalently, from its periodic auto-correlation. 
 Specifically, this work focuses on the fundamental question of uniqueness: what is the maximal sparsity level $K/N$ that allows unique mapping between a signal and its Fourier magnitude, up to intrinsic symmetries.
 We design a systemic computational technique to affirm uniqueness for any specific pair $(K,N)$, and establish the following conjecture: the Fourier magnitude determines a generic signal uniquely, up to intrinsic symmetries, as long as $K/N\leq 1/2$. 
 Based on group-theoretic considerations and an additional computational technique, we formulate a second conjecture: if $K/N < 1/2$, then
 for any signal the set of solutions to the crystallographic phase retrieval
 problem has measure zero in the set of all signals with a given Fourier magnitude.
  Together, these conjectures  constitute the first attempt to establish a mathematical theory for the crystallographic phase retrieval problem.
\end{abstract}

% REQUIRED
\begin{keywords}
  Phase retrieval, X-ray crystallography, sparsity, symmetry group
\end{keywords}

% REQUIRED
\begin{AMS}
94A12, 15A29, 15A63, 13P25
\end{AMS}

\section{Introduction}

\subsection{Problem formulation}
\label{sec:crystallographic_phase_retrieval}
The \emph{crystallographic phase retrieval problem} entails recovering a K-sparse signal $x_0\in\R^N$ from its Fourier magnitude
\begin{equation} \label{eq:model}
y_0 = |Fx_0|,
\end{equation}
where $F\in\C^{N\times N}$ is the discrete-time Fourier (DFT) matrix, and the absolute value is taken entry-wise. 
The problem can be equivalently formulated as recovering $x_0$ from its \emph{periodic
	auto-correlation}: 
\begin{equation} \label{eq:ac}
a_{x_0}[\ell] = \sum_{i=0}^{N-1}x_0[i]{x_0[(i+\ell)\MOD
N]},
\end{equation}
since $Fa_{x_0} = |Fx_0|^2$. 

A useful interpretation of the  crystallographic phase retrieval  problem  is  as  a feasibility problem of  finding the intersection of two non-convex sets 
\begin{equation} \label{eq:problem}
x_0\in\B \cap \Sm.
\end{equation}
Here, the set $\B$ describes all signals with the given Fourier magnitude
\begin{equation} \label{eq:B}
\B:= \left\{ x :  y_0 = |Fx|  \right\},
\end{equation}
or, equivalently, with the same periodic auto-correlation:
\begin{equation*}
\B:= \left\{ x : a_x[\ell] = a_{x_0}[\ell] \quad \text{for all} \quad  \ell=0,\ldots,N-1  \right\}.
\end{equation*}
The set  $\Sm$ consists of all signals with at most $K$ non-zero value entries, that is, all signals for which the support set
\begin{equation} \label{eq:S}
S=\{n: x[n]\neq 0\}\subseteq[0,N-1],
\end{equation}
obeys $|S|\leq K$.
Importantly,  since the Fourier magnitude  and the sparsity level of the signal remain unchanged under sign change, circular shift, and reflection, the signal can be recovered only up to these three \emph{intrinsic symmetries}. 
A rigorous definition of the group of intrinsic symmetries|occasionally referred to as trivial ambiguities in the phase retrieval literature|is provided in Section~\ref{sec:intrinsic_symmetries}.

%
%The crystallographic phase retrieval problem can be stated concisely as follows:
%\vspace{10pt}
%\begin{quote}
%	\textbf{The crystallographic phase retrieval problem:} Find a $K$-sparse signal, up to intrinsic symmetries, from its  Fourier magnitude~\eqref{eq:model} or its periodic auto-correlation~$a_{x_0}$~\eqref{eq:ac}. Equivalently, find $x\in\B\cap\Sm$.
%\end{quote}
%\vspace{10pt}

The main objective of this paper is to characterize the sparsity level $K/N$ that allows unique mapping, up to  intrinsic symmetries, between a signal and its periodic auto-correlation. In other words: the sparsity level under which the periodic auto-correlation mapping $x\mapsto a_x$ is injective.
For general signals, it is not difficult to bound $K/N$ from above. 
To this end, we note that the periodic auto-correlation is invariant under 
reflection, namely, $a_x[i]= {a_x[N-i]}$.
The set of vectors with support set \rev{contained in} $S$ is a $K$-dimensional linear
subspace~$L_S$ and thus the periodic auto-correlation is a quadratic function $L_S \to
\R^{\lfloor N/2\rfloor+1}$.  
\rev{Therefore}, by counting dimensions, we do not expect to be able to obtain unique recovery unless $K \leq \lfloor N/2 \rfloor+1$, even if the support $S$ is known.
This simple argument establishes a necessary condition on $K$; Conjecture~\ref{conj.main.descriptive}, formulated in Section~\ref{sec:contribution}, states that this is also a sufficient condition for uniqueness if the signal is generic. 

\subsection{X-ray crystallography} \label{sec:crystallography}

This work is motivated by X-ray crystallography|a prevalent technology for determining the
3-D atomic structure of  molecules~\cite{millane1990phase}:  nearly 50,000 new crystal structures are added each year to the Cambridge Structural Database,  the world's repository for crystal structures~\cite{CSD}.
While the crystallographic problem is the leading (and arguably the hardest) phase retrieval problem, its mathematical characterizations have not been analyzed thoroughly so far.

The mathematical model of X-ray crystallography is introduced
and discussed at length in~\cite{elser2018benchmark}.
For completeness, we provide a concise summary. 
In X-ray crystallography, the signal is the electron density function of the
crystal|a periodic arrangement of a repeating, compactly supported
 unit
\begin{equation} \label{eq:repeated_motif}
x_c(t) = \sum_{s\in S} x(t-s),
\end{equation}
where ${x}$ is the repeated motif
and $S$ is a large, but finite subset of a lattice $\Lambda\subset\R^D$; the dimension $D$ is usually two or three.
The crystal is illuminated with a beam of X-rays producing a diffraction pattern, which is  equivalent to the   magnitude of the Fourier transform of the  crystal:
\begin{equation}
\begin{split}
\left\vert \hat{x}_c(k) \right\vert^2 &= \left\vert \int_{\R^D} x_c(t)e^{-\I\langle t,k \rangle} dt \right\vert^2\\
&= \left\vert \int_{\R^D} \sum_{s\in  S} x(t-s)e^{-\I\langle t,k \rangle} dt \right\vert^2\\
&=  \left\vert \sum_{s\in S}e^{-\I\langle s,k \rangle} \int_{\R^D} x(t)e^{-\I\langle t,k \rangle} dt\right\vert^2\\
&=  \left\vert \hat{s}(k)\right\vert^2\left\vert \hat{x}(k)\right\vert^2,
\end{split}
\end{equation}
where \rev{$\hat{x}$ and $\hat{s}$ are, respectively, the Fourier transforms of the signal $x$ and a Dirac ensemble defined on $S$}.
As the size of the set $S$ grows (the size of the crystal), the support of the function~$\hat{s}$ is more concentrated in the dual lattice $\Lambda^*$ \footnote{The dual $\Lambda^*$ of a lattice $\Lambda\subset \R^D$ is the  lattice of all vectors $x\in\text{span}(\Lambda) \subset \R^D$ such that $\langle x,y\rangle$ is an integer for all $y\in\Lambda$. For example, if $\Lambda = 2 \Z \subset \R$ then $\Lambda^* = {1\over 2}\Z \subset \R$.}. 
Thus, the diffraction pattern is approximately equal to a discrete set of samples of $|\hat{x}|^2$ on $\Lambda^*$, called \emph{Bragg peaks}.
%we (approximately) acquire  $|\hat{x}|^2$  on a discrete set of samples on $\Lambda^*$, called \emph{Bragg peaks}.
This implies that the acquired data is
% we essentially acquire
the Fourier magnitude of a $\Lambda$-periodic signal on $\R^D$
(or equivalently a signal on $\R^D/\Lambda$), defined by its {Fourier series}:
\begin{equation}
x(t) = \frac{1}{\text{Vol}(\Lambda)}\sum_{k\in\Lambda^*}\hat{x}(k) e^{\I\langle k,t\rangle}.
\end{equation}
This signal is supported  only at the sparsely-spread positions of atoms.
Elser estimated the typical number of strong scatters in a protein crystal (e.g., nitrogen, carbon, oxygen atoms)  to be $K/N\sim 0.01$~\cite{elser2017complexity}. 
In practice, the  data also follows a Poisson distribution (namely, noise), whose mean is the signal. 

The gaps between the idealized mathematical model~\eqref{eq:model} the phase retrieval crystallographic problem as it appears in X-ray experiments are discussed in Section~\ref{sec:contribution}.

\subsection{Notation}
Throughout the work, all indices should be considered as modulo $N$. For instance, $x[-i]= x[N-i]$.
The Fourier transform and the conjugate of a signal $x$ are denoted, respectively, by $\hat{x}$ (namely, $Fx=\hat{x}$) and $\bar{x}$.
An entry-wise product between two vectors $u$ and $v$ is denoted by $u\odot v$ so that $(u\odot v)[n] = u[n]v[n]$; absolute value of a vector $|u|$ refers to an entry-wise operation, that is, $|u|[n] = |u[n]|$.
%$\I=\sqrt{-1}$. 
For a set $S\subseteq [0,N-1]$, we let $L_S$ be the subspace \rev{of signals with support contained in $S$}~\eqref{eq:S}, and  denote its cardinality by $|S|$
or $K$.
While most of this work is focused on real signals, some of the results hold for complex signals as well. We use the notation $\K^N$ to denote either vector space
$\R^N$ or $\C^N$, and define the periodic auto-correlation by $$a_{x}[\ell] = \sum_{i=0}^{N-1}x[i]\overline{x[(i+\ell)\MOD N]}.$$ It satisfies the conjugation-reflection symmetry $a_x[\ell]=\overline{a[N-\ell]}$.

To ease notation, we make two assumptions that do not affect the
generality of the results.  First, we consider 1-D signals; the
extension to a high dimensional setting is straight-forward, and is
discussed in Section~\ref{sec:higher_dimensions}. Second, hereafter we
assume that $N$ is even; all the results hold for odd $N$, where the
only change is that $N/2$ should be replaced by $\lfloor N/2\rfloor$.

\section{Contribution and perspective}
\label{sec:contribution}

To our best knowledge, this is the first work to rigorously study the mathematics of the crystallographic phase retrieval problem.  
While general uniqueness results are currently beyond reach, the main contribution of this paper is \rev{conjecturing that the Fourier magnitude determines uniquely, up to intrinsic symmetries,  almost all K-sparse signals as long as $K\leq N/2$. 
	This number is significantly larger than the typical number of strong scatters in a protein crystal  which was estimated to be $K/N\sim 0.01$~\cite{elser2017complexity}.
	In this sense, our conjecture suggests that (under
	the stated conditions) crystallographers should not worry too much
	about uniqueness: the data (i.e., Fourier magnitude) usually determines the sought signal (e.g., the atomic structure of a molecule) uniquely.
%	This bound is well-beyond the sparsity levels occur in standard crystallographic experiments, suggesting that if a phase retrieval algorithm finds a signal in $\B \cap \Sm$, it is the sought signal.
	 More formally, the main conjecture of this paper states the following. } 
% formulating  the following conjecture:
\begin{conj} \label{conj.main.descriptive}
	Suppose that $x$ is a $K$-sparse generic signal with $K\leq N/2$, whose periodic  auto-correlation $a_x$ has more than $K$ non-zero entries. Then,  $a_x=a_{x'}$ implies that $x'$ is obtained from $x$ through an intrinsic symmetry. In other words, under the stated conditions,  the periodic  auto-correlation mapping $x\mapsto a_x$ is injective, up to intrinsic symmetries, for almost all signals.
\end{conj}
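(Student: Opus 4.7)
The plan is to stratify $K$-sparse signals by their support $S\subseteq\{0,\dots,N-1\}$ of size $K$ and study the periodic auto-correlation on each stratum $L_S$ as a polynomial map $\psi_S\colon L_S\to\R^{N/2+1}$ of degree two (using the reflection identity $a_x[\ell]=a_x[N-\ell]$ to restrict to the first $\lfloor N/2\rfloor+1$ entries). Since $\dim L_S=K\leq N/2<N/2+1$, a generic fiber of $\psi_S$ is at most finite by dimension counting. The conjecture then decomposes into two independent tasks: (i) for each fixed $S$, show that the generic fiber of $\psi_S$ equals the sign-involution orbit $\{x,-x\}$; and (ii) show that collisions between the images of $\psi_S$ and $\psi_{S'}$, for $S'$ not related to $S$ by a circular shift or reflection, are confined to a proper subvariety of $L_S$.

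For task (i), I would form the doubled quadratic system $\{a_x[\ell]=a_{x'}[\ell]\}_{\ell=0}^{N/2}$ on $L_S\times L_S$ and let $Z$ be its zero set. The locus $Y=\{(x,\pm x)\}$ is always contained in $Z$, and the question is whether the projection of $Z\setminus Y$ to the first factor has positive codimension in $L_S$; equivalently, whether the Jacobian of $\psi_S\times\psi_S$ has generic maximal rank off $Y$. For a generic support $S$, the only intrinsic symmetries preserving $L_S$ are $\{\pm I\}$ (shifts and reflections move $S$), so success here gives a generic fiber of size exactly two. For any specific pair $(K,N)$, the codimension statement can be certified by Gr\"obner-basis elimination or a rank computation of a resultant; this is very likely the ``systemic computational technique'' announced in the abstract.

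For task (ii), observe that $a_x=a_{x'}$ forces $\operatorname{supp}(a_x)=\operatorname{supp}(a_{x'})$, and for a generic signal satisfying the support hypothesis in the statement this equals the difference set $(S-S)\bmod N$. The condition ``$a_x$ has more than $K$ non-zero entries'' is used precisely to exclude degenerate supports whose differences collapse and which are therefore vulnerable to accidental coincidences. Circular shifts and reflections act on supports while preserving difference sets, accounting for the intrinsic collisions. The remaining question is whether homometric pairs $(S,S')$ not related by a shift or reflection exist and, if so, whether $\operatorname{Im}(\psi_S)\cap\operatorname{Im}(\psi_{S'})$ has positive codimension in $\operatorname{Im}(\psi_S)$; a tangent-space comparison of $d\psi_S$ and $d\psi_{S'}$ at a point of intersection, together with group-theoretic enumeration of the homometric possibilities, should settle this.

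The main obstacle is task (i) \emph{uniformly} in $(K,N)$. Dimension counting alone is too coarse to rule out positive-dimensional spurious components of the fiber of $\psi_S$, and certifying finiteness of every fiber via Gr\"obner computation does not yield a single proof valid for all $K\leq N/2$. A uniform argument appears to require either an explicit description of the defining ideal of $\operatorname{Im}(\psi_S)$ or an inductive construction that propagates the generic-fiber property from $K$ to $K+1$ (for instance, by a careful deletion-restriction argument on $S$). Absent such an idea, the result presumably has to remain a conjecture supported by the finite verifications and the group-theoretic heuristic.
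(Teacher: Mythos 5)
Your proposal follows essentially the same route as the paper: the statement is an open conjecture there too, and the authors likewise decompose it into support recovery (comparing incidence varieties $I_{S,S'}$ for non-equivalent supports with equal difference sets) and signal recovery given the support (showing the incidence $I_S\subset L_S\times L_S$ has no component of dimension $K$ beyond the graphs of the intrinsic symmetries $D_S$ preserving $L_S$), certified for specific $(K,N)$ by Gr\"obner-basis/Hilbert-polynomial computations. The only small refinement the paper makes that you should note is that the fiber over a generic $x\in L_S$ is the orbit $D_S\cdot x$, which for special supports (e.g.\ $\{0,1,3,4\}\subset[0,7]$) is strictly larger than $\{\pm x\}$, so the ``size exactly two'' claim holds only for supports with trivial stabilizer in the dihedral group.
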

In Section~\ref{sec:theory}, we state the conjecture more precisely
%in technical terms,
and  establish a systematic computational technique to verify it for any particular \rev{pair} $(K,N)$. 

Conjecture~\ref{conj.main.descriptive} %states
 {puts a structural requirement}
%assumption
on the signal's support $S$: the cardinality of the periodic auto-correlation's support should be larger than $K$. 
However, this condition seems to constitute only a minor restriction as it   
is almost always met.  
To comprehend the last statement, we need the notion of {\em cyclic difference set}, denoted by
$S-S$, which includes all the differences of a 
set~$S$, that is, $\{ j-i| i, j \in S\}$. 
In the set $S-S$ we consider only the first $N/2+1$ entries because of the reflection symmetry; see a formal definition in Section~\ref{sec.difference}.
For example, if $S = \{0,1,2,5\} \subset [0,8]$, then
$S-S =\{0,1,2,3,4\}$.
The notion of cyclic difference set is useful since it defines the support of the periodic auto-correlation~\eqref{eq:ac}.
Specifically,  Conjecture~\ref{conj.main.descriptive} assumes $|S-S|>K$.
Generally, proving tight bounds on the probability to obtain $|S-S|>K$  (as a function of $N$ and $K$) is a very challenging combinatorial problem.
Nevertheless, empirical examination  is easy:
Figure~\ref{fig:hist} shows the empirical distribution of $|S-S|$ for different values of $K$. As can be seen, in \emph{all trials} we \rev{obtained} $|S-S|>K$, as desired, even for rather small value of $K=5$. In Proposition~\ref{pro:prime} we also prove that if $N$ is a prime number, then $|S-S|>K$ with probability one as $N\to\infty$; see a detailed discussion in  Appendix~\ref{app.difference}.
\rev{The empirical affirmation of the condition $|S-S|>K$ implies that we expect Conjecture~\ref{conj.main.descriptive} to hold for almost all $K$-sparse signal provided that $K/N\leq 1/2$.} % This number is significantly larger than the typical number of strong scatters in a protein crystal (e.g., nitrogen, carbon, oxygen atoms), which was estimated to be $K/N\sim 0.01$~\cite{elser2017complexity}.
%In this sense, a positive answer to our conjecture suggests that (under
%the stated conditions) crystallographers should not worry too much
%about uniqueness: the data (i.e., Fourier magnitude) usually determines the sought signal (e.g., the atomic structure of a molecule) uniquely.

\begin{figure}[ht]
	\begin{subfigure}[h]{0.5\columnwidth}
		\centering
		\includegraphics[width=\columnwidth]{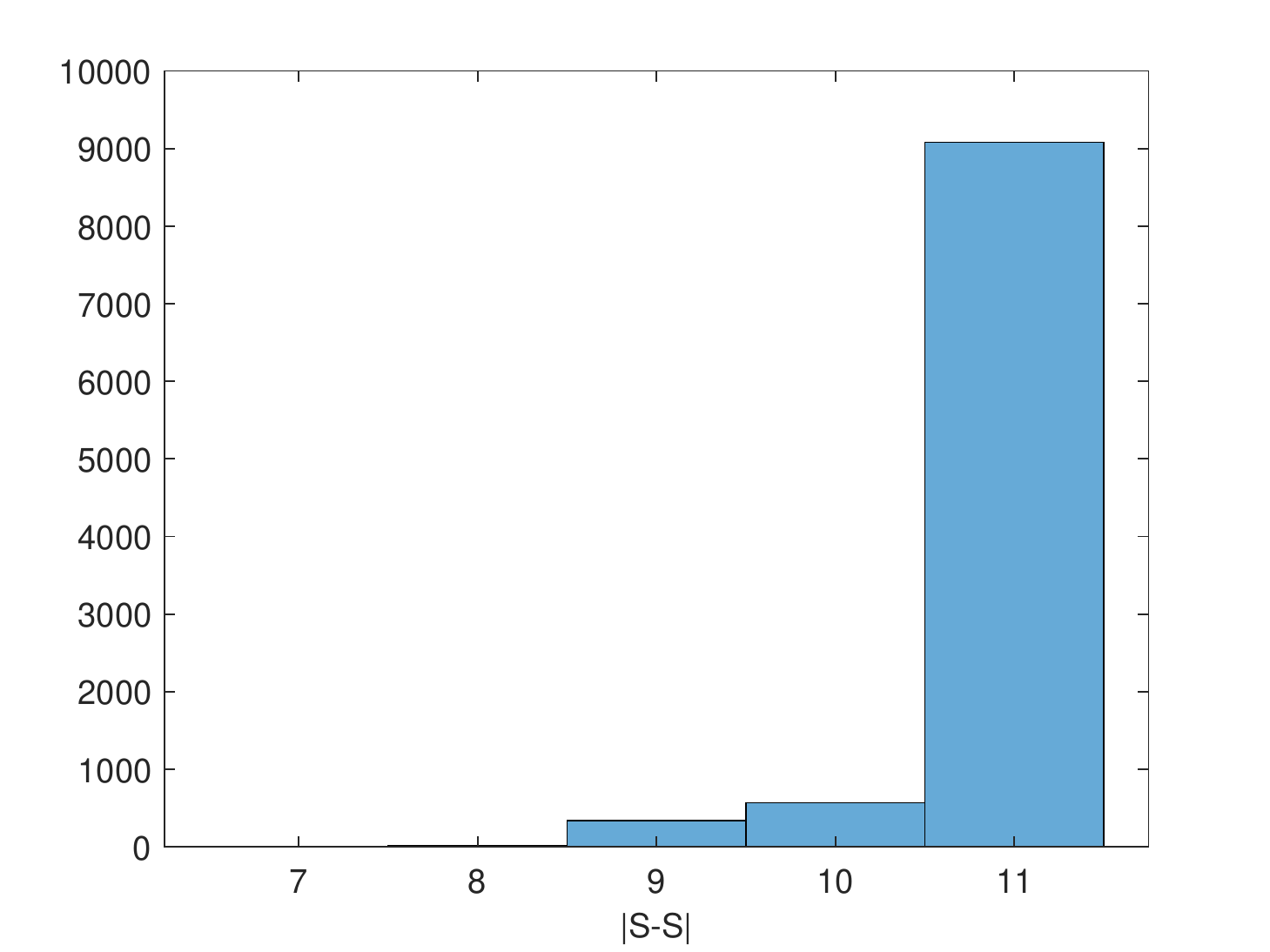}
		\caption{K=5}
	\end{subfigure}
	\hfill
	\begin{subfigure}[h]{0.5\columnwidth}
		\centering
		\includegraphics[width=\columnwidth]{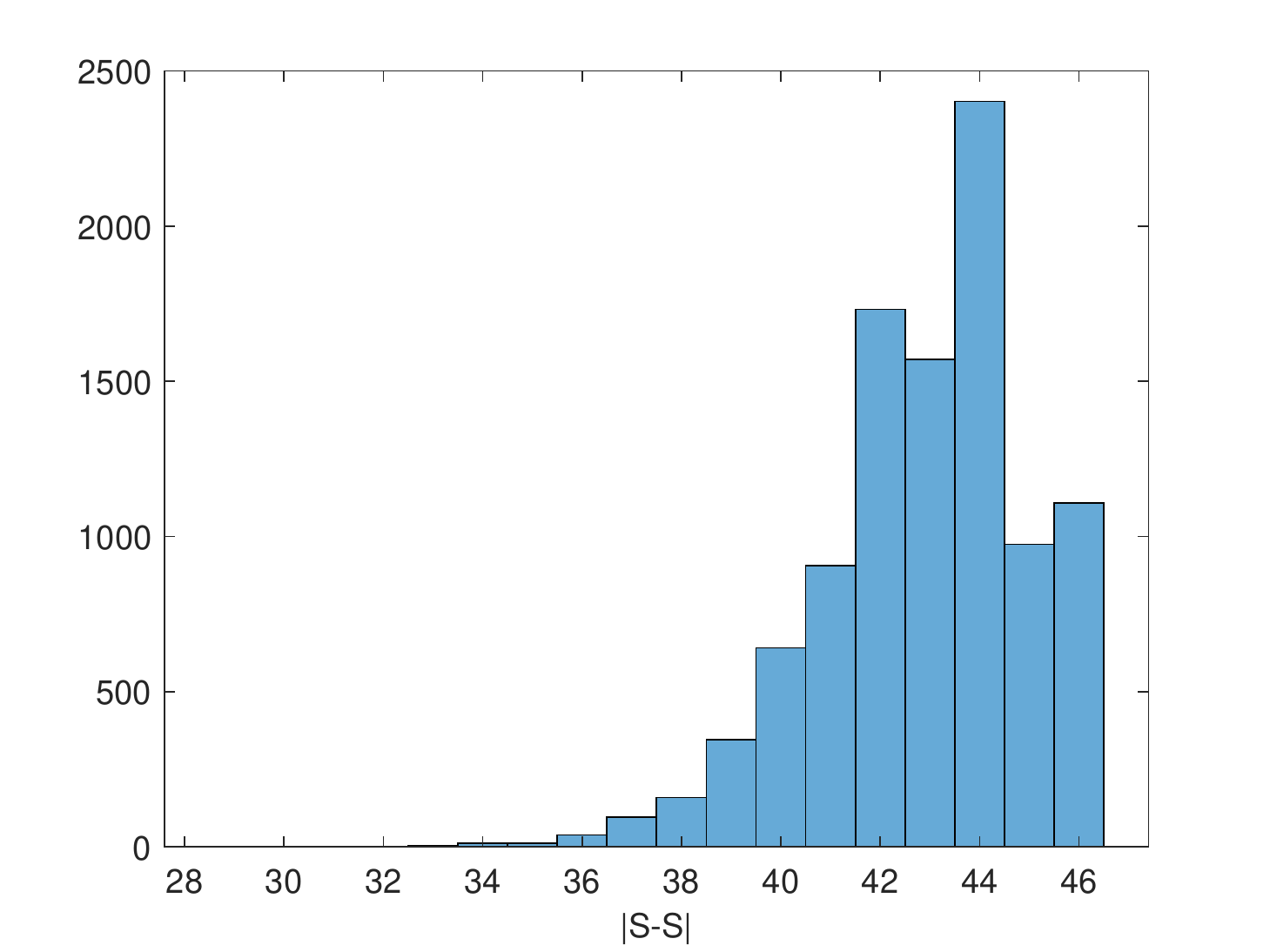}
		\caption{K=10}
	\end{subfigure}
	
	\begin{subfigure}[h]{0.5\columnwidth}
		\centering
		\includegraphics[width=\columnwidth]{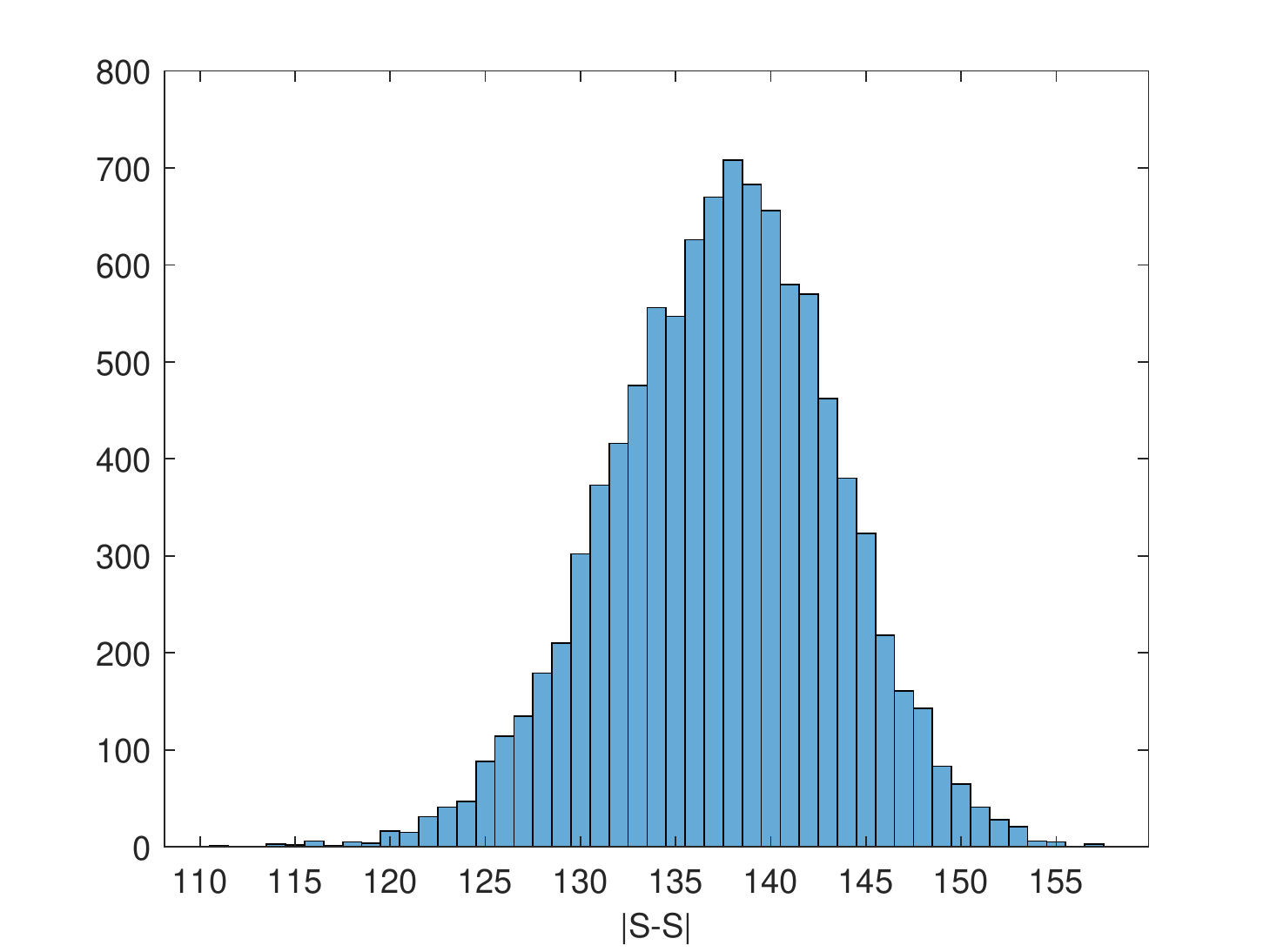}
		\caption{K=20}
	\end{subfigure}
	\hfill
	\begin{subfigure}[h]{0.5\columnwidth}
		\centering
		\includegraphics[width=\columnwidth]{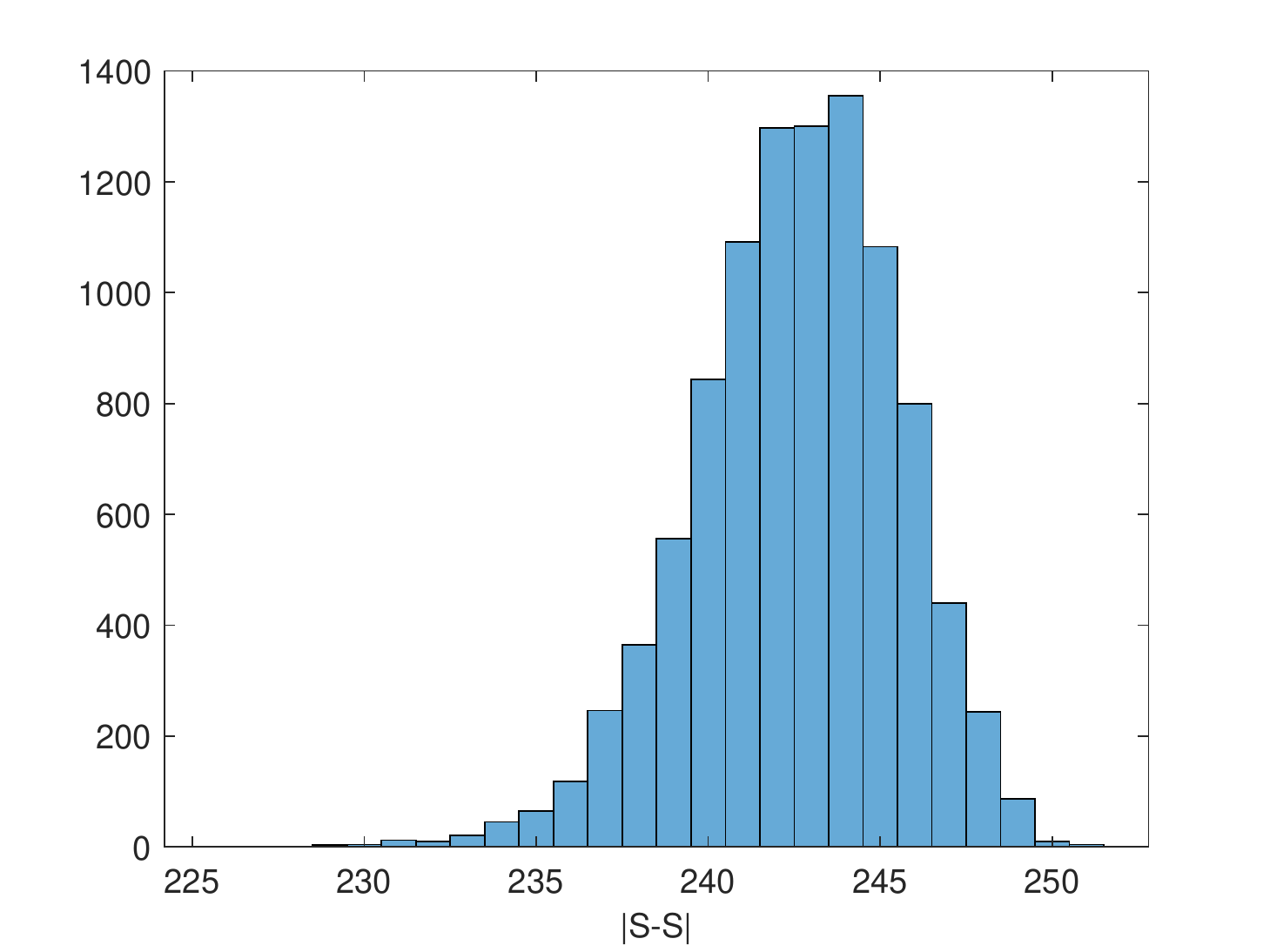}
		\caption{K=40}
	\end{subfigure}
	
	\caption{\label{fig:hist} 
	  The empirical distributions (histograms) of the  auto-correlation's support $|S-S|$  for $N=500$ and randomly sampled subsets of size $|S|=K=\{5,10,20,40\}$. Each histogram is composed of $10^4$ trials. 
	In all trials, we obtained $|S-S|>K$, as Conjecture~\ref{conj.main.descriptive} requires. 	
	%	Note that the maximum of $|S-S|$ is given by ${K\choose 2} + 1$, corresponding to collision-free trials (when ). Thus, we see that collision-free trials are likely when $K=5$ (where $\max |S-S|=11$) and still occur even when $K=10$ ($\max |S-S|=46$). However, for $K=20$ all trials have collisions ($\max |S-S|=46$). See also Figures~\ref{fig:collision} and~\ref{fig:collision_fixed_ratio}.} 
} 
\end{figure}

Our second conjecture states that even if there exist additional solutions (i.e., lack of uniqueness), the set of all solutions is of measure zero.
\rev{Therefore, in the worst case, there are only a few K-sparse signals that agree with the observed Fourier magnitude.}
 Importantly, the conjecture applies to \emph{all signals} and does not impose any structural condition on the support.  
\begin{conj} \label{conj.weak.descriptive}
Suppose that $x$ is a $K$-sparse signal 
with $K/N < 1/2$. Then, the set of $K$-sparse signals  with periodic auto-correlation $a_x$ is of measure zero.
 \end{conj}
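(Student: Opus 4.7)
The plan is to decompose the set of $K$-sparse solutions as a finite union of real-algebraic subvarieties of $\mathcal{B}$ and to show each such subvariety has strictly smaller real dimension than $\mathcal{B}$, hence zero measure with respect to the natural Hausdorff measure on $\mathcal{B}$. Since the $K$-sparse signals form the finite union $\bigcup_{|T|\le K}L_T$, this reduces to proving $\dim(L_T\cap\mathcal{B})<\dim\mathcal{B}$ for every $T$ with $|T|\le K$.

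The key step is to parametrize a connected component $\mathcal{B}_0$ of $\mathcal{B}$ by its free Fourier phases. Fixing the signs of $\hat{x'}[0]$ and $\hat{x'}[N/2]$ (which label the discrete components of $\mathcal{B}$) and imposing $\hat{x'}[k]=0$ at every $k$ with $|\hat{x}[k]|=0$ identifies $\mathcal{B}_0$ with a real torus of dimension $d=N/2-1-Z$, where $Z$ is the number of zero Fourier coefficients $\hat{x}[k]$ at indices $1\le k\le N/2-1$; the torus coordinates are the phases $\theta_k\in[0,2\pi)$ at the surviving frequencies. On $\mathcal{B}_0$ every coordinate of $x'\in\mathbb{R}^N$ reads
\[
x'[j]=\tfrac{1}{N}\Bigl(\hat{x'}[0]+(-1)^j\hat{x'}[N/2]+2\sum_{k=1}^{N/2-1}|\hat{x}[k]|\cos(\theta_k+2\pi jk/N)\Bigr),
\]
which depends non-trivially on every $\theta_k$ for which $|\hat{x}[k]|\ne 0$. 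The hypothesis $K<N/2$ enters precisely here in forcing $d\ge 1$: otherwise $\hat{x}$ would be supported in $\{0,N/2\}$, so $x$ would lie in the two-dimensional real span of $F^{-1}e_0$ and $F^{-1}e_{N/2}$, whose non-zero real elements have support of size at least $N/2$, contradicting $K$-sparsity. Consequently, $x'[j]$ is non-constant on $\mathcal{B}_0$, so $\{x'[j]=0\}\cap\mathcal{B}_0$ is a proper real-analytic subvariety of codimension at least one.

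To finish, given any $T$ with $|T|\le K<N$ one picks $j\notin T$ and observes $L_T\cap\mathcal{B}_0\subseteq\{x'[j]=0\}\cap\mathcal{B}_0$, whence $\dim(L_T\cap\mathcal{B}_0)\le d-1<d=\dim\mathcal{B}_0$. Taking the finite union over components of $\mathcal{B}$ and over supports $T$ preserves zero measure, establishing the conjecture. The main hurdle is not the dimension count itself but fixing a precise measure-theoretic framework on $\mathcal{B}$ and handling heavily degenerate strata where many Fourier coefficients vanish. In such strata the parametrizing torus shrinks; the argument still applies when $d\ge 1$, while the extreme case $d=0$ makes $\mathcal{B}$ a finite set of points and must be treated separately under whatever measure convention is adopted (it is vacuous under the Hausdorff-measure convention suggested here).
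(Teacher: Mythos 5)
First, a point of orientation: the paper does not prove this statement. It is posed as a conjecture, restated in Section~\ref{sec:group_theory} as the generic-transversality Conjecture~\ref{conj.transversality}, and supported only by a finite rank-checking procedure for specific pairs $(K,N)$. Your argument therefore does not parallel anything in the paper; it is a direct analytic proof of the statement as literally written, and as far as I can verify it is correct. The decomposition of $\B$ into at most four phase tori of dimension $d=N/2-1-Z$, the observation that $K<N/2$ forces $d\geq 1$ (a nonzero real signal whose Fourier transform is supported on $\{0,N/2\}$ has support of size at least $N/2$), the inclusion $L_T\cap\B_0\subseteq\{x'[j]=0\}$ for any $j\notin T$, and the fact that the zero set of a non-constant real-analytic function (here a trigonometric polynomial in the free phases) on a connected compact manifold is null, all hold; the finite union over supports $T$ and over components then gives the claim. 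The degenerate cases you flag ($x=0$, where $\B$ is a single point) are harmless, and since the coordinate functions are polynomial in $(\cos\theta_k,\sin\theta_k)$ your bad set is in fact contained in a proper Zariski-closed subset of the real algebraic group $G$, matching the paper's notion of genericity.

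The caveat is that this proves strictly less than what the authors treat as the precise content of Conjecture~\ref{conj.weak.descriptive}. Their technical version, Conjecture~\ref{conj.transversality}, asks for a single generic $g\in G$ such that $g\cdot L_S\cap L_{S'}=\{0\}$ for \emph{every} sparse subspace $L_{S'}$ --- uniformly over all nonzero $x\in L_S$, not for one fixed $x$. Your null set of phases depends on $x$, and the union over the uncountably many $x\in L_S$ of these null sets need not be null; upgrading to the uniform statement amounts to showing that the $2K\times 2K$ minors of the matrices $A_{S,S'}(g)$ do not vanish identically on each component of $G$, which is exactly the non-automatic content the authors certify computationally, and which their own real example with $N=8$, $K=4=N/2$ shows can fail at the boundary. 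So your argument settles the pointwise statement of Conjecture~\ref{conj.weak.descriptive} by an elementary route not present in the paper, but it leaves Conjecture~\ref{conj.transversality} untouched.
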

Based on group-theoretic considerations, Section~\ref{sec:group_theory} introduces Conjecture~\ref{conj.weak.descriptive} 
in technical terms, and develops a computational confirmation technique for any particular pair of $(K,N)$.
In Section~\ref{sec:higher_dimensions} we discuss the extension of Conjectures~\ref{conj.main.descriptive} and~\ref{conj.weak.descriptive} to higher dimensions, which is straight-forward.

Before moving on to surveying related literature, we wish to list the gaps between the model considered in this paper~\eqref{eq:model} and the crystallographic phase retrieval problem in practice.
A full mathematical theory of the crystallographic phase retrieval problem should account for the following aspects, which are beyond the scope of this work.

\begin{itemize}
	\item \textbf{Rigorous uniqueness results}: This work
          formulates conjectures and provides computational means to
          check unique mapping between a generic signal and its
          Fourier magnitude for any particular pair $(N,K)$. A
          complete theory should provide a rigorously proven bound on
          $K$ (as a function of $N$) that allows unique mapping for a
          certain class of signals.
	
	\item \textbf{Class of signals:} This work puts a special focus on generic signals (e.g., the non-zeros entries are drawn from a continuous distribution) and also discusses binary signals (i.e., the non-zeros entries are all ones). In practice, however, the model should account for sparse signals whose non-zero entries are taken from a finite (small) alphabet; this alphabet models the relevant type of atoms, such as hydrogen, oxygen, carbon, nitrogen, and so on. This model is more involved and requires intricate combinatorial calculations.
	
	\item \textbf{Noisy data}: In an X-ray crystallography experiment, the data is contaminated with noise, which is characterized by Poisson statistics. In this case, the intersection $\B\cap\Sm$ is empty, and the goal is to find a point close (in some metric) to the intersection. %In particular, it is desired to show that a small perturbation of the Fourier magnitude results in a small estimation error of the sought signal.
	
	\item  \textbf{Provable algorithms:} As discussed in Section~\ref{sec:numerics}, the state-of-the-art algorithms for phase retrieval are based on (non-convex) variations of the Douglas-Rachford splitting method. While Douglas-Rachford is fairly well understood for convex setups, the analysis of its non-convex analogues for phase retrieval is lacking. We refer the readers for several recent works on the topic~\cite{hesse2013nonconvex,li2016douglas,phan2016linear,elser2017complexity,lindstrom2018survey,levin2019note}, and to Appendix~\ref{sec:numerics} for a discussion on the computational complexity of the problem. 
	
	\item \textbf{Sampling:} In this work, we consider a discrete setup. In practice, however, the signal is continuous and its Fourier magnitude is measured on a Cartesian grid. Thus, sampling effects should be taken into account. 
	
	\item \textbf{Additional information:} In many setups, the scientist possess some additional information about the underlying signal; this information may significantly alleviate the reconstruction process. For example, some X-ray crystallography algorithms incorporate  knowledge of the minimum atom-atom
	distance, the presence of a known number of heavy atoms, or even the expected histogram of the signal values~\cite{elser2018benchmark}.
	Such information may allow recovering a non-sparse signal even in the regime $K>N/2$.
	We note that a similar analysis has been conducted for the problem of retrieving a 1-D signal from its aperiodic auto-correlation~\cite{beinert2018enforcing}. 
	
\end{itemize}

%We are now
%The rest of the paper is organized as follows. 
%Section~\ref{sec:literature} discusses related previous works. 
%Section~\ref{sec:theory} presents Conjecture~\ref{conj.main.descriptive} in technical terms and provides computational means to verify the conjecture for specific setups.
%Section~\ref{sec:group_theory} introduces a group-theoretic interpretation of the crystallographic phase retrieval problem, and reformulates the main conjecture in group-theoretic terminology.
%This new interpretation leads to Conjecture~\ref{conj.weak.descriptive}.
%Section~\ref{sec:numerics} discusses phase retrieval algorithms, and conduct numerical experiments to corroborate the conjectures.  
%Technical aspects and extensions are provided in the appendices.

%%% Prior art %%%%%%%%%%%

\section{Prior art}
\label{sec:literature}

As far as we know, the first  to study  an instance of the crystallographic phase retrieval problem (from the mathematical and algorithmic perspectives) was Elser in~\cite{elser2017complexity}. This paper  discusses the hardness of the crystallographic phase retrieval problem for binary signals  and linked it to other domains of research, such as cryptography.
The subject was further investigated in~\cite{wong2019phase}. In particular, it was shown that the solution of the ``box relaxation''  optimization problem 
\begin{equation} \label{eq:box}
\text{find } x\in[0,1]^N \quad \text{ subject to } \quad  |Fx|=y_0,
\end{equation}
is the underlying binary signal $x_0\in\{0,1\}^N$.
In other words, under the measurement constraint, the solution of~\eqref{eq:box} cannot lie within the box $[0,1]^N$ but only on the vertices. 
In addition, uniqueness results were derived for the cases of $K=1,2,3,N-3,N-2,N-1$. 
%It is important to note that uniqueness for $K>N/2$ is restricted to the binary case; as argued earlier, it cannot hold true for generic signals. 
The general crystallographic phase retrieval problem~\eqref{eq:problem} has not been previously studied.

The crystallographic phase retrieval problem is an instance of  a broader
class problems. The (noiseless) phase retrieval problem
%(also referred to as
%the phase problem in the physics literature). 
is any problem of the form 
\begin{equation} \label{eq:phase_retrieval}
\text{find } x\in\Cm \quad \text{subject to} \quad |Ax|=y_0,
\end{equation}
where $y_0$ is the observation, $A$ is a Fourier-type matrix (e.g., DFT, oversampled DFT, short-time Fourier transform) and the set $\Cm$ corresponds
to the constraints dictated by the particular application. 
For example, in the crystallographic phase retrieval problem, $A$ is the DFT matrix, and $\Cm$ is the set of all K-sparse signals, denoted by
$\Sm$ in~\eqref{eq:model}.

An important example of a phase retrieval problem arises
%from the application of
in coherent diffraction imaging. Here, an object is illuminated with a
coherent wave and the  diffraction intensity pattern (equivalent to the Fourier magnitude of the signal) is measured. As an additional constraint, usually  the support of the signal is assumed to be known (i.e.,\ the signal is known  to be zero outside of some region)~\cite{shechtman2015phase,Bendory2017}. This condition is equivalent to requiring that the signal lies in the column space of an over-sampled DFT matrix. 
If the over-sampling ratio 
is at least two (namely, the number of rows is at least twice the number of columns), 
the problem is equivalent to recovering a signal from its \emph{aperiodic auto-correlation}:
\begin{equation} \label{eq:non-periodic_ac}
b_x[\ell] = \sum_{i=0}^{N-\ell-1}x[i]\overline{x[i+\ell]}, \qquad \ell=0,\ldots,N-1.
\end{equation}

Generally, it is known that there are $2^{N-2}$
%different
non-equivalent
1-D signals that are mapped to the same  aperiodic auto-correlation  (rather than infinitely many signals that are mapped to the same periodic auto-correlation~\eqref{eq:ac}), and the  geometry of the problem has been investigated meticulously~\cite{beinert2015ambiguities,edidin2019geometry}.
It is further known that the  number of solutions can be reduced when additional information is available~\cite{beinert2018enforcing,huang2016phase}.
In more than one dimension, almost any signal can be determined uniquely from its aperiodic auto-correlation~\cite{hayes1982reconstruction}.
Nevertheless, in practice it might be notoriously difficult to recover the signal due to  severe conditioning issues~\cite{barnett2018geometry}.
When the signal is sparse, the  recovery problem is significantly easier~\cite{ranieri2013phase}.
In particular, a polynomial-time algorithms was devised to provably recover almost all signals when $K=O(N^{1/2-\varepsilon})$ under some  constraints on the distribution of the support entries~\cite{jaganathan2017sparse}.
 
%The first paper to analyze the sparse regime in detail was~\cite{ranieri2013phase}.
%This work relies on the notion of \emph{collisions}. 
%A collision appears  when the signal's support $S$  satisfies $a-b=c-d$ for some $a,b,c,d\in S$. For instance, a support that contains the points ${1,4,7}$ has a collision since $7-4=4-1$. 
%Clearly, collisions are challenging since it is difficult to determine a priori how many pairs of support's entries are mapped into one auto-correlation entry; in the example above the two pairs $\{1,4\}$ and $\{4,7\}$ are mapped into the third entry of the auto-correlation.
%Based on results developed for the combinatorial turnpike problem, the authors of~\cite{ranieri2013phase} have shown that if a signal has no collisions (and $N\neq 6$), then the aperiodic auto-correlation $b_x$ determines uniquely the support of the signal. A similar result carries on to higher dimensions.
%Unfortunately, collision-free signals are rare; in particular, we prove in Proposition TKTK that
%in the periodic case for fixed support density $K/N$ and $N$ large
%collision-free signals cannot exist.  \TODO{To check whether the result holds for the aperiodic case as well}
%Later on,  a polynomial-time algorithms were devised to recover the signals in the regime of $K=O(N^{1/2-\varepsilon})$ under some  constraints on the distribution of the support entries~\cite{jaganathan2017sparse}.

Another noteworthy phase retrieval application is  ptychography. Here, 
a moving probe is used to sense multiple diffraction
measurements~\cite{rodenburg2008ptychography,
  maiden2011superresolution}. If the shape of the probe is known
precisely\footnote{In practice, the probe shape is unknown precisely,
  and thus the goal is to recover the signal and the shape of the
  probe simultaneously; for a theoretical analysis,
  see~\cite{bendory2019blind}.}, then the problem is equivalent to
measuring the short-time Fourier transform (STFT) magnitude of the
signal, so that the matrix $A$ in~\eqref{eq:phase_retrieval}
represents an STFT
matrix~\cite{marchesini2016alternating,jaganathan2016stft,bendory2017non,iwen2016fast,pfander2019robust}. Additional
settings that were analyzed mathematically include
holography~\cite{gabor1948new,barmherzig2019holographic}, vectorial
phase retrieval~\cite{raz2013vectorial}, and ultra-short laser
characterization~\cite{trebino2012frequency,bendory2017uniqueness,bendory2018signal},

In addition to the aforementioned phase retrieval setups, we
%would like to
mention a distinct line of work which studies a {toy model}
where |to facilitate the mathematical and algorithmic analysis|the
Fourier-type matrix in~\eqref{eq:phase_retrieval} is replaced by a
%designed
``sensing matrix'' $A\in\C^{M\times N}$.  In particular, many
papers consider the case where the entries of $A$ are drawn
i.i.d.\ from a normal distribution with $M\geq 2N$.  For instance, it was shown that for
generic $A$ only $M = 2N-1$ (for real) and $M=4N-4$ (for complex) observations are required to characterize all
signals uniquely~\cite{balan2006signal,conca2015algebraic}. Moreover,
based on convex and non-convex optimization techniques, provable
efficient algorithms were devised that estimate the signals stably
with merely $M=O(N)$ observations;
see~\cite{candes2015phase,waldspurger2015phase,candes2015phase,chen2017solving,sun2018geometric,goldstein2018phasemax}
to name a few.  Later on, the analysis was extended to more intricate
models, such as randomized Fourier
matrices~\cite{candes2015phase,gross2017improved}. In addition, some
papers considered similar randomized setups, when $N<M$ and the signal
is
sparse~\cite{cai2016optimal,ohlsson2014conditions,soltanolkotabi2019structured}.
This research thread led to new theoretical, statistical, and
computational results in a variety of fields, such as algebraic
geometry, statistics, and convex and non-convex optimization.
%Nevertheless, its contribution to the field of phase retrieval is
%disputable: the algorithms that were developed for randomized
%sensing matrices
%do not work well in many practical setups, where the measurements
%come from the DFT and related matrices. Specifically, none
%of them have been successfully implemented to the crystallographic
%phase retrieval problem.
Nevertheless, its contribution to the \rev{crystallographic phase retrieval problem is
disputable: none
of the algorithms that were developed for randomized
sensing matrices  have been successfully implemented to X-ray crystallography~\cite{elser2018benchmark}.}
  In contrast, the algorithms that are used
routinely by practitioners are based on variations of the
Douglas-Rachford splitting scheme. The behavior of these algorithms
differs significantly from optimization-based algorithms and is far
from being understood; see an elaborated discussion in
Section~\ref{sec:numerics}.

%%%%% Main conjecture %%%%%%%%%%%%

\section{Uniqueness for generic signals}
\label{sec:theory}

In this section, we introduce our main conjecture on the uniqueness of generic signals and describe a set of computational tests to
%check uniqueness
verify it for any specific pair of $(N,K)$.

\subsection{Preliminaries}

We begin by formally introducing the intrinsic symmetries (i.e., trivial ambiguities) of the crystallographic phase retrieval problem, and 
discussing difference sets, multi-sets, and their connection with the uniqueness of binary signals.  

\subsubsection{Intrinsic symmetries and orbit recovery}
\label{sec:intrinsic_symmetries}
Unique mapping between a K-sparse signal  and its Fourier magnitude is possible
only up to three types of symmetries: circular shift, reflection through the
origin, and
%sign
global phase
change. These symmetries are frequently referred to
as trivial ambiguities in the phase retrieval literature~\cite{shechtman2015phase,Bendory2017}.
\begin{prop} \label{prop:trivial_symmetries}
	Let $x\in \K^N$ (either $\K=\C$ or $\K=\R$) be a K-sparse signal. Then, the following  are also K-sparse signals with the same Fourier magnitude:
	\begin{itemize}
		\item the signal $xe^{\I\phi}$ for some $\phi\in\R$ (if $\K=\R$, then it reduces to $\pm x$);
		\item the rotated signal $x^\ell[n] := {x[(n-\ell) \,\mathrm{ mod }\, N]}$ for some $\ell\in \Z$;
		\item the conjugate-reflected signal $\tilde{x}$, obeying $\tilde{x}[n] = \overline{x[-n\MOD N]}$.
	\end{itemize}
\end{prop}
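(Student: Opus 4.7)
The plan is to verify each of the three claimed symmetries separately by checking two things: (i) that the transformed signal has the same support size as $x$, and (ii) that the entrywise magnitude of its DFT equals $|Fx|$. Sparsity will be immediate in each case since every operation is either multiplication by a unit scalar, a permutation of indices, or entrywise conjugation, none of which can change which entries are nonzero.

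For the global phase change $xe^{\I\phi}$, I would use the linearity of $F$ to write $F(xe^{\I\phi}) = e^{\I\phi} Fx$ and note that $|e^{\I\phi}|=1$. For the rotation $x^\ell$, I would apply the shift rule for the DFT, $\widehat{x^\ell}[k] = e^{-2\pi\I k\ell/N}\hat{x}[k]$, which can be derived in one line by a change of summation variable $m=n-\ell\MOD N$, and then take absolute values. For the conjugate-reflected signal $\tilde{x}[n]=\overline{x[-n\MOD N]}$, the key computation is
\begin{equation*}
\widehat{\tilde{x}}[k] \;=\; \sum_{n=0}^{N-1}\overline{x[-n\MOD N]}\,e^{-2\pi\I nk/N} \;=\; \overline{\sum_{m=0}^{N-1} x[m]\, e^{-2\pi\I mk/N}} \;=\; \overline{\hat{x}[k]},
\end{equation*}
using the substitution $m=-n\MOD N$ and the fact that conjugation pulls out of the sum. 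Taking absolute values then gives $|\widehat{\tilde{x}}|=|\hat{x}|$.

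There is no real obstacle here; the statement is essentially a catalogue of elementary Fourier identities (scaling, shift, and conjugate-reflection), and the proof amounts to invoking each identity and observing that the modulus is unchanged. The only minor care required is the handling of the index $-n\MOD N$ in the conjugate-reflection step, which is why the change of variable should be written explicitly modulo $N$ to make clear that the sum still ranges over a full set of residues.
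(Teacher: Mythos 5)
Your verification is correct and is precisely the standard argument: the paper states this proposition without proof (citing it as the well-known "trivial ambiguities" of phase retrieval), and the three DFT identities you invoke — scaling, the shift theorem, and $\widehat{\tilde{x}}=\overline{\hat{x}}$ — are exactly the facts the paper itself relies on later (e.g.\ in Section 5, where the shift is identified with multiplication by $(1,\omega,\ldots,\omega^{N-1})$ in Fourier domain and the reflection with conjugation). Your handling of the change of variable modulo $N$ and the observation that each operation preserves the support size are both sound, so the proposal fills the omitted proof correctly.
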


These three types of symmetries form a symmetry group which we call
the {\em group of intrinsic symmetries}.
For $\K=\C$, a signal is invariant under the action of the group $D=
(S^1\times \Z_N)\ltimes \Z_2$, where $\ltimes$ denotes a semi-direct product.
The first $S^1$ corresponds to the phase  symmetry, $\Z_N$ corresponds to the group of $N$ cyclic shifts, and the last $\Z_2$ corresponds to the reflection symmetry; the last two symmetries generate the dihedral  group $D_{2N}$ of
symmetries of the regular $N$-gon. 
For $\K=\R$, the phase symmetry is replaced by a sign ambiguity $\Z_2=\pm 1$,
and the group of intrinsic symmetries reduces to $D= \Z_2 \times D_{2N}$.
Interestingly, an analog intrinsic symmetry group is formed when the crystallographic  phase retrieval problem is generalized to  any abelian finite group; see Appendix~\ref{sec:general_group}. 

Proposition~\ref{prop:trivial_symmetries} implies that the intersection $\B\cap\Sm$ is
invariant under the action of the group of intrinsic symmetries $D$.  In
particular, if $x\in\B\cap\Sm$, then so is $g\cdot x$ for
any element~$g$ in~$D$. In group theory terminology, the set of signals
$\{g\cdot x: g\in D\}$ is called the \emph{orbit of $x$ under
	$D$}. Therefore, our goal in this work to identify the regime in which
the intersection of $\Sm$ and $\B$ consists of a single orbit.  This
interpretation builds a connection between the crystallographic
phase retrieval problem (as well as other phase retrieval problems)
and other classes of \emph{orbit recovery problems}, such as
single-particle reconstruction using cryo-electron microscopy, and
multi-reference alignment; see for
instance~\cite{bandeira2017estimation,bendory2019single}.  Throughout this
paper we say that two
signals are {\em equivalent} if they lie in the same orbit under $D$. Otherwise, we say that the signals are non-equivalent.

We note that the two groups $S^1$ ($\Z_2$ for $\K=\R$) and the
dihedral group $D_{2N}$ play a different role in the analysis. The
dihedral group acts on the set $S$|the support of the signal|by permuting its
indices (recall that it is a subgroup of the permutation group). %Thus,
%we say that the support of the signal $S$ is invariant under the
%dihedral group.
%In particular, there is a corresponding action on
%the set of $k$-element subsets of $[0,N-1]$. 
In particular, we say that $S$ and $S'$
are equivalent if $g\cdot S = S'$ for some element $g \in D_{2N}$.
The phase (or sign in the real case) symmetry affects only the values of the non-zero entries, and thus plays a lesser role for generic signals.

\subsubsection{Difference sets, multi-sets, collisions, and uniqueness for binary signals}
\label{sec.difference}
The support recovery analysis  is tightly related to the notion of cyclic difference sets.
Let us identify $[0,N-1]$ with the group $\Z_N$.  Then, 
there is an action of the group $\Z_2 = \pm 1$ on $\Z_N$ by
$n \mapsto -n \MOD N$; this action corresponds to the reflection symmetry of the periodic auto-correlation. The set of orbits under this action can be identified with the set $[0,N/2]$.
Given a subset $S \subset [0, N-1]$, we define the
cyclic difference set $S-S \subset [0, N/2]$
as the set of equivalence classes of  $\{ j-i \MOD \pm 1| i, j \in S\}$.
For example, if $S = \{0,1,2,5\} \subset [0,8]$, then
$S-S =\{0,1,2,3,4\}$.

%Determining the probability that $|S-S|$ equals to some constant (as a function of $K$ and $N$) is a hard combinatorial problem, which is discussed in Appendix~\ref{app.difference}.

We may also view $S-S$ as a multi-set, where we count the multiplicities of the differences. For the example above,  $S-S = \{0^4,1^2, 2^1,3^1, 4^2\}$. %On the other hand, if $S' = \{0,2,4,6\}$ then $S'-S' = \{0^4, 2^3, 3^1,4^2\}$. 
The cardinality of $|S-S|$ as a multi-set equals ${K+1\choose{2}}$
and thus depends only on $K$, but 
the cardinality of $|S-S|$ as a set depends on the particular subset.
Note that $S-S$ (either as a set or as a multi-set) is invariant under the action
of $D_{2N}$ on the set of subsets. Thus, equivalent subsets have the
same difference set. 

Multiplicities greater than one are occasionally referred to as \emph{collisions}; a collision-free subset is a subset whose corresponding multi-set has no multiplicity larger than one.
From phase retrieval standpoint, collisions are challenging since it is difficult to determine a priori how many pairs of support's entries are mapped into one auto-correlation entry. 
Unfortunately, the following
proposition shows that for any fixed value of $K/N$, collision-free sets do not exist if $N$ is sufficiently large.
\begin{prop}  \label{prop:collisions}
	For any $R \in (0,1]$, for $N$ sufficiently large (as a function of $R$) there does not exist
	a collision-free subset of size $K \geq R N$.
\end{prop}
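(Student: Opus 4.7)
The plan is to prove the proposition by a direct pigeonhole argument that compares the number of distinct non-zero difference classes forced by the collision-free condition against the number of classes available in $[1,\lfloor N/2\rfloor]$.

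First, I would invoke the bookkeeping recalled just before the proposition: for any $K$-element subset $S\subset[0,N-1]$, the multi-set $S-S$ has total multiplicity $\binom{K+1}{2}=K(K+1)/2$, with the zero class carrying multiplicity exactly $K$ coming from the $K$ diagonal pairs $(i,i)$ with $i\in S$. The collision-free hypothesis then forces every non-zero class of $S-S$ to have multiplicity one, so the number of distinct non-zero classes appearing in $S-S$ is exactly $K(K+1)/2 - K = \binom{K}{2}$. Since these must be distinct elements of $\{1,2,\dots,\lfloor N/2\rfloor\}$, pigeonhole yields
\[
\binom{K}{2} \leq \lfloor N/2\rfloor, \qquad \text{hence} \qquad K(K-1)\leq N.
\]

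To close the argument, I would substitute the hypothesis $K\geq RN$ into $K(K-1)\leq N$, obtaining $R^2 N^2 - RN \leq N$, or equivalently $N \leq (R+1)/R^2$. Consequently, once $N$ exceeds the threshold $N_0(R):=(R+1)/R^2$, no collision-free subset of size at least $RN$ can exist, which is the desired conclusion. In fact the same counting shows the stronger bound $K \leq O(\sqrt{N})$ for any collision-free subset, so the proposition holds uniformly over $R \in (0,1]$ in a rather quantitative way.

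Because the whole argument is a pigeonhole count, I do not anticipate any genuine obstacle. The one bookkeeping point that requires care is the $\Z_2$ identification $n \sim -n \MOD N$, which is precisely why the target set of non-zero difference classes has size $\lfloor N/2\rfloor$ rather than $N-1$; once this identification is correctly tracked, the proof reduces to the single inequality above.
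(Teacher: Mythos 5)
Your argument is correct and is essentially the paper's own proof: both count that collision-freeness forces $\binom{K}{2}$ distinct non-zero difference classes and observe that this quadratic quantity cannot fit into the linearly many available classes once $K\geq RN$ and $N$ is large. Your version is marginally sharper (you pigeonhole into $\lfloor N/2\rfloor$ classes rather than $N$ and extract the explicit threshold $N_0(R)=(R+1)/R^2$), but the underlying idea is identical.
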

Proposition~\ref{prop:collisions} is proven in Appendix~\ref{sec:proof_prop_collisions}.  \rev{Note that the proposition holds true even for an arbitrarily low sparsity level $K/N(R)$.}
Figure~\ref{fig:collision} shows empirically \rev{that} for a fixed $N$, collision-free subsets are rare unless $K$ is very small compared to $N$. 
Figure~\ref{fig:collision_fixed_ratio} exemplifies that if we keep the ratio $K/N$ fixed, in this case $K/N=0.01$|the expected density in proteins|
%collisions
collision free subsets
are uncommon as $N$ grows. 

\begin{figure}[ht]
	\begin{subfigure}[h]{0.5\columnwidth}
		\centering
		\includegraphics[width=\columnwidth]{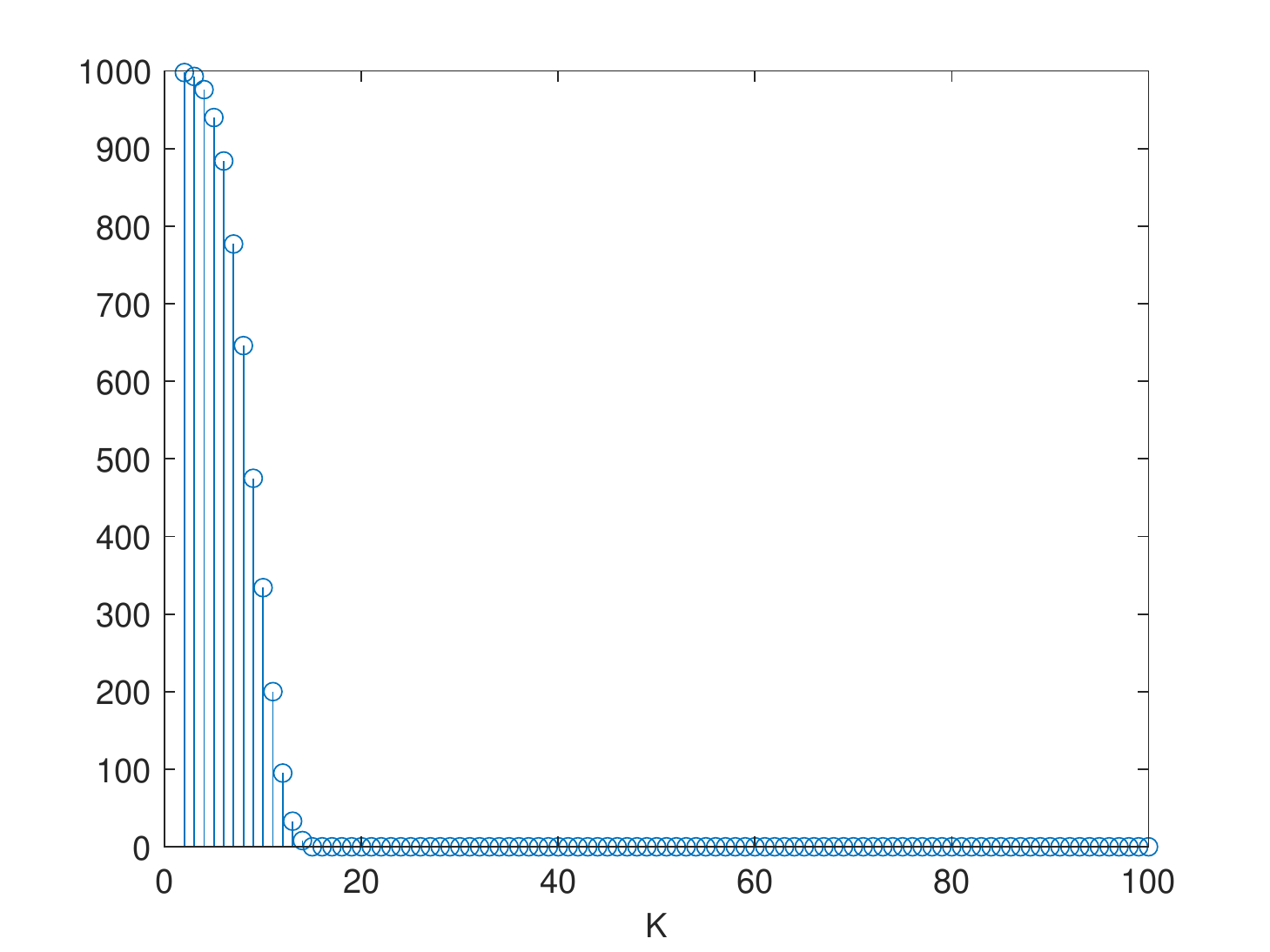}
		\caption{Number of collision-free events}
	\end{subfigure}
	\hfill
	\begin{subfigure}[h]{0.5\columnwidth}
		\centering
		\includegraphics[width=\columnwidth]{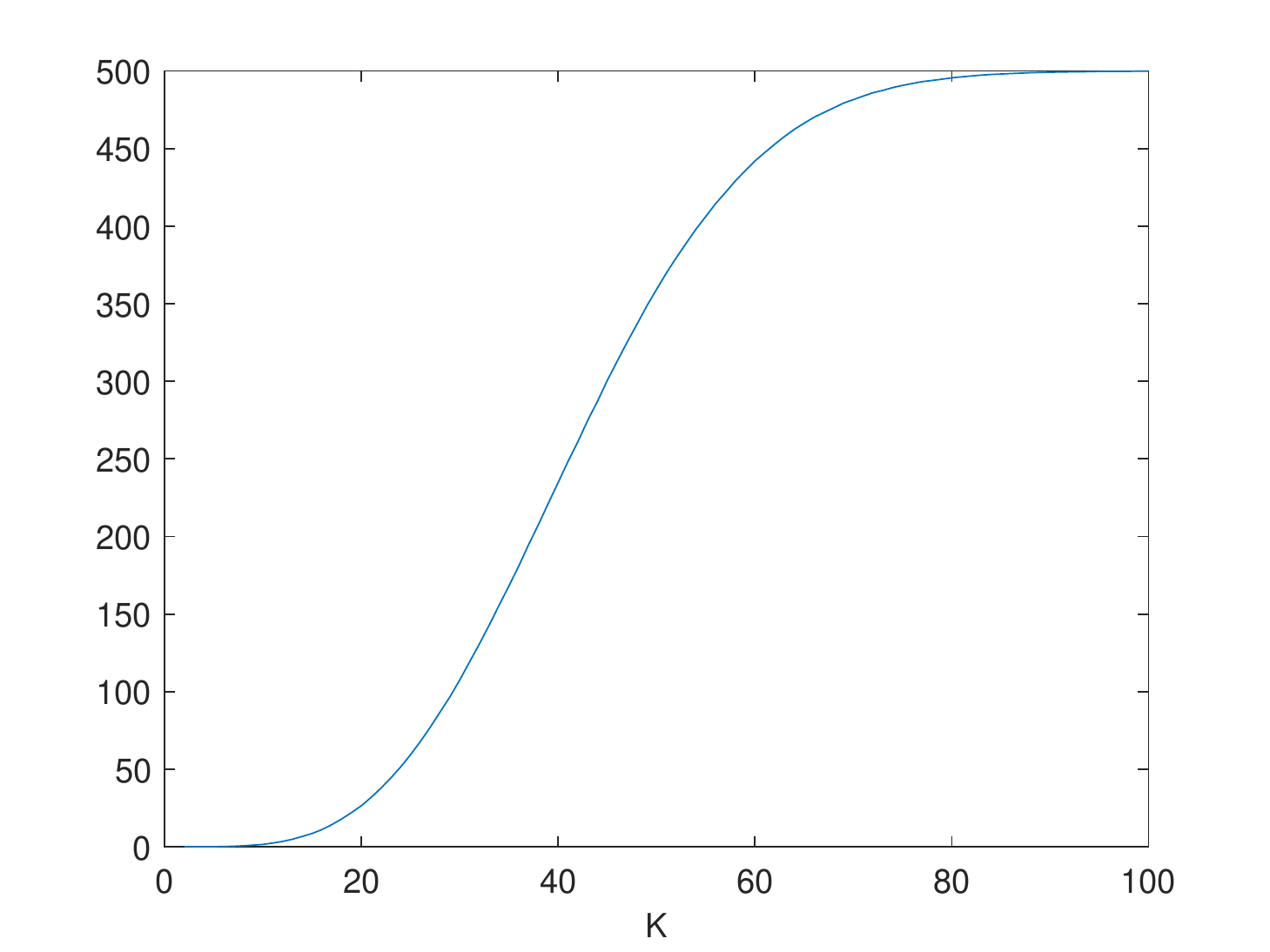}
		\caption{Average number of collisions}
	\end{subfigure}
	\caption{\label{fig:collision} For a fixed $K$, we generated 1000 K-sparse signals of length $N=1000$ and counted collisions (namely, entries of $|S-S|$ with multiplicity greater than one; see Section~\ref{sec.difference}). The left panel shows the number of collision-free events. The right panel presents the average number of collisions per trial. Clearly, unless $K/N$ is very small, collision-free events are rare.} 
\end{figure}

\begin{figure}[ht]
	\begin{subfigure}[h]{0.5\columnwidth}
		\centering
		\includegraphics[width=\columnwidth]{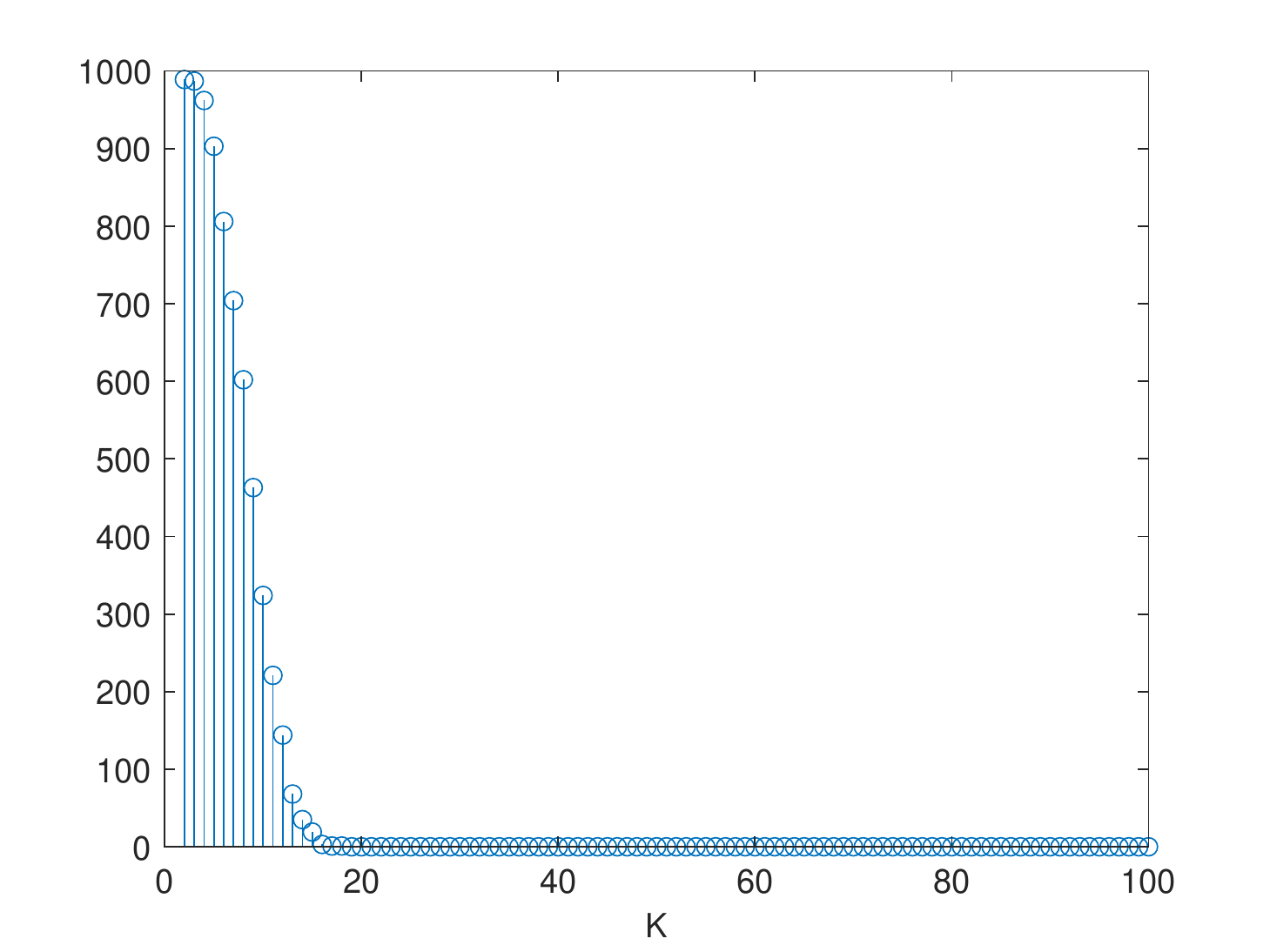}
		\caption{Number of collision-free events}
	\end{subfigure}
	\hfill
	\begin{subfigure}[h]{0.5\columnwidth}
		\centering
		\includegraphics[width=\columnwidth]{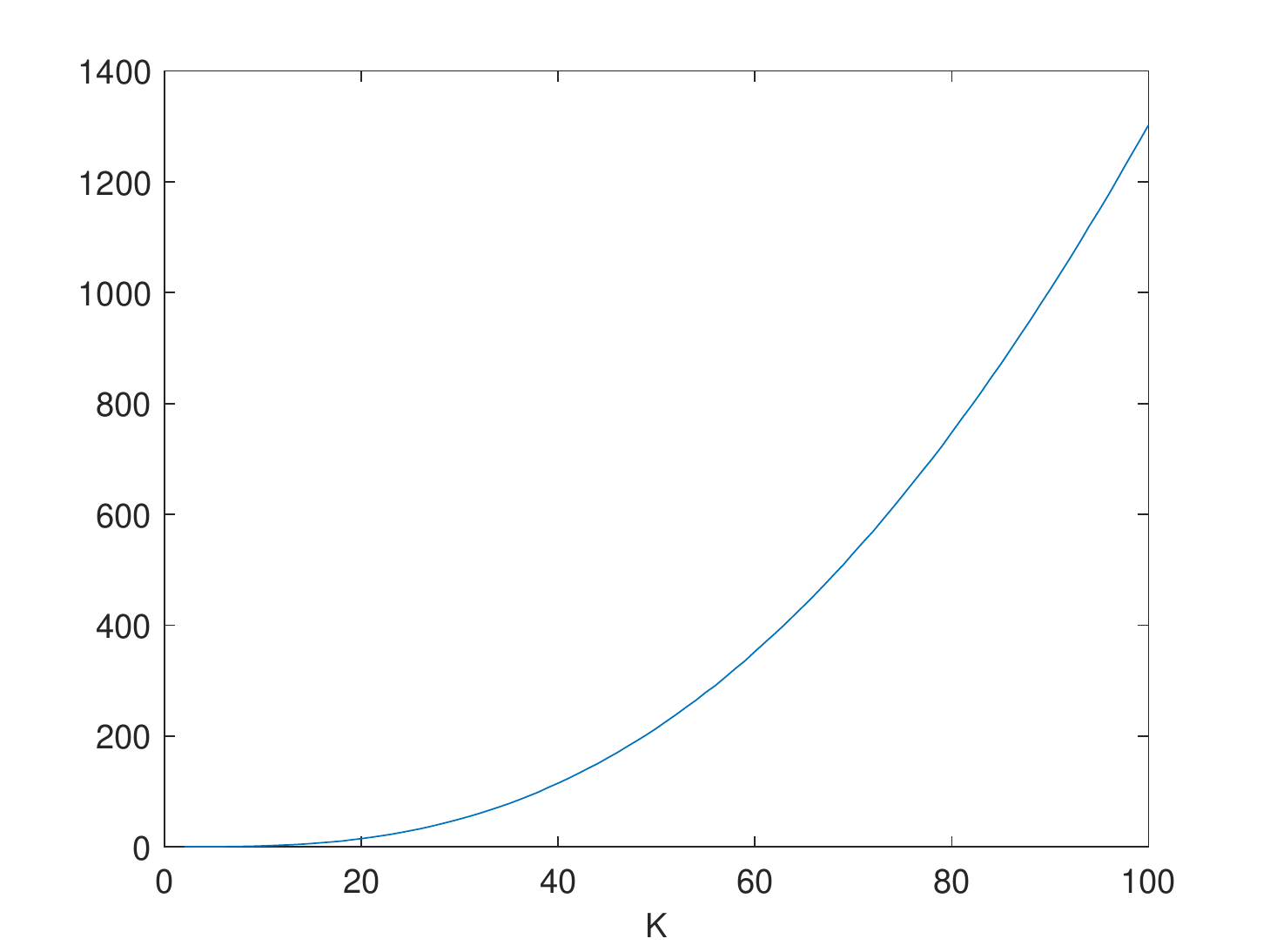}
		\caption{Average number of collisions}
	\end{subfigure}
	\caption{\label{fig:collision_fixed_ratio} For \rev{a} fixed $K$, we generated 1000 K-sparse signals while keeping a fixed ratio  $K/N=0.01$, and counted collisions. The left panel shows the number of collision-free events. The right panel presents the average number of collisions per trial. Clearly, even for fixed $K/N$ collision-free events are rare as $N$ grows.} 
\end{figure}

The crystallographic phase retrieval problem for binary signals depends solely on  difference multi-sets: two binary signals with
sparsity~$K$ have the same periodic auto-correlation if and only if they have the same
difference multi-sets.
Thus, the failure to distinguish non-equivalent binary signals from their
auto-correlation is equivalent to the existence of two non-equivalent
$K$-element subsets of $[0,N-1]$ with the same  difference multi-set.
For example, the subsets of
$[0,7]$, $\{0,1,3, 4\}$ and $\{0,1,2,5\}$ both have cyclic difference
multi-sets $\{0^4, 1^2, 2^1, 3^2, 4^1\}$ but are not equivalent.
Yet, these cases seem to be rare, suggesting that uniqueness for the binary case is ubiquitous. 

\subsection{Impossibility results}
We continue our investigation with some impossibility results.
We start with a simple parameter counting argument.
\begin{prop}
	A necessary condition for solving the
	crystallographic phase retrieval problem~\eqref{eq:problem} for generic signals is that $K = |S| \leq N/2 +1$. 
\end{prop}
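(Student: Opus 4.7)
The plan is a pure parameter count performed with the support fixed. First, I would pick any subset $S \subseteq [0,N-1]$ of size $K$ and let $L_S$ be the $K$-dimensional real vector space of signals supported in $S$. Restricting the auto-correlation to $L_S$ gives the polynomial (degree-$2$) map
$$\phi_S : L_S \to \R^{N/2+1}, \qquad x \mapsto (a_x[0], a_x[1], \ldots, a_x[N/2]),$$
whose codomain already accounts for the reflection symmetry $a_x[\ell] = a_x[N-\ell]$ and therefore encodes all of $a_x$.

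Assuming for contradiction that $K > N/2 + 1$, the central observation is that $\phi_S$ sends a higher-dimensional real vector space into a lower-dimensional one, so it cannot be generically finite-to-one. I would invoke the fiber dimension theorem for polynomial maps: the image $\phi_S(L_S)$ is a constructible set of dimension at most $N/2+1$, hence over a Zariski-open dense subset of $L_S$ the fiber $\phi_S^{-1}(\phi_S(x))$ is a real algebraic variety of dimension at least $K - (N/2+1) \geq 1$. Equivalently, the Jacobian of $\phi_S$ drops rank everywhere on $L_S$.

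The last step is to verify that a positive-dimensional fiber cannot be absorbed into a single orbit of the intrinsic symmetry group $D$. A nonzero cyclic shift sends $L_S$ to $L_{S+\ell}$ and the reflection sends $L_S$ to $L_{-S}$, so only finitely many elements of $D$ preserve the subspace $L_S$ (at worst the stabilizer of $S$ under the dihedral action together with the global sign flip). The $D$-orbit of any fixed $x \in L_S$ therefore meets $L_S$ in a finite set, while $\phi_S^{-1}(\phi_S(x))$ is positive-dimensional, so the fiber contains uncountably many signals non-equivalent to $x$, defeating uniqueness. The only point that requires care, more bookkeeping than obstacle, is confirming that a generic $x \in L_S$ in the measure-theoretic sense is also generic in the algebraic sense needed for the fiber-dimension bound; this holds because the locus where the bound fails is a proper Zariski-closed subvariety of $L_S$ and hence has Lebesgue measure zero.
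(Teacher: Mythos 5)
Your proposal is correct and follows essentially the same route as the paper, which disposes of the statement with a one-line parameter count: the auto-correlation has at most $N/2+1$ independent entries (indeed only $|S-S|$ nonzero ones), so a quadratic map from the $K$-dimensional space $L_S$ into $\R^{N/2+1}$ cannot determine $x$ when $K > N/2+1$. You simply make the informal ``parameter count'' rigorous via the fiber-dimension theorem and add the (correct and worthwhile) check that the finite group $D$ cannot absorb a positive-dimensional fiber.
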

\begin{proof}
	Since $|S-S|$ is the cardinality of the auto-correlation's support, then
	a parameter count implies that signal reconstruction is impossible
	if $|S-S| < |S|$. 
	Since $|S -S| \leq   N/2+1$, a necessary condition for solving the
	sparse phase retrieval problem is that $|S| \leq  N/2 +1$.
\end{proof}

We say that a subset $S \subset [0,N-1]$ is an {\em arithmetic progression
	with difference $d$} if there
exists $d \in [0, N-1]$ such that $S = \{c_0 + \ell d\,|\, \ell = 0,
\ldots |S|-1\} \MOD N$ \footnote{The
	term {\em periodic} has been used in~\cite{jaganathan2017sparse},
	but {\em arithmetic progression} is more consistent with arithmetic
	combinatorics literature~\cite{kemperman1960sumset}, where the term
	periodic is used only when $d$ divides $N$.}. 
Note that because the indices are taken  modulo $N$, we can always
assume that $d\leq N/2+1$.
For example, if $N = 9$ then the subsets
$\{0,2,4,6\}$ and $\{0,3,5,7\}$ are both arithmetic progressions
with $d=2$, where  $c_0 =0$ and $c_0 = 3$, respectively. 
\rev{The} property of being
an arithmetic progression is preserved by the action of the dihedral
group\rev{: if} $S$ is an arithmetic progression with difference $d$ and
$s \in D_{2N}$ is a reflection, then $s\cdot S$ is also arithmetic
progression with difference $d$.

If $S$ is an arithmetic progression, then $S-S = \{0,\overline{d}, \ldots ,
\overline{(K-1)d}\}$, where $\overline{m}$ denotes the equivalence
class of $n$ in $[0, N/2 ]$ under the
equivalence $n \sim -n$. If all of $\{0, \overline{d}, \ldots , \overline{(K-1)d}\}$
are distinct then $|S-S| = K$, but it is possible for $|S-S| < K$.
For example, if $S = \{0,2,4,6\} \subset [0,7]$ then
$S-S = \{0,2,4\}$ because $6 =(- 2) \bmod 8$.

\begin{prop} \label{prop.progression}
	Let $S$ be an arithmetic progression, and let $L_S$ be the vector
	space of signals whose support is $S$.
	Then, for  a generic vector $x\in L_S$ there is no unique solution to the crystallographic phase retrieval problem.
\end{prop}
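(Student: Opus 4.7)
My plan is to exhibit, for a generic $x \in L_S$, a non-equivalent signal $x' \in L_S$ with $a_{x'} = a_x$, by reducing the problem on an arithmetic progression support to the classical aperiodic phase retrieval problem on a length-$K$ sequence, which (as recalled in Section~\ref{sec:literature}) generically admits $2^{K-2}$ non-equivalent solutions. In particular, for $K\geq 3$ this yields at least one non-trivial alternative.

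First I would introduce a change of variable adapted to the progression. Let $g = \gcd(d,N)$ and $N' = N/g$, and define $\varphi : L_S \to \K^{N'}$ by $\varphi(x)[\ell] = x[c_0 + \ell d \MOD N]$ for $\ell \in \{0,\ldots,N'-1\}$. This identifies $L_S$ with the subspace of $\K^{N'}$ of signals supported on the contiguous block $\{0,1,\ldots,K-1\}$. A direct substitution shows that $\varphi$ intertwines the periodic auto-correlations via $a_{\varphi(x)}[m] = a_x[md \MOD N]$ for all $m \in \Z_{N'}$, and $a_x$ vanishes on $\Z_N \setminus d\Z_N$. Consequently $a_x = a_{x'}$ on $\Z_N$ if and only if $a_{\varphi(x)} = a_{\varphi(x')}$ on $\Z_{N'}$.

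In the main case $K \leq N'/2$ the block $\{0,\ldots,K-1\}$ does not wrap around in $\Z_{N'}$, so the periodic auto-correlation of $y = \varphi(x)$ on $\Z_{N'}$ coincides with the aperiodic auto-correlation $b_y$ of $y$ viewed as a length-$K$ sequence. Since $x$ is generic, the endpoints $y[0]$ and $y[K-1]$ are nonzero, so the $z$-transform $P_y(z) = \sum_{\ell=0}^{K-1} y[\ell] z^\ell$ has degree $K-1$; flipping any subset of its roots $z_i \mapsto 1/\overline{z_i}$ yields $2^{K-2}$ non-equivalent length-$K$ signals $y'$, still supported on $\{0,\ldots,K-1\}$ (since root flipping preserves the degree) and sharing the same $b_y$. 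Setting $x' = \varphi^{-1}(y')$ produces candidate alternative solutions in $L_S$.

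The final and most delicate step is to verify that such $x,x'$ remain non-equivalent under the full intrinsic symmetry group $D$ acting on $\K^N$. Since $g \cdot x'$ has support $g \cdot S$, any $g \in D$ that could identify $x'$ with $x$ must stabilize $S$ as a subset of $\Z_N$. When $K < N'$, a short computation shows that the stabilizer of $S$ in $D$ consists only of phase/sign changes and the single reflection $n \mapsto 2c_0 + (K-1)d - n$, which descends through $\varphi$ to the aperiodic reversal $\ell \mapsto K-1-\ell$; these are precisely the aperiodic symmetries of length-$K$ signals. Thus the $2^{K-2}$ aperiodically non-equivalent $y'$ lift to periodically non-equivalent $x'$. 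The main obstacle is carrying out this stabilizer computation and handling the edge case $K = N'$ (when $S$ is a full coset of the subgroup $d\Z_N$, the reduction yields the full periodic phase retrieval problem on $\Z_{N'}$ whose non-uniqueness follows instead from a direct parameter count), while invoking genericity to rule out accidental collapses among the flipped solutions.
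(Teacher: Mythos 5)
Your proposal follows the same route as the paper's own proof: compress the signal supported on the arithmetic progression to a length-$K$ vector whose aperiodic auto-correlation reproduces the periodic auto-correlation of $x$, and then invoke the generic $2^{K-2}$-fold ambiguity of aperiodic phase retrieval. Your version is more careful than the paper's terse argument|it makes the compression precise via $\gcd(d,N)$, constructs the alternatives explicitly by root flipping, and checks via the stabilizer of $S$ in $D_{2N}$ that these lift to signals non-equivalent under the full intrinsic symmetry group, a verification the paper leaves implicit|but the underlying reduction is identical.
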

\begin{proof}
	Applying a shift we can assume $0 \in S$
	so $S = \{0, d, 2d,\ldots d(K-1)\}$
	and $S-S$ is the set
	$\{0, \overline{d}, \overline{2d}, \ldots \overline{(K-1)d}\}$. 
        To simplify notation we assume that all of
        the $\overline{d \ell}$ are distinct so that $|S-S| = K$.
	In this case, if $x \in L_S$ 
	then the non-zero entries of the periodic auto-correlation of 
	$$x=(x[0], 0, \ldots , x[d], 0, \ldots , x[2d], 0 \ldots , 0,
	x[d(K-1)], 0 \ldots , 0),$$ are the same as the entries of the  aperiodic
	auto-correlation of the vector $$x'=(x[0], x[d], \ldots ,
	x[d(K-1)]) \in \K^K.$$ Precisely, we have
	\begin{equation} \label{eq.arith}
	a_x[\overline{\ell d}] = x[0]\overline{x[\ell d]} + x[d]
	\overline{x[(\ell+1)d]} + x[K-1-\ell d] \overline{x[(K-1)d]},
	\end{equation}
        where the $\overline{x[m d]}$ on the right-hand side indicates
        the complex conjugate, and the notation
        $a_x[\overline{\ell d}]$ indicates the entry indexd by
        the integer $\overline{\ell d} \in [0,  N/2 ]$.
	The right-hand side of~\eqref{eq.arith} is exactly the $\ell$-th
	entry in the aperiodic auto-correlation of the vector
	$x' = (x[0], x[d], \ldots , x[(K-1)d]) \in \K^K$, which does not determines a \rev{generic} signal uniquely~\cite{Bendory2017}.
\end{proof}
\rev{
\begin{remark}
  In our model the basic signal $x$ is periodically repeated to represent the crystal structure. If the support $S$ of $x$ is an arithmetic progression then the basic signal is itself periodic. In this case, Proposition~\ref{prop.progression}
  says that we cannot solve the phase retrieval problem for $x$. The reason
is that if we replace  $x$ by the signal $x_p$ representing one period of $x$ in $S$, then $x$ and $x_p$ have the same periodic repetition. 
However, the support of $x_p$ may no longer be sparse as occurs in Example \ref{ex.arith} below.
For this reason we cannot expect to be able recover $x_p$, or equivalently $x$,
  from its Fourier magnitude without further information. In practice we do not expect this situation to occur.
%\dan{Let me know what you think of the current version of the remark.}
%  We remind the reader that the signal of interest in this paper is the repeated motif $x$, as explained in Section~\ref{sec:crystallography}, and not the periodic crystal $x_c$~\eqref{eq:repeated_motif}. While the latter is an arithmetic progression by definition (the period is determined by the dimensions of the compactly supported repeated unit), the former (which is the sought signal) is generally not. \tamir{I find this remark a bit foolish, but I think that is what the second referee is asking for. You may try to better articulate  it. }
\end{remark}	
}

\begin{example} \label{ex.arith}
  Suppose that $S = \{0,2,4,6\} \subset [0,8]$,
  so $S$ is an arithmetic progression with $d =2$.
  The difference set is $\{\overline{0}, \overline{2}, \overline{4},\overline{6}\} = \{0,2,4,3\}$ since $6 = -3 \in \Z_9$. 
Then, the non-zero entries
	of the periodic auto-correlation $a_x$ are
	\begin{align*}
	a_x[0] &= |x[0]|^2+ |x[2]|^2 + |x[4]|^2 + |x[6]|^2 \\
	a_x[2]&= x[0] \overline{x[2]} +
	x[2] \overline{x[4]} + x[4]\overline{x[6]} \\
	a_x[3] &= x[6]\overline{x[0]} \\
	a_x[4] &= x[0]\overline{x[4]} + x[2]\overline{x[6]}.
	\end{align*}
	If we let $x' = (x[0], x[2],x[4],x[6]) \in \K^4$ and
	denote by $b_{x'}$ the aperiodic auto-correlation~\eqref{eq:non-periodic_ac}, 
	then $b_{x'}[0] = a_x[0]$, $b_{x'}[1] = a_x[2]$, $b_{x'}[2]=a_{x}[4]$,
        $b_{x'}[3] = a_x[3= \overline{6}]$.
\end{example}

\subsection{The main conjecture}
\subsubsection{Terminology from algebraic geometry}
We recall some terminology from algebraic geometry; for more detail see \cite{conca2015algebraic,edidin2017projections} and the references therein.

Let $\K$  denote either $\R$ or $\C$. Given polynomials
$f_1, \ldots , f_r \in \K[x_0, \ldots , x_{N-1}]$, let us define the set
$$Z(f_1, \ldots , f_r) = \left\{(a_0, \ldots a_{N-1})\,|\, f_i(a_0, \ldots , a_{N-1}) = 0, \quad {\text{for all}} \quad 
i =1, \ldots r\right\}.$$ A set of the form $Z(f_1, \ldots , f_r)$ is
called an {\em algebraic set}. The {\em Zariski topology} on $\K^N$ is the topology formed \rev{by defining open sets} to be the complements of algebraic sets.
Note that a Zariski closed
set is also closed in the Euclidean topology. The complement of an algebraic set
is called a {\em Zariski open} set. Every proper algebraic set in $\K^N$ has dimension
strictly smaller than $N$ and every non-empty Zariski open set $U$ is dense in both
the Zariski topology and the Euclidean topology. If $U$ is a non-empty
Zariski open set then $\K^N \setminus U$ has dimension strictly less than $N$
and therefore has Lesbegue measure 0.

\subsubsection{Statement of the main conjecture}
Recall that we denote by $L_S$ the subspace
of~$\K^N$ consisting of vectors whose support is contained in $S\subset [0,N-1]$. 
The following formulates Conjecture~\ref{conj.main.descriptive}|the main conjecture of this work|in more technical terms. 
\rev{Specifically, it states that if  the condition $|S-S| > |S|$ is met, then the $D$-orbit of a generic signal with support contained in $S$ is determined from its Fourier magnitude.}

  %K-sparse signal with $K\leq N/2$ is determined uniquely from its Fourier magnitude.}
\begin{conj} \label{conj.main.technical}
	Let $S$ be a subset of $[0, N-1]$ such that $|S-S| > |S|$, and let $x \in L_S$ be a generic signal. Then:
	\begin{itemize}
		\item if $a_x = a_{x'}$, then
		$x'$ is obtained from $x$ by an action of the group $D$ of intrinsic symmetries described in Proposition~\ref{prop:trivial_symmetries}; or equivalently,
		\item the Fourier magnitude mapping $x\mapsto |\hat{x}|$ is injective, modulo intrinsic symmetries.
	\end{itemize} 
\end{conj}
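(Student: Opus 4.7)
The plan is to reduce Conjecture~\ref{conj.main.technical} to a finite collection of algebraic-geometric dimension statements indexed by candidate support sets, each of which is in principle verifiable by a Jacobian-rank or Gr\"obner-basis computation for any fixed $(N,K)$. First, I would observe that for generic $x \in L_S$, each entry $a_x[d]$ with $d \in S-S$ is a non-vanishing polynomial in the coordinates of $L_S$, so the support of $a_x$ equals $S-S$ on a Zariski-open subset of $L_S$. Consequently any $K$-sparse $x'$ with $a_{x'}=a_x$ must have support $S'$ satisfying $S'-S' \supseteq S-S$ and $|S'|\leq |S|$; the set $\mathcal{C}$ of such candidate supports is finite.

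Next, I would partition $\mathcal{C}$ into $D_{2N}$-orbits. The orbit of $S$ contributes the dihedral images of $x$, and within each $L_{g\cdot S}$ the remaining sign (or $S^1$) symmetry acts. The conjecture then splits into two claims. First, for $S'=S$, the generic fiber of the map $\phi_S \colon L_S \to \K^{|S-S|}$, $\phi_S(x)=(a_x[d])_{d\in S-S}$, consists exactly of the $\Z_2$- (resp.\ $S^1$-) orbit of $x$. By semicontinuity of rank and the hypothesis $|S-S|>|S|$, the differential $d\phi_S$ attains rank $|S|$ at a generic point (resp.\ rank $|S|-1$ modulo the phase direction in the complex case), so the generic fiber is finite (resp.\ one real-dimensional); one then has to verify that no ``spurious'' fiber points appear outside the symmetry orbit, which reduces to a symbolic check that the polynomial ideal $(a_x-a_{x'})$ in $L_S \times L_S$ has the expected dimension. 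Second, for each $S'$ in a distinct $D_{2N}$-orbit, the incidence variety $V_{S,S'}:=\{(x,x')\in L_S\times L_{S'} \colon a_x=a_{x'}\}$ must project to a proper Zariski-closed subset of $L_S$; equivalently, the polynomial system $\{a_x[d]-a_{x'}[d]\}_{d\in[0,N/2]}$ imposes more than $|S'|$ independent conditions on $L_{S'}$ for generic $x$, a rank condition that can be certified symbolically at a generic point.

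The main obstacle is the second claim. For each fixed $(N,K)$ the number of non-equivalent candidate supports $S'$ grows combinatorially, and for each the rank computation is a non-trivial symbolic calculation on a polynomial system whose size scales with $N$. Moreover, there is no obvious structural reason that rules out cross-orbit collisions uniformly in $(N,K)$: such an argument would presumably require a fine analysis of the syzygies among $\{a_x[d]\}_{d\in S-S}$ combined with combinatorial properties of the dihedral action on $K$-subsets. This is why I expect a fully general proof to remain out of reach, and that, in keeping with the paper's framing, the best one can currently offer is a per-pair computational certificate that discharges both dimension claims for each $(N,K)$ of interest.
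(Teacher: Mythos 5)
Your proposal mirrors the paper's own treatment: the conjecture is not proven in general but is reduced to exactly the same two sub-statements (support recovery across non-equivalent candidate supports, and signal recovery within a fixed support), each certified for a given $(N,K)$ by bounding the dimension of the incidence varieties $I_{S,S'}\subset L_S\times L_{S'}$ and $I_S\subset L_S\times L_S$ — the paper does this via Hilbert polynomials in Macaulay2, and your Jacobian-rank/Gr\"obner alternative is an equivalent certificate. Your closing assessment that no uniform structural argument is currently available and that only per-pair computational verification is feasible is precisely the paper's position.
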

By generic signals, we mean that there is a non-empty Zariski open set $U_S \subset L_S = \K^{|S|}$
such that Fourier magnitude mapping $x \mapsto |\hat{x}|$ is injective modulo
%trivial
\rev{intrinsic} symmetries at all points $x \in U_S$.

Although we cannot prove Conjecture~\ref{conj.main.technical}, we provide a computational
method to check the results for any given $K$ and $N$. There
are two aspects to verifying the conjecture: \rev{recovering the support of the signal, and signal
recovery for a given support. 
Each of these aspects requires a different computational verification test. Consequently, we treat them separately and formulate independent conjectures.} 
 We now elaborate about both. 

\subsubsection{Support recovery} \label{subsec.support_recovery}
The support of the periodic auto-correlation is
the set $S-S$.
If $S'$ is another $K$-element subset such that $S'-S' \neq S-S$, then the auto-correlation of a generic vector $x'$ in $L_{S'}$ has a different support than the auto-correlation of a generic vector $x$ in $L_S$. Thus, in order to investigate  recovery of generic $K$-sparse signals in $L_S$, we \rev{only need to} consider $K$-sparse subsets $S'$ with \rev{the same difference sets} $S' - S' = S-S$ as subsets of $[0, N/2]$. 

We denote by $a(L_S)$
the image of the subspace $L_S$ under the auto-correlation
map $\K^N \to \K^{N/2+1}$ for either $\K=\R$ or $\K=\C$. 
\rev{The following conjecture states that if $|S-S|>|S|$, then for generic signals the support set $S$ is determined, up to dihedral equivalence, from the Fourier magnitude.} 
\rev{
\begin{conj} \label{conj.support_recovery}
  Suppose that $S$ and $S'$ are two non-equivalent 
  $K$-element subsets of $[0,N-1]$ (i.e.,
  $S'$ is not in the orbit of $S$ under the action of the dihedral group)
	with $|S-S| =|S' -S'| > K$. Then, for generic $x \in L_S$, $a(x)$ is not in $a(L_{S'})$. Namely,  the support of $x$ is determined, up to dihedral equivalence, by the periodic auto-correlation of $x$.  
\end{conj}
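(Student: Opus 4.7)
My plan is to recast Conjecture~\ref{conj.support_recovery} as a statement about the dimension of the incidence variety
\[V_{S,S'} := \bigl\{(x,y) \in L_S \times L_{S'} : a_x[\ell] = a_y[\ell] \text{ for all } \ell \in [0,N/2]\bigr\}\]
inside $\K^{2K}$. The fiber of the first projection $\pi_1 \colon V_{S,S'} \to L_S$ over a point $x$ is exactly $\{y \in L_{S'} : a_y = a_x\}$, so the conjecture is equivalent to asserting that $\pi_1$ is \emph{not} dominant. By Chevalley's theorem the image $\pi_1(V_{S,S'})$ is constructible, and if $\dim V_{S,S'} < K = \dim L_S$, then its closure is a proper Zariski-closed subset of $L_S$, whose complement provides the desired Zariski open set $U_S$ on which $a(x) \notin a(L_{S'})$.

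The variety $V_{S,S'}$ is cut out in $\K^{2K}$ by the $|S-S|$ quadratic polynomials $a_x[\ell] - a_y[\ell]$, where $\ell$ ranges over the common difference set. Hence the expected dimension is
\[2K - |S-S| \;<\; 2K - K \;=\; K,\]
using the hypothesis $|S-S| > K$. The task is then to rule out higher-dimensional components. First I would identify the ``obvious'' components of $V_{S,S'}$ scheme-theoretically: the loci where $x$ or $y$ vanishes on a coordinate hyperplane, or where the supports collapse onto $S \cap S'$, and verify directly that these all have dimension strictly less than $K$. Away from these special loci I would attempt a Bertini-style generic-smoothness argument to conclude that the $|S-S|$ quadratic equations meet transversely, so that $V_{S,S'}$ has exactly the expected codimension in $L_S \times L_{S'}$.

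For the computational verification step paralleling the methodology of the paper, the plan is to sample a generic rational point $x_0 \in L_S$, substitute it into the system $\{a_y[\ell] - a_{x_0}[\ell] = 0\}_{\ell \in S'-S'}$ to obtain $|S-S|$ polynomial equations in the $K$ unknowns $y_{s'}$, and run a Gr\"obner basis computation over $\mathbb Q$. Infeasibility of this system, certified by the reduced Gr\"obner basis containing a unit, confirms that the fiber over $x_0$ is empty and hence that $\pi_1$ is not dominant. One then enumerates pairs of non-equivalent $K$-subsets sharing a common difference set (after quotienting by the $D_{2N}$-action to reduce to orbit representatives) and performs this test for each pair, thereby verifying the conjecture for a given $(K,N)$.

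The principal obstacle I anticipate is that for special pairs $(S,S')$ one may encounter ``accidental'' higher-dimensional components of $V_{S,S'}$ that are not accounted for by any obvious geometric incidence. It is a classical observation that non-equivalent subsets of $\Z_N$ can share the same cyclic difference \emph{multi-set}; although the hypothesis here involves only the difference \emph{set}, it is not a priori clear that the weaker condition precludes all such pathologies. A fully uniform proof therefore seems to demand a structural classification of pairs of $K$-subsets with coincident difference sets that are non-equivalent under the dihedral action, which appears to lie outside the scope of currently available techniques; this is precisely why the conjecture is complemented by a computational certification procedure rather than a closed-form theorem.
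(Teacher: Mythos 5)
You have correctly identified the paper's reduction: the statement is a conjecture, and the paper's "proof" content is precisely the verification scheme you describe in your first two paragraphs --- form the incidence variety $I_{S,S'}=\{(x,x')\in L_S\times L_{S'} : a_x=a_{x'}\}$, observe that $\dim I_{S,S'}<K$ forces the projection to $L_S$ to have non-dense image, and note that the expected codimension $|S-S|>K$ makes this plausible. Up to that point your proposal and the paper coincide, and your acknowledgment that a uniform proof would require controlling excess-dimensional components (which neither you nor the paper supplies) matches the paper's stance in presenting this as a conjecture verified case by case.

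The genuine gap is in your computational certificate. You propose to specialize to a single random rational point $x_0\in L_S$, check by a Gr\"obner basis computation that the fiber $\{y\in L_{S'}: a_y=a_{x_0}\}$ is empty, and conclude that $\pi_1$ is "not dominant." That inference is invalid: an empty fiber over one point only shows $x_0\notin \pi_1(I_{S,S'})$, and a dominant map can miss any particular point (its image is merely constructible and contains a dense open set, e.g.\ $(x,y)\mapsto(x,xy)$ on $\K^2$ is dominant yet misses $(0,1)$). To turn your test into a proof you would need a guarantee that $x_0$ avoids the proper closed locus of empty fibers, which is exactly the genericity you cannot certify for a concrete sample; at best this is a probability-one heuristic. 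The paper avoids this by computing the Hilbert polynomial of the ideal of the \emph{full} incidence $I_{S,S'}$ in all $2K$ variables and reading off $\dim I_{S,S'}$ from its degree: that is a deterministic, exact certificate that the image has dimension $<K$, hence is non-dense, and it is the step you would need to substitute for your random-point test. (A minor additional point: when $S-S\neq S'-S'$ as subsets of $[0,N/2]$ the conclusion is automatic from the supports of the autocorrelations, so, as the paper notes, only pairs with $S-S=S'-S'$ need the incidence computation; your phrase "the common difference set" implicitly assumes this reduction.)
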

}

%\begin{conj} \label{conj.support_recovery}
%	Suppose that $S$ and $S'$ are two non-equivalent (i.e.,
%	$S'$ is not in the orbit of $S$ under the action of the dihedral group)
%	$K$-element subsets of $[0,N-1]$
%	with $|S-S| =|S' -S'| > K$. Then, for generic $x \in L_S$ the support of $x$ is uniquely determined by the periodic auto-correlation of $x$. Namely,  $a(x)$ is not in $a(L_{S'})$.  
%\end{conj}

\paragraph{Verifying Conjecture~\ref{conj.support_recovery} computationally} 
For a specific pair $S,S'$, there is a method to verify Conjecture \ref{conj.support_recovery} as follows.
Consider the incidence subvariety $I_{S,S'}$ of $L_S \times L_{S'}$
consisting of pairs $$\{(x,x')\,| \, a(x) = a(x'), x \in L_S, x'\in L_{S'}\}.$$
The projection $\pi_S \colon I_{S,S'} \to L_S$ 
is the set of $x \in L_S$ such that there exists $x' \in L_{S'}$
with $a(x) = a(x')$.  
To prove the conjecture, it suffices to prove
that $\pi_S(I_{S,S'})$ is not dense; this in turn implies that for a generic signal  if $a(x)=a(x')$ then $S=S'$.
For this statement, it is sufficient
to show that $\dim I_{S,S'} < |S|$. The reason
this is sufficient is that if $\dim I_{S,S'} < |S| = \dim L_S$ 
then $\dim \pi_S(I_{S,S'}) < \dim L_S$ as well, which means that the complement
$L_S \setminus \pi_S(I_{S,S'})$ is dense in the Zariski topology. Now,
if $x \in L_S \setminus \pi_S(I_{S,S'})$ then by definition there is no
$x' \in L_{S'}$ such that $a_x = a_{x'}$.
Since a finite intersection of Zariski dense subsets is Zariski
dense, we see that if $x$ is in the Zarski dense set $L_S \setminus
\bigcup_{\{S' | S'-S' = S-S\}}\pi_S(I_{S,S'})$ (namely, we consider all possible, finitely many, relevant cyclic difference sets) then there is no
non-equivalent subset $S'$ and vector $x' \in L_{S'}$ such that $a_x =
a_{x'}$. In other words, for generic $x$ in $L_S$ the equivalence class
of the subset $S$ is determined by the auto-correlation $a_x$.

As mentioned above, it suffices to check whether $\dim I_{S,S'} < |S|$.
When $\K = \R$ (the main interest of this paper) \rev{and $K$ is small}, the dimension of the variety $I_{S,S'}$ can be
computed \rev{relatively quickly}  using a computer algebra system to
compute the Hilbert polynomial of the ideal\footnote{Recall that the ideal generated by a set of polynomials is all polynomial combinations of its generators $f_1,\ldots,f_n,$: $I = \left\{\sum_{i=1}^n c_if_i \text{ for } c_i\in\K[x_1, \ldots , x_n]\right\}$.} defining $I_{S,S'}$ in the
polynomial ring $\K[\{x_i\}_{i \in S}, \{x'_j\}_{j \in S'}]$. \rev{(See Section \ref{sec.complexity}
for a discussion on the computational complexity of computing Hilbert polynomials.)}
The degree of the Hilbert polynomial is the dimension of the variety, so
a sufficient condition for the conjecture to hold is if
the degree of the Hilbert polynomial is less
than $|S|$. More technical details are provided in Appendix~\ref{sec:hilbert}.
\begin{example} \label{example.support_recovery.appendix}
	We give an explicit example to illustrate the methods used to generate the
	data presented in Example~\ref{example.support_recovery} below.
	Let $S = \{0,1,2,4\}$ and $S' = \{0,1,2,5\}$ be subsets
	of $[0,7]$. If $x = (x_0, x_1, x_2, 0, x_4, 0, 0, 0) \in L_S$
	and $y = (y_0, y_1, y_2, 0, 0, y_5, 0, 0) \in L_{S'}$ then
	$$a_x = (x_0^2 + x_1^2  + x_2^2 + x_4^2, x_0x_1 + x_1x_2, x_0x_2 + x_2 x_4, x_1x_4, x_4x_0),$$ and
	$$a_y  = (y_0^2+ y_1^2 + y_2^2 + y_5^2, y_0y_1+ y_1y_2 , y_0y_2,y_2y_5 + y_5y_0, y_1y_5).$$
	Hence $a_x = a_y$ if and only if the following five equations are
	satisfied:
	\begin{equation} \label{eq.incidence_eqs}
          \begin{array}{lcc}
	x_0^2 + x_1^2  + x_2^2 + x_4^2- y_0^2- y_1^2 - y_2^2 - y_5^2 & = & 0\\
	x_0x_1 + x_1x_2 -  y_0y_1- y_1y_2 & = & 0\\
	x_0x_2 + x_2 x_4 - y_0y_2 & = &0\\
	x_1 x_4 - y_2 y_5 - y_5 y_0 & = & 0\\
	x_0x_4 - y_1 y_5 & = & 0
	  \end{array}
          \end{equation}
	Thus, the incidence $I_{S,S'} = \{(x,y)\,|\, a_x = a_y \} \subset L_S \times L_{S'}$ is the algebraic subset of $\K^4 \times \K^4$ defined by the set of equations
	\eqref{eq.incidence_eqs}. Therefore, the generators of the ideal of	$I_{S,S'}$ are the five polynomials in the left-hand side of~\eqref{eq.incidence_eqs} included 
%	 is 
%	\begin{equation*}
%	(x_0^2 + x_1^2  + x_2^2 + x_4^2- y_0^2- y_1^2 - y_2^2 - y_5^2,
%	x_0x_1 + x_1x_2 -  y_0y_1- y_1y_2, x_0x_2 + x_2 x_4 - y_0y_2,x_1 x_4 - y_2 y_5 - y_5 y_0, x_0x_4 - y_1 y_5)
%	\end{equation*}
        in $\R[x_0,x_1,x_2,x_4, y_0,y_1,y_2, y_5]$.
        %\tamir{Made some changes. Agree?}%
        %Ok%
	Using Macaulay2~\cite{M2}, we calculated the Hilbert polynomial of this ideal to
	be $32P_2-80P_1 + 80P_0$, which means that $I_{S,S'}$
	is a 3-dimensional algebraic subset of $\K^4 \times \K^4$
	and therefore its image under $\pi_S$ to $\K^4$ has dimension
	at most 3.
\end{example}

\begin{example} \label{example.support_recovery}
	We consider the case where $K =4$ and $N=8,9$.
	As presetned in Table~\ref{tab:1}, when $K =4, N=8$, there are 8 equivalence classes of $4$-sparse subsets of which only $4$ satisfy $|S-S| > 4$. 	For the 6 pairs of subsets $S,S'$ with $S-S =S'-S' = \{0,1,2,3,4\}$, we have
	verified using Macaulay2~\cite{M2} that $\dim I_{S,S'} = 3$; 
	this is the desired result since $|S-S| =5$ means that we expect to impose 5 constraints on the
	8-dimensional space $L_S \times L_{S'}$.
	Hence, the support of \rev{a} generic vector $x \in L_S$ can be recovered
	from its periodic auto-correlation.
	However, this is not the case if
	$|S-S| = 4$. For example, if $S = \{0,1,2,3\}$ and $S' = \{0,1,3,6\}$
	then $\dim I_{S,S'} = 4$. 
	
\begin{table}[h]
	\begin{center}
		\caption{Verification of Conjecture~\ref{conj.support_recovery} for $N=8, K=4$}
		\label{tab:1}
	\begin{tabular}{c|c|c}
		$S$ & $S-S$ & $|S-S|$\\
		\hline
		\{0,1,2,3\} & \{0,1,2,3\}  & 4 \\
		\{0,1,2,4\} & \{0,1,2,3,4\} & 5 \\
		\{0,1,2,5\} & \{0,1,2,3,4\} & 5\\
		\{0,1,3,4\} & \{0,1,2,3,4\} & 5\\
		\{0,1,3,5\} & \{0,1,2,3,4\} & 5\\
		\{0,1,3,6\} & \{0,1,2,3\} & 4\\
		\{0,1,4,5\} &  \{0,1,3,4\} & 4\\
		\{0,2,4,6\} & \{0,2,4\}& 3\\
		\hline
	\end{tabular}
\end{center}
\end{table}

	Recall that when $S-S$ and $S'-S'$ differ as multi-sets, we can recover the support of a binary signal from its periodic
	auto-correlation (and thus, obviously, also the binary signal itself). Interestingly, the non-equivalent subsets $\{0,1,3,4\}$
	and $\{0,1,2,5\}$ have the same difference multi-sets so we cannot
	distinguish their supports from binary signals but we can from generic signals.

	When $N =9$ and $K=4$, there are 10 equivalence classes
	of subsets, which are presented in Table~\ref{tab:2}. In this case, all incidences $I_{S,S'}$ are 3-dimensional, so
	the support of \rev{a} generic vector can be determined from its auto-correlation even if $|S-S| = 4$ because 
		for each distinct pair $S,S'$ $|(S-S) \cup (S'-S')| \geq 5$
                which means that we obtain at least five constraints
                  on the entries of a pair $(x,y) \in I_{S,S'}$.
                  For example if $S = \{0,1,2,3,\}$ and
                  $S' = \{0,2,4,6\}$ and $x = (x_0, x_1, x_2, x_3,0,0,0,0,0)\in L_{S}$ and $y = (y_0, 0, y_2, 0, y_4, 0, y_6, 0,0) \in L_{S'}$
                  then $a_x = a_y$ if and only if the following
                  5 equations are satisfied
	\begin{equation} %\label{eq.incidence_eqs}
          \begin{array}{lcc}
	x_0^2 + x_1^2  + x_2^2 + x_3^2- y_0^2- y_2^2 - y_4^2 - y_6^2 & = & 0\\
	x_0x_1 + x_1x_2 + x_2 x_3  & = & 0\\
	x_0x_2 + x_1 x_3 - y_0y_2 -y_2y_4 - y_4 y_6 & = &0\\
	x_0 x_3 - y_6 y_0 & = & 0\\
        y_0 y_4 + y_2y_6  & = & 0
	  \end{array}
          \end{equation}
		Namely, in this specific example we do not demand $|S-S|>K.$
		
	\begin{table}[h]
	\begin{center}
		\caption{Verification of Conjecture~\ref{conj.support_recovery} for $N=9, K=4$}
\label{tab:2}
		\begin{tabular}{c|c|c}
			$S$ & $S-S$ & $|S-S|$\\ 
			\hline
			\{0,1,2,3\} & \{0,1,2,3\} & 4\\
			\{0,1,2,4\} & \{0,1,2,3,4\}& 5\\
			\{0,1,2,5\} & \{0,1,2,3,4\} & 5\\
			\{0,1,3,4\} &  \{0,1,2,3,4\}& 5\\
			\{0,1,3,5\} & \{0,1,2,3,4\}& 5\\
			\{0,1,3,6\} & \{0,1,2,3,4\}& 5\\
			\{0,1,3,7\} & \{0,1,2,3,4\}& 5\\
			\{0,1,4,5\} & \{0,1,3,4\}& 4\\
			\{0,1,4,6\} & \{0,1,2,3,4\}& 5\\
			\{0,2,4,6\} & \{0,2,3,4\}& 4\\
			\hline
		\end{tabular}
	\end{center}
	\end{table}

\end{example}

\subsubsection{Generic signal recovery given knowledge of the support}\label{subsec.vector_recovery}
The difficulty of reconstructing a signal given its support depends on the structure of $S-S$.
We first consider two simple cases, and then move forward to the general case. 

The easiest case is the collision-free case. This means that no non-zero entry in the cyclic difference set $S-S$ appears with multiplicity greater than one. In this case there is a relatively easy eigenvalue argument to determine the
entries of $x$ from its auto-correlation~\cite{ranieri2013phase}.
Note, however, that collision-free subsets
appear to be quite rare unless $K$ is very small compared to $N$, as demonstrated in Figure~\ref{fig:collision}. 
Even if we keep the ratio $K/N$ fixed, then \rev{collision-free events} are rare as $N$ grows. 
Figure~\ref{fig:collision_fixed_ratio} exemplifies it for $K/N=0.01$, which is the expected sparsity level in proteins.

The other easy case is when the difference set $S$ can be concentrated
in the interval $[0, N/2]$ after applying reflection and
translation (i.e., an element of the dihedral group). In this case, the periodic auto-correlation of $x \in L_S$ is
the same as the non-periodic auto-correlation of $x$ viewed as a
vector in $\R^{N/2+1}$.  For example, if $N =8$ then the support set
$\{0, 5,7\}$ can be moved to $\{0,1,3\}$ by reflection and this set
is concentrated.
For concentrated sets uniqueness of recovery depends on whether the
support forms an arithmetic progression as discussed in~\cite{jaganathan2012recovery}.% (see also Section~\ref{sec:literature}). %This is also the idea behind padding a signal of
%length $N$ with $N$ additional zeroes; see Section.

Next, we consider the general case.
Given a subset $S \subset [0,N-1]$, we let $D_S$ be the subgroup
of the group of intrinsic symmetries $D$ that leaves $L_S$ invariant. 
We refer to this group as the \emph{group of  intrinsic symmetries of $S$}.  
For a typical $S$, $D_S = \pm 1$ if $\K = \R$ and $D_S = S^1$ if $\K = \C$ (recall that these subgroups do not affect the support).
However, there are subsets $S$ for which $D_S$ is a bigger group. %, that is,  $|D_S|\geq 2$. 
For example, the subset $\{0,1,3,4\}$ of $[0,7]$ is preserved by the subgroup of $D_{16}$ consists of two elements: the identity
and reflection followed by four shifts.
Thus, $D_S = \pm 1 \times \Z_2$ or $D_S = S^1 \times
\Z_2$, depending on whether $\K$ is real or complex.

\rev{The following conjecture states that if the support of the signal is known, the Fourier magnitude determines a generic signal uniquely, up to an element of $D_S$.}
\begin{conj} \label{conj.vector_recovery}
	Suppose that $|S - S| > |S|$.  If $x \in L_S$ is a generic
	vector and $x' \in L_S$ is another vector (in the same subspace) such that $a(x) = a(x')$, then
	$x' = g \cdot x$ for some $g \in D_S$.
\end{conj}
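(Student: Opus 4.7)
The plan is to mirror the computational strategy proposed for the support-recovery conjecture, adapting it to the setting where both signals share the same support $S$. Define the incidence variety
\[
I_S = \{(x, x') \in L_S \times L_S : a_x = a_{x'}\} \subset L_S \times L_S,
\]
cut out by the $|S - S|$ polynomial equations $a_x[\ell] - a_{x'}[\ell] = 0$ for $\ell \in S - S$. Inside $I_S$ automatically sits the \emph{trivial locus}
\[
T_S = \bigcup_{g \in D_S} \{(x, g \cdot x) : x \in L_S\},
\]
which has dimension $|S| + \dim D_S$ (equal to $|S|$ when $\K = \R$ and $D_S$ is finite, and $|S| + 1$ when $\K = \C$ and $D_S$ contains the circle of global phases). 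The conjecture is equivalent to the statement that $T_S$ is the only part of $I_S$ that dominates $L_S$ under the first projection $\pi_1 : I_S \to L_S$; that is, every irreducible component of $I_S$ not contained in $T_S$ projects to a proper subvariety of $L_S$.

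Concretely, for each specific pair $(N, S)$ I would proceed in two steps using a computer algebra system such as Macaulay2. First, compute the Hilbert polynomial of the defining ideal $J_S = \langle a_x[\ell] - a_{x'}[\ell] \rangle_{\ell \in S - S} \subset \K[\{x_i\}_{i \in S}, \{x'_j\}_{j \in S}]$ to obtain $\dim I_S$. We expect (and must verify) $\dim I_S = |S| + \dim D_S$; the hypothesis $|S - S| > |S|$ is precisely what makes this equality plausible, since the $|S - S|$ constraints should cut the $2|S|$-dimensional ambient space down to the same dimension as $T_S$. Second, and crucially, compute the saturation $(J_S : I_{T_S}^\infty)$, or equivalently a primary decomposition of $J_S$, to isolate the ideal whose zero locus is the union of irreducible components of $I_S$ \emph{not} contained in $T_S$, and verify that its Hilbert polynomial has degree strictly less than $|S|$. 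If both checks pass, then for a generic $x \in L_S$ the fiber $\pi_1^{-1}(x)$ is contained in $T_S$, i.e., it is precisely the $D_S$-orbit of $x$.

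The main obstacle, in contrast to Conjecture~\ref{conj.support_recovery}, is exactly this need to separate $T_S$ from possibly extraneous components of $I_S$: a bare dimension count no longer suffices, since $T_S \subset I_S$ already has the expected dimension, so the Hilbert polynomial of $J_S$ alone cannot detect the presence of a spurious component dominating $L_S$. Saturation and primary decomposition are substantially more expensive than a Hilbert polynomial computation, and this cost scales rapidly in $|S|$ and $N$, which limits the practical range of $(N, K)$ that can be verified. One must also take care with non-generic supports $S$ for which $D_S$ is strictly larger than the generic symmetry group ($\{\pm 1\}$ for $\K = \R$ or $S^1$ for $\K = \C$): every coset of $D_S$ contributes an additional piece of $T_S$ that, if omitted, would masquerade as a spurious component in the saturation step.
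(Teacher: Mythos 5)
Your setup coincides with the paper's: the same incidence variety $I_S=\{(x,x'): a_x=a_{x'}\}\subset L_S\times L_S$, the same trivial locus given by the graphs $L_g=\{(x,g\cdot x)\}$ for $g\in D_S$, and the same reduction of the conjecture to showing that everything outside $\bigcup_{g\in D_S}L_g$ has dimension $<|S|$. Where you diverge is the verification step, and here your central justification is mistaken: you claim that the Hilbert polynomial of the defining ideal alone cannot detect a spurious component of top dimension, so that a saturation $(J_S : I_{T_S}^\infty)$ or a primary decomposition is required. In fact the Hilbert polynomial carries the degree of $I_S$ as the leading coefficient $a_\ell$, and degree is additive over the top-dimensional irreducible components (this is Proposition~\ref{prop.linear_strip}, built on Hartshorne's Proposition 7.6). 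Since each $L_g$ is a \emph{linear} subspace and hence has degree exactly one, the single check ``$\dim I_S = |S|$ and $\deg I_S = |D_S|$'' already certifies that the $|D_S|$ graphs exhaust all $|S|$-dimensional components: any extraneous $K$-dimensional component would force $a_\ell > |D_S|$. This is precisely how the paper proceeds (e.g., for $S=\{0,1,2,5\}\subset[0,7]$ it finds $P_I = 4P_3+10P_2-30P_1+20P_0$, reads off dimension $4$ and degree $4=|D_S|$, and concludes). Your route via saturation would also be logically sound if carried out, and you are right to flag the enlarged $D_S$ for special supports, but it pays a steep and unnecessary computational price -- primary decomposition is far costlier than a single Hilbert polynomial computation, which already strains feasibility for modest $K$ as discussed in Section~\ref{sec.complexity}.
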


Conjecture~\ref{conj.support_recovery} states that if $|S -S| > |S|$
then we can recover the support
$S$ of a generic vector in $L_S$. Conjecture~\ref{conj.vector_recovery} argues that
once we know that $x \in L_S$, then we recover $x$ itself. These two conjectures combined imply Conjecture~\ref{conj.main.technical}.

\paragraph{Verifying Conjecture~\ref{conj.vector_recovery} computationally} 
To verify the conjecture we consider the incidence
\rev{\begin{equation} \label{eq.incidence}
I_S = \{(x,x') \,| \,a(x) = a(x')\} \subset L_S \times L_S.
\end{equation}}
By construction, $I_S$ contains the $K$-dimensional linear subspaces $L_g = \{(x,g\cdot x)\, | \, g \in D_S\}$. The goal is to show that $I_S \setminus \cup_{g \in D_S} L_g$ (namely, the incidence without the subspaces corresponding to the
intrinsic symmetries of $S$) has dimension
strictly less than $K$.
To verify the conjecture we can show that the $K$-dimensional
components of $I_S$ correspond to pairs $(x,x')$ where $x'$ is obtained from
$x$ by a trivial ambiguity and that all other components have strictly smaller dimension.

This can be done by computing the Hilbert polynomial
$P_I$
of the ideal $I$ which defines the algebraic subset $I_S \subset L_S \times L_S$. 
As discussed in Appendix~\ref{sec:hilbert}, the Hilbert polynomial
$P_I$ can expressed in the following form
$$P_I = a_\ell P_\ell + a_{\ell -1}P_{\ell -1} + \ldots a_0 P_0,$$
where $\ell = \dim I_S -1$ and $a_\ell$ is the degree of
 $I_S$ as an algebraic subset of $L_S \times L_S$.
Since a linear subspace has degree one,
to show that $I_S \setminus \cup_{g \in D_S} L_g$ has dimension smaller than $K$, it suffices to show that $\ell \leq |S| -1$
%\tamir{Should it be $\ell \leq |S| -1$?}
and
$a_\ell = |D_S|$ 
whenever $|S - S| > K$.
%(Note that unlike the case of support recovery, $\dim I_{S} \geq K = |S|$). 
\begin{example}
We give an example to illustrate the methods used to generate the
	data presented in Example~\ref{example.vector_recovery} below.
        Let $S = \{0,1,2,5\} \subset [0,7]$ and let
        $L_S$ be subspace of $\R^N$ with support in $S$. The set $S$
        is preserved by the element \rev{of order two} $\sigma \in D_{16}$
        which is the reflection composed with a shift by 2. Thus, the
        group $D_S$ that stabilizes $L_S$ consists
        of four elements which we denote by $(1,-1,\sigma, -\sigma)$. Specifically, if $(a,b,c,0,0,d,0,0) \in L_S$ then:
        \begin{align*}
1\cdot (a,b,c,0,0,d,0,0) \cdot &= (a,b,c,0,0,d,0,0) \\
-1\cdot(a,b,c,0,0,d,0,0)&=(-a,-b,-c,0,0,-d,0,0) \\
        \sigma\cdot (a,b,c,0,0,d,0,0)&= (c,b,a,0,0,d,0,0) \\
        -\sigma\cdot (a,b,c,0,0,d,0,0)& =(-c,-b,-a,0,0,-d,0,0).
        \end{align*}
                 If $x=(x_0, x_1, x_2, 0, 0, x_5, 0, 0)$
        and $y = (y_0, y_1, y_2, 0, 0, y_5, 0, 0)$, then
        $a_x = a_y$ if and only if the following  equations are satisfied:
        \begin{equation} \label{eq.incidence_eqs2}
          \begin{array}{lcc}
	x_0^2 + x_1^2  + x_2^2 + x_5^2- y_0^2- y_1^2 - y_2^2 - y_5^2 & = & 0\\
	x_0x_1 + x_1x_2 -  y_0y_1- y_1y_2 & = & 0\\
	x_0x_2  - y_0y_2 & = &0\\
	x_2x_5 + x_5 x_0 - y_2 y_5 - y_5 y_0 & = & 0\\
	x_1x_5 - y_1 y_5 & = & 0
	  \end{array}
        \end{equation}
        Let $I$ be the ideal in $\K[x_0,x_1, x_2, x_5, y_0, y_1, y_2, y_5]$
        generated by the five polynomials in the left-hand side of~\eqref{eq.incidence_eqs2}.
%        $$(x_0^2 + x_1^2  + x_2^2 + x_5^2- y_0^2- y_1^2 - y_2^2 - y_5^2,x_0x_1 + x_1x_2 -  y_0y_1- y_1y_2,	x_0x_2  - y_0y_2,x_2x_5 + x_5 x_0 - y_2 y_5 - y_5 y_0,x_0x_4 - y_1 y_5)$$
        The equations~\eqref{eq.incidence_eqs2} are clearly satisfied 
        if $x = y$ of $x = -y$. Thus, the 4-dimensional linear subspaces
        $L_1 = \{(x,x)\,|\,x \in L_S\}$ and $L_{-1} = \{(x,-x)\,|\, x \in L_S\}$
        are in $Z(I)$, where $Z(I)$ denotes the algebraic subset
        of $L_S \times L_S$ defined by the ideal $I$.
%        \tamir{It worth noting what $Z(I)$, and refer to the appendix.}.
        In addition for any 4 real numbers $(a,b,c,d)$ the vectors
        $x = (a,b,c, 0, 0, d,0,0)$ and $y = \pm (c,b,a,0,0,d,0,0)$
        are solutions the equations \eqref{eq.incidence_eqs2}.
        Hence, there are two additional 4-dimensional linear subspaces
        $L_\sigma = \{ (x, \sigma x)\,|\, x\in L_S\}$ and $L_{-\sigma} = \{ (x,-\sigma x)\,|\,x \in L_S\}$ in $Z(I)$.

        Using Macaulay2 we calculated the Hilbert polynomial of
        the ideal $I$ to be $P_I = 4P_3 + 10P_2 -30P_1 + 20P_0$.
        Since $I_S  = Z(I)$ contains four linear subspaces
        $L_1, L_{-1}, L_{\sigma}, L_{-\sigma}$, then Proposition~\ref{prop.linear_strip} implies that 
        \begin{equation} \label{eq:example_known_support}
I_S \setminus (L_0 \cup L_{-1} \cup L_{\sigma} \cup L_{-\sigma})=
\{(x,x') \,|\, a_x = a_{x'} \text{ and } x' \neq g\cdot x \text{ for some } g \in D_S\},
        \end{equation}
has dimension at most 3. Hence, for generic $x \in L_S$
if $a_x = a_{x'}$ then $x' = g \cdot x$ for some $g \in D_S$. \end{example}

\begin{example} \label{example.vector_recovery}
	For each of the equivalence classes of 4-element subsets $S$ with $|S -S| > 4$ and
	$N=8, 9,$ we used Macualay2 \cite{M2} to compute the Hilbert
	polynomial of $I_S$ and verify that generic
	$x \in L_S$ can determined from its periodic  auto-correlation.
	Tables~\ref{table:3} and~\ref{table:4} present
	the degree and dimension of $I_S$.
	In each case, the degree of $I_S$ equals $|D_S|$ and dimension equals~$K$. 
	\begin{table}
\caption{Verification of Conjecture~\ref{conj.vector_recovery} for $N=8, K=4$}
	\label{table:3}
	\begin{center}
		\begin{tabular}{c|c|c|c|c}
			subset & $|D_S|$ & degree & dimension & phase retrieval\\
			\hline
			\{0,1,2,4\} & 2 & 2&4 & Yes\\
			\{0,1,2,5\} & 4  & 4&4 & Yes\\
			\{0,1,3,4\} & 4 & 4&4 & Yes\\
			\{0,1,3,5\} & 2 & 2&4 & Yes\\
			\hline
		\end{tabular}
	\end{center}
\end{table}
\begin{table}
\caption{Verification of Conjecture~\ref{conj.vector_recovery} for $N=9, K=4$}
\label{table:4}
	\begin{center}
		\begin{tabular}{c|c|c|c|c}
			subset & $|D_S|$ & degree& dimension & phase retrieval\\
			\hline
			%      \{0,1,2,3\} & 16\{0,1,2,3\}\\
			\{0,1,2,4\} & 2 & 2&4 & Yes\\
			\{0,1,2,5\} & 2 & 2&4& Yes \\
			\{0,1,3,4\} & 4 & 4&4& Yes \\
			\{0,1,3,5\} & 2 & 2&4 & Yes\\
			\{0,1,3,6\} & 6 &  6&4& Yes \\
			\{0,1,3,7\} & 4 & 4&4 & Yes\\
			%     \{0,1,4,5\} & 16\{0,1,3,4\}\\
			\{0,1,4,6\} & 4  & 4&4 & Yes\\
			%    \{0,2,4,6\} & 16 \{0,2,3,4\}\\
			\hline
		\end{tabular}
	\end{center}	
\end{table}

\end{example}
The hypothesis that $|S-S| > K$ in Conjecture~\ref{conj.vector_recovery} is necessary.
To demonstrate it, the following example gives two different signals with the same support and the same autocorrelation when $|S|=|S-S|=4$.
\begin{example}
	Consider the subset $S = \{0,1,4,5\}$ of $[0,7]$ so that
	$|S - S| = |S|=4$. Then, the vectors $$[x[0], x[1], 0,0,x[4],x[5],0,0],$$ and $$1/2[(x[0]+x[1] -x[4] + x[5]),
	(x[0] + x[1] + x[4] -x[5]), 0,0,(-x[0] + x[1] + x[4] + x[5]), (x[0] -x[1]+x[4] + x[5]),0,0)],$$ have
	the same auto-correlation but are not related by an intrinsic symmetry.
\end{example}
\rev{
\subsubsection{The computational complexity of verifying Conjecture~\ref{conj.support_recovery} and Conjecture~\ref{conj.vector_recovery}}
\label{sec.complexity}
  There is no expectation that the computational complexity of verifying conjectures~\ref{conj.support_recovery} and~~\ref{conj.vector_recovery} is polynomial in $N$ or in~$K$. There are two significant issues. The first is that verification of Conjecture \ref{conj.support_recovery}
  requires enumerating over all ${N \choose{K}}$ element subsets of $[1,N]$. If $K \sim N$ then Stirling's formula implies that this number asymptotic to at least $a^N/\sqrt{2\pi N}$ for some $a > e$.

In addition 
there are no good bounds on the computational complexity of computing the Hilbert polynomial of an ideal in
  a polynomial ring. The reason is that implemented algorithms for computing Hilbert polynomials first compute Gr\"obner bases. The
  computational complexity of computing Gr\"obner bases is not known, but
  for the ideals we consider which are generated by degree two elements in
  $2K$ variables, the best theoretical bound on the complexity is doubly exponential, namely $O(a^{2^K})$
  \cite{dube1990structure}\footnote{The doubly exponential bound of \cite{dube1990structure} is a bound on the maximum degree of an element in a Gr\"obner basis.}. To illustrate this, we tabulate below the run times of the {\tt hilbertPolynomial} function in Macaulay2~\cite{M2} to compute the Hilbert polynomial
  of the ideal of the incidence $I_S$ defined by \eqref{eq.incidence}, where $S$ is a random $K$ element subset of $[0,99]$, 
  and $K =5,\ldots, 12$.
  For these reasons numerical verification is only feasible for small-scale problems and cannot be applied directly to X-ray crystallography.}
\begin{table}[h]
	\begin{center}
	  \caption{Run times to compute the Hilbert polynomial
            of the ideal of $I_S$ for $S$, a random subset of $[0,99]$.}

		\begin{tabular}{c|c}
			$K$ & {\text Run time in seconds}\\ 
			\hline
			5 & 0.394786  \\
			6 & 0.300032\\
			7 &  0.48212\\
			8 & 1.02593 \\
			9 &  2.79231 \\
			10 & 38.6528 \\
			11 & 67.4881\\
                        12 & 191.163\\
			\hline
		\end{tabular}
	\end{center}
\end{table}

%	After that, we need to say clearly that the numerical verification can be conducted only on a small-scale problems and thus cannot be applied directly to crystallography.  

%%% Group-theoretic formulation %%%

\section{Group-theoretic considerations}
\label{sec:group_theory}

In Fourier domain, signals $x$ and $x'$ have the same Fourier magnitude 
if and only if $\hat x'[i] = e^{\iota \theta_i} \hat x[i]$
for some set of rotations~$(\theta_0,\ldots,\theta_{N-1})$.
It follows that if $\K = \C$ then the group $G = (S^1)^N$
preserves $|\hat x|$. The group $G$ acts on signals
in the time domain via the Fourier transform. In other words,
if $x \in \C^N$ and $g = (e^{\iota \theta_0}, \ldots, e^{\iota \theta_{N-1}})$
then $g \cdot x = F^{-1} g Fx$, where $F$ is the $N\times N$ DFT matrix.

We call $G$ the group of {\em non-trivial symmetries} for the phase retrieval problem. The action of $G$ is related to the action of $D$, the group of
intrinsic symmetries  (see Proposition~\ref{prop:trivial_symmetries}), as follows. The subgroup $S^1$ of $D$
corresponds to the diagonal subgroup of $G = (S^1)^N$ since
$\widehat{e^{\iota \theta} x} = e^{\iota \theta}  \hat{x}$. The circular shift
of $D$ forms the subgroup $\Z_N \subset G$,
generated by the element $(1, \omega, \omega^2, \ldots , \omega^{N-1})$, 
where $\omega = e^{2\pi \iota/N}$. If $\tilde{x}$ is the reflected signal,
then in Fourier domain  $\hat{\tilde{x}} = \overline{\hat{x}}$. Thus, the action
of the reflection in $D_{2N}$ does not correspond to the action
of an element of $(S^1)^N$. However, by letting $\hat x = (e^{\iota \theta_0},
\ldots , e^{\iota \theta_{N-1}}) \odot |\hat{x}|$ we see that 
$\hat{\tilde{x}} = \overline{\hat{x}} = g_x \hat{x}$
where $g_x = (e^{-2\iota \theta_0},
\ldots e^{-2\iota \theta_{N-1}})$. It follows
that $\tilde{x}$ is in the $G$-orbit of $x$. Hence
the orbit $G\cdot x$ contains the orbit $D\cdot x$ even though the non-abelian group $D$ is not a subgroup of the abelian~$G$.

A similar analysis holds in the real case (as in crystallographic phase retrieval) but the group $G$ of non-trivial symmetries is smaller. The reason
is that if $g$ is an arbitrary element of $(S^1)^N$ then 
$g \cdot \hat x$ is not the Fourier transform of a real vector, because $x$
is real if and only $\hat x$ is invariant under reflection and conjugation;
i.e., $\hat x[N-i] = \overline{\hat x[i]}$. In particular, $\hat x [0]$
is real and if $N$ is even then $\hat x[N/2]$ is real as well.
Thus, if $\K = \R$ we let $G$ be the subgroup of $(S^1)^N$ that 
 preserves the Fourier transforms of real
vectors: $G = \{(e^{\iota \theta_0}, \ldots,
e^{\iota \theta_{N-1}} )\,|\, \theta_{n} + \theta_{N-n} \equiv 0 \MOD 2\pi\}$.
If $N$ is odd, then $G$ is isomorphic 
to  $\pm 1\ \times (S^1)^{\lfloor N/2\rfloor}$
and if $N$ is even 
$G$ is isomorphic to $\pm 1 \times (S^1)^{N/2 -1}  \times \pm 1$.
Again, if $x \in \R^N$ then the orbit $D\cdot x$ is contained in
the orbit $G\cdot x$.

\subsection{Group-theoretic formulation of Conjecture~\ref{conj.main.technical}}
Given a $K$-dimensional subspace~$L$ (not necessarily
sparse), we
denote by $G \cdot L$  the orbit of $L$ under the group $G$.
By definition, $G\cdot L = \{g\cdot x \,| \,g \in G, x \in L \} \subset \K^N$
and consists of all vectors $x' \in \K^N$ with the property
that $a_{x'} = a_x$ for some fixed $x \in L$.
If $S'$ is equivalent to $S$, then $G \cdot L_S = G \cdot L_{S'}$ because
$L_{S'} = d \cdot  L_S$ for some $d \in D$, where $D$ is the group of intrinsic symmetries. 
We can now reformulate our conjectures in group-theoretic terms.

The group-theoretic version of Conjecture \ref{conj.main.technical} can
be stated as follows. 
\begin{conj} \label{conj.main_group}
	
	Suppose that  $x \in L_S$ is a K-sparse generic signal  such that $|S-S| >K$. Then, the orbit $G\cdot x$ contains
	a single $D$ orbit, which corresponds to the intrinsic symmetries of a  $K$-sparse signal.
\end{conj}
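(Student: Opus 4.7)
The plan is to prove Conjecture~\ref{conj.main_group} by reducing it to the two components already isolated earlier in the paper, namely Conjecture~\ref{conj.support_recovery} (generic support recovery) and Conjecture~\ref{conj.vector_recovery} (generic signal recovery from known support). In group-theoretic language, the $K$-sparse signals sharing the Fourier magnitude of $x$ are exactly $G \cdot x \cap \Sm = \bigcup_{S'} (G\cdot x \cap L_{S'})$, where the union runs over $K$-element subsets $S' \subset [0,N-1]$. Conjecture~\ref{conj.main_group} therefore amounts to two assertions: that this union is nonempty only for $S'$ in the $D_{2N}$-orbit of the true support $S$, and that within $L_S$ itself the fiber $G\cdot x \cap L_S$ equals $D_S \cdot x$. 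Since the inclusions $D_S \cdot x \subset G\cdot x \cap L_S$ and $D\cdot x \subset G\cdot x \cap \Sm$ are automatic, the task is to show equality holds generically in both cases.

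For the support-recovery step, observe first that if $S' - S' \neq S - S$ as subsets of $[0, N/2]$, then the supports of $a_x$ and $a_{x'}$ differ for $x \in L_S$, $x' \in L_{S'}$, so $L_{S'}$ cannot meet $G\cdot L_S$ at all. It therefore remains to bound the projection $\pi_S(I_{S,S'}) \subset L_S$ for each $S'$ that is $D_{2N}$-inequivalent to $S$ but satisfies $S' - S' = S - S$. The incidence $I_{S,S'} \subset L_S \times L_{S'}$ is cut out by the $|S-S|$ quadratic equations $a_x[\ell] - a_{x'}[\ell] = 0$ for $\ell \in S-S$, inside an ambient space of dimension $2K$. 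The expected dimension is therefore $2K - |S-S| < K$; if this expected dimension can be established, then $\pi_S(I_{S,S'})$ lies in a proper Zariski-closed subset of $L_S$, and the union over the finitely many relevant $S'$ is still proper Zariski-closed.

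For the signal-recovery step, consider the incidence $I_S \subset L_S \times L_S$ defined by $a_x = a_{x'}$. The $K$-dimensional linear subspaces $L_g = \{(x, g\cdot x) \mid x \in L_S\}$ indexed by $g \in D_S$ are automatically contained in $I_S$, contributing degree at least $|D_S|$ and dimension at least $K$. The goal is to show $I_S \setminus \bigcup_{g \in D_S} L_g$ has strictly smaller dimension. Using the Hilbert-polynomial expansion reviewed in Appendix~\ref{sec:hilbert}, this is equivalent to proving that the Hilbert polynomial $P_I$ of the defining ideal has degree exactly $K-1$ and leading coefficient exactly $|D_S|$. Expected-dimension heuristics again give $2K - |S-S| < K$ for the residual locus, which would suffice.

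The main obstacle is promoting these expected-dimension bounds to actual dimension bounds. The $|S-S|$ quadratic forms defining $a_x - a_{x'}$ are highly structured: they are invariant under $D_S \times D_S$ and under swapping $x$ with $x'$, so there is no a priori guarantee that they form a regular sequence or cut out a variety of pure expected codimension. A rigorous argument would likely proceed either through a Bertini-type transversality statement applied to the orbit map $G \times L_S \to \K^N$ together with a careful analysis of its stabilizers, or via a tangent-space computation: show that at a generic point of each known component $L_g$, the Jacobian of the defining equations of $I_S$ has rank exactly $K$, ruling out any further $K$-dimensional component away from $\bigcup_g L_g$. This reduces the conjecture to a combinatorial-rank statement about a structured quadratic Jacobian built from a generic $x \in L_S$, which is precisely what makes the Hilbert-polynomial verification succeed case-by-case but makes a uniform bound as $K$ and $N$ vary genuinely delicate.
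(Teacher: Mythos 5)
The statement you were asked to prove is labeled a conjecture in the paper, and the paper offers no proof of it: the authors state explicitly that they cannot prove Conjecture~\ref{conj.main.technical} (of which Conjecture~\ref{conj.main_group} is the group-theoretic restatement) and instead give a computational test for any fixed pair $(K,N)$. Your proposal reproduces exactly the paper's own reduction: the decomposition into support recovery (Conjecture~\ref{conj.support_recovery}, restated group-theoretically as Conjecture~\ref{conj.support_recovery_group}) and recovery given the support (Conjecture~\ref{conj.vector_recovery}, restated as Conjecture~\ref{conj.genericrecovery-group}), the incidence varieties $I_{S,S'}$ and $I_S$, the observation that only supports $S'$ with $S'-S'=S-S$ need be considered, and the criterion that the residual locus have dimension strictly below $K$, certified via Hilbert polynomials as in Appendix~\ref{sec:hilbert}. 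As a map of the problem, your write-up is faithful to the paper.

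But it is not a proof, and you effectively concede this in your final paragraph: the entire content of the conjecture is the step you defer, namely showing that the $|S-S|$ quadrics cutting out $I_{S,S'}$ (respectively $I_S$) actually impose independent conditions, so that the incidence attains its expected dimension $2K-|S-S|<K$ away from the symmetry subspaces $L_g$. Neither the Bertini-type transversality argument nor the Jacobian-rank computation you gesture at is carried out, and no mechanism is proposed for controlling these highly structured quadrics uniformly in $K$ and $N$; this is precisely the obstruction that confines the paper to case-by-case Macaulay2 verification for small $(K,N)$. In short, your reduction is the paper's reduction, and the gap you leave open is the open problem itself; the statement remains a conjecture after your argument just as it does in the paper.
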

Similarly, Conjectures~\ref{conj.support_recovery} and~\ref{conj.vector_recovery} are restated
as follows.
\begin{conj}[Support recovery] \label{conj.support_recovery_group}
	Suppose that  $x \in L_S$ is a K-sparse generic signal   such that $|S-S| >K$.
	Then, $D \cdot L_S$ is the only $D$ orbit of a linear  subspace of dimension $K$
	contained in $G \cdot L_S$. 
\end{conj}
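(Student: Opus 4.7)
The plan is to derive Conjecture~\ref{conj.support_recovery_group} as a group-theoretic reformulation of Conjecture~\ref{conj.support_recovery}. The first step is the fundamental identification $G \cdot L_S = a^{-1}(a(L_S))$: by construction $G$ is the group preserving Fourier magnitude, equivalently the periodic auto-correlation, so any $g\cdot x$ with $x \in L_S$ satisfies $a_{g\cdot x} = a_x \in a(L_S)$; conversely, if $y \in \K^N$ has $a_y = a_x$ for some $x \in L_S$, then $\hat{y}$ and $\hat{x}$ agree entry-wise in magnitude, so $y = g \cdot x$ for some $g \in G$. Consequently, a linear subspace $L \subset \K^N$ is contained in $G \cdot L_S$ if and only if $a(L) \subseteq a(L_S)$.

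Next, I would suppose $L$ is a $K$-dimensional linear subspace contained in $G \cdot L_S$ and let $S' \subseteq [0, N-1]$ be the minimal subset with $L \subseteq L_{S'}$, namely the union of the supports of elements of $L$. The key reduction is to establish $|S'| = K$, so that $L = L_{S'}$ is a coordinate subspace. Granted this, Conjecture~\ref{conj.support_recovery} yields the desired conclusion: if $|S' - S'| > K$ and $S'$ were not dihedrally equivalent to $S$, then for generic $x' \in L_{S'}$ we would have $a(x') \notin a(L_S)$, contradicting $a(L_{S'}) \subseteq a(L_S)$. Hence $S'$ must be equivalent to $S$, so $L = d \cdot L_S$ for some $d \in D$, placing $L$ in the $D$-orbit of $L_S$.

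The main obstacle will be the reduction $|S'| = K$, i.e., ruling out non-coordinate $K$-dimensional subspaces from $G\cdot L_S$. One possible approach is to exploit the explicit structure of the variety $a(L_S) \subset \K^{N/2+1}$, whose defining ideal is supported on the coordinates indexed by $S-S$, and argue that any $K$-dimensional subspace $L$ with $a(L) \subseteq a(L_S)$ must inherit a sparse structure matching $S$ via the action of $G$ in the Fourier domain. A secondary difficulty is the edge case $|S'-S'| \leq K$, which Conjecture~\ref{conj.support_recovery} does not cover; here a parameter count gives $\dim a(L_{S'}) < K = \dim a(L_S)$, and one must additionally preclude such a lower-dimensional sparse subspace from slipping into $G \cdot L_S$. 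For small $(N,K)$ these remaining cases can be verified computationally in the spirit of Example~\ref{example.support_recovery}, but a general proof will likely require a finer understanding of the auto-correlation variety $a(L_S)$ and of the linear subspaces its preimage under $a$ can accommodate.
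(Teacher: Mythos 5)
The statement you are trying to prove is \emph{Conjecture}~\ref{conj.support_recovery_group}; the paper offers no proof of it, only a statement that it is the group-theoretic counterpart of Conjecture~\ref{conj.support_recovery}, together with computational verification of the latter for small $(N,K)$ via Hilbert polynomials of incidence varieties. So there is no paper proof to match your argument against, and your proposal should be judged on its own terms. Your opening identification $G\cdot L_S = a^{-1}(a(L_S))$ is correct and is exactly how the paper defines the orbit of a subspace under $G$ (in the real case one must check that the phases can be chosen to satisfy $\theta_n+\theta_{N-n}\equiv 0$, which follows from the conjugate symmetry of $\hat x$ and $\hat{x}'$); and the deduction that a \emph{coordinate} subspace $L_{S'}\subseteq G\cdot L_S$ with $|S'-S'|>K$ must be dihedrally equivalent to $L_S$ is a legitimate consequence of Conjecture~\ref{conj.support_recovery} (modulo the fact that that conjecture, as stated, assumes $|S-S|=|S'-S'|$, whereas containment $a(L_{S'})\subseteq a(L_S)$ only forces $S'-S'\subseteq S-S$, so the strict-inclusion case needs a separate argument).

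The genuine gap is the one you flag yourself: the reduction to $|S'|=K$, i.e., the claim that every $K$-dimensional linear subspace of $G\cdot L_S$ is a coordinate subspace. This is not a technical loose end to be patched later; it is precisely the additional mathematical content that Conjecture~\ref{conj.support_recovery_group} carries beyond Conjecture~\ref{conj.support_recovery}, since the group-theoretic version quantifies over \emph{all} $K$-dimensional subspaces of $\K^N$, not merely those of the form $L_{S'}$. A $K$-dimensional subspace $L$ whose minimal supporting coordinate set $S'$ has $|S'|>K$ is not ruled out by any of the support-recovery machinery in the paper, and nothing in your proposal constrains it; the suggestion to ``argue that $L$ must inherit a sparse structure matching $S$'' is a restatement of what needs to be proved rather than an argument. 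Combined with the fact that the statement you reduce to is itself an open conjecture, the proposal is best read as a correct explanation of why Conjecture~\ref{conj.support_recovery_group} implies (and, restricted to coordinate subspaces, is implied by) Conjecture~\ref{conj.support_recovery}, not as a proof of either.
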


\begin{conj}[Generic signal recovery] \label{conj.genericrecovery-group}
	Suppose that  $x \in L_S$ is a K-sparse generic signal   such that $|S-S| >K$. Then, $G\cdot x \cap L_S = D_S\cdot x$, where $D_S$ \rev{is the} group of  intrinsic symmetries of~$S$.
\end{conj}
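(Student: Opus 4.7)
The plan is to translate this group-theoretic statement back into the incidence-variety language of Section~\ref{subsec.vector_recovery} and apply the Hilbert polynomial strategy already outlined there. The crucial observation is that for $x, x' \in L_S \subset \K^N$ the condition $x' \in G \cdot x$ is equivalent to $|\hat{x}'| = |\hat{x}|$, which is in turn equivalent to $a_{x'} = a_x$. Hence $G \cdot x \cap L_S = \{x' \in L_S : a_{x'} = a_x\}$, and Conjecture~\ref{conj.genericrecovery-group} becomes precisely Conjecture~\ref{conj.vector_recovery}, so the computational framework already developed is directly applicable.

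I would then set up the incidence $I_S = \{(x,x') \in L_S \times L_S : a_x = a_{x'}\}$ of~\eqref{eq.incidence}. For each $g \in D_S$ the graph $L_g = \{(x, g\cdot x) : x \in L_S\}$ is a $K$-dimensional linear subspace of $L_S \times L_S$ contained in $I_S$, and these $|D_S|$ subspaces are pairwise distinct because $D_S$ acts faithfully on $L_S$. To establish the conjecture it suffices to show that $I_S \setminus \bigcup_{g \in D_S} L_g$ has dimension strictly smaller than $K$: then its projection to the first factor is a proper Zariski-closed subset of $L_S$, and any generic $x$ outside this subset has the property that $(x,x') \in I_S$ forces $(x,x') \in L_g$ for some $g \in D_S$, i.e.\ $x' \in D_S \cdot x$. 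The reverse containment $D_S \cdot x \subseteq G \cdot x \cap L_S$ is automatic since $D_S \subseteq D$ gives $D_S \cdot x \subseteq D \cdot x \subseteq G \cdot x$ and $D_S$ preserves $L_S$ by definition.

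A parameter count makes the dimension bound plausible: the ambient space $L_S \times L_S$ has dimension $2K$, while $I_S$ is cut out by the $|S-S|$ equations indexed by the cyclic difference set. Under the hypothesis $|S-S| > K$, the expected dimension of $I_S$ is $2K - |S-S| < K$, yet the subspaces $L_g$ appear as $K$-dimensional excess components. The task is therefore to rule out any additional $K$-dimensional components. Concretely, if the Hilbert polynomial of the ideal defining $I_S$ has leading term $a_\ell P_\ell$ with $\ell \leq K-1$ and $a_\ell = |D_S|$, then since each $L_g$ is linear, hence of degree one, and contributes exactly one to the total degree, there can be no further top-dimensional component.

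The main obstacle is that this degree/dimension statement about $I_S$ appears to require case-by-case verification rather than a uniform argument. For any specific pair $(N,K)$ it reduces to a finite Hilbert polynomial computation as in Examples~\ref{example.support_recovery}--\ref{example.vector_recovery}, but as emphasized in Section~\ref{sec.complexity} the doubly exponential complexity of Gr\"obner basis algorithms prevents this from scaling. A uniform proof would likely need to exploit the $G$-equivariant structure of the quadratic auto-correlation map, together with sharper information on how $|S-S| > K$ forces the defining ideal to be a sufficiently ``nice'' complete intersection away from the trivial components $L_g$.
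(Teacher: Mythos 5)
Your proposal takes essentially the same route as the paper: the paper offers no proof of this conjecture either, but presents exactly the reduction you describe---identifying $G\cdot x\cap L_S$ with $\{x'\in L_S: a_{x'}=a_x\}$ so that the statement becomes Conjecture~\ref{conj.vector_recovery}, then bounding $\dim\bigl(I_S\setminus\bigcup_{g\in D_S}L_g\bigr)$ via the Hilbert polynomial and the degree argument of Proposition~\ref{prop.linear_strip}. Your candid acknowledgment that this yields only case-by-case computational verification, not a uniform proof, matches the paper's own position.
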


\subsection{Conjecture: Sparse signals are rare among signals with the same auto-correlation}

Given the group-theoretic formulation of the crystallographic phase retrieval problem, we pose
an additional conjecture, stating that the set of K-sparse signals among all signals with the same periodic auto-correlation is of measure zero.
More precisely,
if $x$ is any $K$-sparse signal with 
$K\leq N/2 +1$, then
for a generic element $g$ in the group $G$ of non-trivial symmetries, $g\cdot x$ is
not $K$-sparse.   
The conjecture implies that even if there exist additional solutions to the crystallographic phase retrieval problem, they are of measure zero. 
Importantly, this conjecture refers to all signals, not necessarily generic, without imposing any structure on the signal's support. 

%\begin{lemma}
%  Let $L_S$ be a $K$-sparse subspace of $\K^N$. Then for generic
%  $g$ in the group $G$ of non-trivial symmetries,
%  the translated subspsace $g \cdot L_S$ is not a $K$-sparse subspace.
%\end{lemma}
%\begin{proof}
%  The subgroup of $g \in G$ such that $g \cdot L_S$
%  is $K$-sparse exactly the intersection of the group $D$ of trivial
%  symmetries with the group $G$ of all symmetries viewed as real algebraic
%  subgroups
 % of the group $\GL_N(K)$ of automorphisms of $\K^N$.
%  Since $\dim D = 0$ if $\K = \R$ or $\dim D = 1$ if $\K = \C$,
%  this intersection is a subgroup of $G$ of dimension at most 1.
%  In particular for $g$ in the Zariski algebraic subset
%  $U=G \setminus (G\cap D)$, $g \cdot L_S$ is not $K$-sparse.
%\end{proof}

\begin{conj}[Generic transversality] \label{conj.transversality}
	Let $L_S$ be a $K$-sparse subspace of $\K^N$ (either $\K=\R$ or $\K=\C$).
	For generic $g$ in the group $G$ of non-trivial symmetries
        the following holds:
        \begin{enumerate}
          \item If $\K = \R$ and $K < N/2$, then for all $K$-sparse subspaces $L_{S'}$ (including
            $S' = S$) the translated subspace 
$g \cdot L_S$ has 0-intersection with $L_{S'}$ (i.e., $g\cdot L_S \cap L_{S'} = \{0\}$).
            \item If $\K = \C$ and $K \leq N/2$ then for all $K$-sparse subspaces $L_{S'}$ (including $S' = S$) the translated subspace 
              $g \cdot L_S$ has 0-intersection with $L_{S'}$.
         \end{enumerate}     
        In particular, if $K  < N/2$ then a generic translate of a sparse subspace
        contains no sparse vectors.
\end{conj}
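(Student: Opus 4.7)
The plan is to reformulate the conjecture as a nonemptiness statement about Zariski open sets in $G$, and then to verify this via a dimension count on an incidence variety. Fix $K$-element subsets $S, S' \subseteq [0, N-1]$ and define
\begin{equation*}
I_{S,S'} = \{(g, x) \in G \times L_S : x \neq 0, \ g \cdot x \in L_{S'}\}.
\end{equation*}
The set $U_{S, S'} = \{g \in G : g \cdot L_S \cap L_{S'} = \{0\}\}$ is the complement in $G$ of the image of the first projection $\pi_1 : I_{S, S'} \to G$. Since $g \cdot L_S \cap L_{S'} = \{0\}$ is equivalent to full rank of the $(N-K) \times K$ submatrix (with rows indexed by $[0,N-1] \setminus S'$ and columns by $S$) of the circulant matrix associated with $h = F^{-1}g$, $U_{S, S'}$ is Zariski open in $G$. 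Hence it suffices to establish nonemptiness for each of the finitely many $S'$; intersecting over $S'$ then yields the desired open dense subset of $G$.

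The core step is to bound $\dim I_{S, S'}$. Writing the action of $g = (g_0, \ldots, g_{N-1})$ on the Fourier side, the condition $g \cdot x \in L_{S'}$ becomes the $N - K$ scalar equations
\begin{equation*}
\sum_{i=0}^{N-1} g_i \, \hat{x}[i] \, \omega^{ij} = 0, \qquad j \in [0,N-1] \setminus S',
\end{equation*}
where $\omega = e^{2\pi \iota / N}$. For a generic $x \in L_S$ with non-vanishing Fourier transform, I would first show that these equations impose independent conditions on $g$, using a Vandermonde-type nondegeneracy argument. Granting this, $\dim I_{S, S'}$ equals $\dim L_S + \dim G - (N - K)$, which is $2K$ in the complex case (where $\dim_\C G_\C = N$) and $2K - N/2 - 1$ in the real case (where $\dim_\R G_\R = N/2 - 1$ for $N$ even). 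When $K < N/2$ this is strictly less than $\dim G$ in either setting, so $\pi_1(I_{S,S'}) \subsetneq G$ is a proper closed subvariety and $U_{S, S'}$ is dense. The complex boundary case $K = N/2$ only yields $\dim I_{S, S'} \leq \dim G_\C$; here I would exhibit an explicit $g^\star$ with $g^\star \cdot L_S \cap L_{S'} = \{0\}$, for example by taking $g^\star$ to be the cyclic shift by $j$ whenever some $j$ satisfies $(S - j) \cap S' = \emptyset$ in $\Z_N$, and otherwise by choosing a generic $h^\star = F^{-1} g^\star$ so that convolution by $h^\star$ places $L_S$ in general position relative to $L_{S'}$.

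The main obstacle is establishing the genericity of the fiber dimension, namely that the $N - K$ Fourier-side equations above impose independent constraints for generic $x \in L_S$. Over $\C$ this reduces to rank considerations for a matrix with Vandermonde-like structure and appears tractable. Over $\R$ it is substantially more delicate, since $G_\R$ has only real dimension $N/2 - 1$ and the Hermitian-symmetry constraints that cut $G_\R$ out of $(S^1)^N$ could in principle conspire with the coordinate constraints on the complement of $S'$ to cause a rank drop for structured choices of $S$ and $S'$. A complete proof will therefore require a precise analysis of the interplay between the DFT basis, the pair of coordinate subspaces indexed by $S$ and $S'$, and the reality constraint on $g$; this interplay is precisely where the conjectural character of the statement resides.
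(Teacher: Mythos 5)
You should first be aware that this statement is a \emph{conjecture} in the paper: the authors do not prove it. The only rigorous content the paper attaches to it is a small proposition reducing the claim, for a fixed $S$, to exhibiting a single witness $g_{S'}$ \emph{in each connected component of $G$} for each of the finitely many subsets $S'$ such that the $2K\times N$ matrix $A_{S,S'}(g_{S'})$ (rows: a basis of $L_{S'}$ together with $g_{S'}$ applied to a basis of $L_S$) has rank $2K$; the point is that the rank-deficiency locus is algebraic in $g$, so one witness per component certifies genericity. This is then checked computationally for small $(K,N)$. Your opening reduction --- $U_{S,S'}$ is Zariski open because transversality is a maximal-rank condition on the $(N-K)\times K$ submatrix of the circulant matrix of $h=F^{-1}g$, and it suffices to intersect over the finitely many $S'$ --- is exactly the same observation in an equivalent matrix formulation. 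One real-case detail you gloss over and the paper is careful about: $G_\R$ has $2$ or $4$ connected components, so a nonempty Zariski-open subset of $G_\R$ need not be dense; you need nonemptiness in every component, which is why the paper's proposition demands a witness per component.

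The genuine gap is in the incidence-variety dimension count, and you have correctly located it yourself. Two separate issues remain open there. First, the claim that the $N-K$ Fourier-side equations impose independent conditions on $g$ for generic $x\in L_S$ is precisely the unproven content of the conjecture; the paper offers no Vandermonde-type argument and your numerology (expected codimension $N-K$ against $\dim_\C G=N$, resp.\ $\dim_\R G=N/2-1$) only confirms that the thresholds $K\le N/2$ and $K<N/2$ are the right ones, not that the count is achieved. Second, even granting generic-fiber independence, the inference ``$\dim I_{S,S'}$ equals $\dim L_S+\dim G-(N-K)$'' is not valid as stated: $I_{S,S'}$ includes fibers over the non-generic locus of $L_S$ (e.g.\ $x$ with vanishing Fourier coefficients), where the conditions on $g$ degenerate and the fibers grow; one must stratify $L_S$ and check that stratum dimension plus fiber dimension stays below $\dim G$ on every stratum. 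Your proposed fix for the complex boundary case $K=N/2$ (an explicit shift $g^\star$ when some translate of $S$ misses $S'$) also does not cover all pairs $S,S'$ --- for $K=N/2$ there are pairs for which no cyclic shift of $S$ is disjoint from $S'$ --- so the ``otherwise'' branch is again an appeal to genericity rather than a construction. In short, your outline is a reasonable and honest plan that agrees with the paper where the paper has content, but neither you nor the paper proves the statement; the paper instead settles for a per-instance computational certificate.
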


\paragraph{Verifying Conjecture~\ref{conj.transversality} computationally.}
Given a $K$-element subset $S \subset [0,N-1]$, 
we can verify Conjecture~\ref{conj.transversality} as follows.
If $S = \{i_1, \ldots , i_K\}$
we let  $e_{i_1} \ldots , e_{i_K}$  be the standard basis for $L_S$, namely, 
$e_{i_j}$ denotes the vector $(0,0, \ldots 0, 1,0\ldots 0)$, where
the $1$ is in the $i_j$-th place.
Then, $g\cdot e_{i_1}, \ldots , g\cdot e_{i_K}$ form a basis for $g\cdot L_S$.
Let $S' = \{j_1, \ldots , j_k\}$ be any other $K$-element subset.
Then, $e_{j_1}, \ldots , e_{j_k}$ form a basis for $L_{S'}$,
and the $2K$ vectors $\{g \cdot e_{i_1}, \ldots , g \cdot e_{i_K}, e_{j_1}, \ldots , e_{j_K}\}$ span the subspace $g\cdot L_S + L_{S'}$ of $\K^N$.
By the standard linear algebra formula
\begin{equation*}
\dim (g\cdot L_S + L_{S'}) = \dim g\cdot L_S + \dim L_{S'} - \dim(g\cdot L_S \cap L_{S'}),
\end{equation*}
and thus we have $g\cdot L_{S} \cap L_{S'} = \{0\}$ if and only if $\dim (g\cdot L_{S} + L_{S'})
= 2K$. This is equivalent to requiring that the $2K$ vectors
$\{g \cdot e_{i_1}, \ldots , g \cdot e_{i_K}, e_{j_1}, \ldots , e_{j_K}\}$
be linearly independent.
Therefore,  $g\cdot L_S \cap L_{S'} \neq \{0\}$ if and only
if the $2K \times N$ matrix
\begin{equation*}
A_{S,S'}(g) = \begin{bmatrix}
e_{j_1}, 
e_{j_2},
\dots,
e_{j_K},
g\cdot e_{i_1},
\dots,
g\cdot e_{i_K}
\end{bmatrix}^T,
\end{equation*}
%$$A_{S,S'}(g) = \begin{bmatrix}
%  e_{j_1} \\
%  e_{j_2}\\
%  \vdots\\
%  e_{j_K}\\
%  ge_{i_1}\\
%  \vdots\\
%  ge_{i_K}
%\end{bmatrix}
%$$
spanned by the $2K$ vectors $e_{i_1}, \ldots , e_{i_K}, g\cdot e_{i_1},
\ldots , g\cdot e_{i_K}$ (where we treat the vectors as row vectors)
has rank strictly less than $2K$.
  \begin{prop}
    If for each $K$-elements subset $S'$
    there exists a single $g_{S'}$ in each connected
    component\footnote{When $\K = \C$ the group
      $G$ of non-trivial symmetries is $(S^1)^N$, which is connected.
      However, if $\K = \R$ and $N$ is odd then $G = \Z_2 \times (S^1)^{\lfloor N/2 \rfloor}$
      and if $N$ is even then $G = \Z_2 \times (S^1)^{N/2-1} \times \Z_2$.
      Thus, if $\K = \R$ the group of non-trivial symmetries has either 2 or 4    connected components.}
    of $G$ such that $A_{S,S'}{(g_{S'})}$ has maximal rank,
     then for generic $g \in G$
     and all $K$-element subsets $S'$, $g \cdot L_{S} \cap L_{S'} = \{0\}$.
  \end{prop}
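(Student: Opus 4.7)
The plan is to show that the set of ``bad'' $g \in G$, namely those for which $g \cdot L_S \cap L_{S'} \neq \{0\}$ for some $K$-element subset $S'$, is a proper Zariski closed subset of each connected component of $G$, hence of measure zero, so its complement is generic.

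First, I would fix a single $S'$ and a single connected component $C$ of $G$. Recall that $g \cdot L_S \cap L_{S'} \neq \{0\}$ if and only if the $2K \times N$ matrix $A_{S,S'}(g)$ fails to have maximal rank $2K$, which happens if and only if every $2K \times 2K$ minor of $A_{S,S'}(g)$ vanishes. Parametrize $G$ by phase angles $(\theta_0, \ldots, \theta_{N-1})$ subject to the reality constraints in the real case; then the entries of $g \cdot e_{i_j} = F^{-1} g F e_{i_j}$ are Laurent polynomials in the variables $e^{\I \theta_0}, \ldots, e^{\I \theta_{N-1}}$, and so is every $2K \times 2K$ minor of $A_{S,S'}(g)$. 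In particular each minor is a real-analytic function on $C$, and $C$ itself is a connected real algebraic subvariety (a torus, or a translate of one by a sign).

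Second, by hypothesis there is some $g_{S'} \in C$ at which $A_{S,S'}(g_{S'})$ has maximal rank, so at least one of the $2K \times 2K$ minors does not vanish at $g_{S'}$. Since $C$ is an irreducible real algebraic variety and this minor is a regular (Laurent polynomial) function on $C$ which is not identically zero, its zero locus is a proper Zariski closed subvariety of $C$ and therefore has strictly smaller dimension and Lebesgue measure zero. The common vanishing locus of all the minors is contained in this proper closed set, so the set
\[
B_{S',C} \;=\; \{\, g \in C \,:\, g \cdot L_S \cap L_{S'} \neq \{0\} \,\}
\]
is a proper Zariski closed, measure-zero subset of $C$.

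Third, I would take the union over the finitely many $K$-element subsets $S' \subset [0, N-1]$ (there are $\binom{N}{K}$ of them) and over the finitely many connected components of $G$ (at most 4 in the real case, 1 in the complex case). A finite union of proper Zariski closed subsets in each component is still a proper Zariski closed subset and has measure zero. Hence the complement in $G$,
\[
U \;=\; G \setminus \bigcup_{S'} \bigcup_C B_{S',C},
\]
is a dense open subset, and for every $g \in U$ and every $K$-element subset $S'$ we have $g \cdot L_S \cap L_{S'} = \{0\}$, which is the desired genericity statement.

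The main (mild) obstacle is to verify that the $2K \times 2K$ minors of $A_{S,S'}(g)$ really are regular functions on the real algebraic group $G$ so that ``not identically zero on $C$'' implies ``zero locus is a proper closed, measure-zero subset of $C$''; once one writes $G$ explicitly as a subgroup of $(S^1)^N$ (times signs) and expands the minors as Laurent polynomials in the $e^{\I \theta_j}$, this is routine. The witness hypothesis supplies exactly the non-vanishing needed to rule out the pathological case of identical vanishing on a component.
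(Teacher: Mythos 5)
Your proposal is correct and follows essentially the same route as the paper: express the failure of transversality as the vanishing of all $2K\times 2K$ minors of $A_{S,S'}(g)$, observe that these minors are polynomial functions on the real algebraic group $G$ so the bad locus is Zariski closed, and use the witness $g_{S'}$ in each connected component to conclude that this locus is proper there, whence the finite union over all $S'$ and all components still has dense complement. Your added remarks (irreducibility of each component as a real torus, measure-zero of a proper zero locus) only make explicit what the paper leaves implicit.
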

  \begin{proof}
  The first  $K$ rows of the matrix $A(g_{S'})$ are fixed,  while the last
$K$ rows depend linearly on the coordinates of $g_{S'} \in G$. The matrix
  $A_{S,S'}(g)$ fails to have rank $2K$ if and only if all $2K \times 2K$
  minors 
vanish. Each minor is polynomial in the entries of $A_{S,S'}(g_{S'})$
and thus a polynomial in the coordinates of $g_{S'}$. Hence, 
the set $Z_{S,S'} = \{g \in G\,|\, \rank A_{S,S'}(g) \text{ is not maximal}\}$ of matrices which do not have maximal rank is 
an algebraic subset of the real algebraic group~$G$. The set $U =G \setminus
\bigcup_{S'} Z_{S,S'}$
is Zariski open and consists of the $g \in G$ such that $g\cdot L_S$
is transverse to all $L_{S'}$.
Thus, to verify  Conjecture~\ref{conj.transversality}
for a specific $L_S$ it suffices to prove
that the intersection of $U$ with each connected component
is non-empty, implying it is dense.
 In other words, it suffices to find for each subset $L_{S'}$ a single
$g_{S'}$ in each connected component of $G$ such that $g_{S'}\cdot L_S$ is transverse to $L_{S'}$.
  \end{proof}

\begin{example}
	Using the technique above we verified Conjecture~\ref{conj.transversality}
	for every $K$-sparse subspace of $\C^{2K}$ with $2 \leq K \leq 7$.
	%using this method.
	 We chose for each $K$ a random element
	$g_K \in G$ and showed that for each pair of $K$-element
	subsets of $[0, 2K-1]$ the appropriate matrix had maximal rank.
        Likewise, we verified the conjecture for every $K$-sparse
        subspace of $\R^{2K+1}$ for $2 \leq K \leq 7$. In this
        case we choose for each $K$ a random element in each connected
        component of $G$. 
\end{example}
The next example illustrates the technique
	in detail for a given pair of subspaces and illustrates the
	differences
	between the real and complex cases.
\begin{example}
  Let $S= \{0,1,2,3\}$ and let $S' = \{0,1,2,7\}$ be subsets $[0,7]$.
  When $\K = \C$ a random element of $G$ can be taken to have the form
  $g= (e^{\iota \theta_0}, \ldots , e^{\iota \theta_7})$ where
  the $\theta_i$ are drawn randomly from the interval $[0, 2\pi)$.
    In the software Mathematica we used the command
    $G=\text{DiagonalMatrix}\left[e^{i \text{RandomReal}[\{0,2 \pi \},8]}\right]$
    to obtain the element\footnote{We present only the first two significant digits for clear presentation.}:
     $$g  = [-0.26+0.96\I,-0.87-0.47\I,-0.47-0.88\I,-0.33+0.94\I,0.70 +0.71\I,-0.81+0.58\I,-0.59-0.80\I,-0.09+0.99\I]. $$
%    {$g =(-0.264802+0.964303 i,-0.87894-0.476933 i,-0.471448-0.881894 i,-0.336837+0.941563 i,0.703761\, +0.710437 i,-0.809573+0.58702 i,-0.593212-0.805046 i,-0.094497+0.995525 i).$}\\
    The matrix $A_{S,S'}$ is the $8 \times 8$ matrix
  {\scriptsize $$\left( 
    	\begin{array}{cccccccc}
    	1 & 0 & 0 & 0 & 0 & 0 & 0 & 0 \\
    	0 & 1 & 0 & 0 & 0 & 0 & 0 & 0 \\
    	0 & 0 & 1 & 0 & 0 & 0 & 0 & 0 \\
    	0 & 0 & 0 & 1 & 0 & 0 & 0 & 0 \\
    	-0.34+0.25\I & -0.06\I & 0.41+0.26\I & -0.04+0.07\I& 0.18-0.25\I & -0.22+0.16\I & -0.04+0.57\I & -0.21-0.04\I \\
    	-0.21-0.04\I & -0.34+0.25\I & -0.07\I & 0.41 +0.26\I & -0.05+0.08\I & 0.18-0.25\I & -0.22+0.16\I & -0.04+0.57\I \\
    	-0.04+0.57\I & -0.21-0.04\I & -0.34+0.25\I & -0.07\I & 0.41+0.26\I & -0.04+0.07\I & 0.18-0.25\I & -0.22+0.16\I \\
    	-0.07\I & 0.41 +0.26\I & -0.04+0.07\I & 0.18 -0.25\I & -0.22+0.16\I & -0.04+0.57\I & -0.21-0.04\I & -0.34+0.25\I \\
    	\end{array}
    	\right)$$}
    %  \scalebox{.4}{ $\left( 
%\begin{array}{cccccccc}
% 1 & 0 & 0 & 0 & 0 & 0 & 0 & 0 \\
% 0 & 1 & 0 & 0 & 0 & 0 & 0 & 0 \\
% 0 & 0 & 1 & 0 & 0 & 0 & 0 & 0 \\
% 0 & 0 & 0 & 1 & 0 & 0 & 0 & 0 \\
% -0.343193+0.254372 i & 0.00263503\, -0.0698689 i & 0.416328\, +0.263063 i & -0.0471544+0.0782328 i & 0.186768\, -0.257422 i & -0.225564+0.163776 i & -0.0404228+0.577357 i & -0.214198-0.0452075 i \\
% -0.214198-0.0452075 i & -0.343193+0.254372 i & 0.00263503\, -0.0698689 i & 0.416328\, +0.263063 i & -0.0471544+0.0782328 i & 0.186768\, -0.257422 i & -0.225564+0.163776 i & -0.0404228+0.577357 i \\
% -0.0404228+0.577357 i & -0.214198-0.0452075 i & -0.343193+0.254372 i & 0.00263503\, -0.0698689 i & 0.416328\, +0.263063 i & -0.0471544+0.0782328 i & 0.186768\, -0.257422 i & -0.225564+0.163776 i \\
% 0.00263503\, -0.0698689 i & 0.416328\, +0.263063 i & -0.0471544+0.0782328 i & 0.186768\, -0.257422 i & -0.225564+0.163776 i & -0.0404228+0.577357 i & -0.214198-0.0452075 i & -0.343193+0.254372 i \\
%\end{array}
%\right) $}\\ 
which has non-zero determinant and thus maximal rank. It follows
that when $\K = \C$ the general translate of $L_{S}$ does not
intersect $L_{S'}$.

If $\K = \R$ a random element of $G$ has
the form $(\pm 1, e^{\iota \theta_1}, e^{\iota \theta_2}, e^{\iota \theta_3},\pm 1,
e^{-\iota \theta 3}, e^{-\iota \theta 2}, e^{-\iota \theta 1})$.
We take the element
\begin{equation*}
g = [1, 0.44+0.9\I, 0.22+0.97\I,-0.39+0.92\I,1, -0.39-0.92\I,0.22 -0.97\I, 0.44 -0.9\I],
\end{equation*}
%\scalebox{.4}{$g = (1.,0.439115\, +0.898431 i,0.224314\, %+0.974517 i,-0.39306+0.919513 i,1.,-0.39306-0.919513 %i,0.224314\, -0.974517 i,0.439115\, -0.898431 i)$}\\
and obtain the matrix
{$$\left(
	\begin{array}{cccccccc}
	1 & 0 & 0 & 0 & 0 & 0 & 0 & 0 \\
	0 & 1 & 0 & 0 & 0 & 0 & 0 & 0 \\
	0 & 0 & 1 & 0 & 0 & 0 & 0 & 0 \\
	0 & 0 & 0 & 1 & 0 & 0 & 0 & 0 \\
	0.31 & -0.41 & 0.2 & -0.22 & 0.29 & -0.07 & 0.18 & 0.71 \\
	0.71 & 0.31 & -0.41 & 0.2 & -0.22 & 0.29 & -0.07 & 0.18 \\
	0.18 & 0.71 & 0.31 & -0.41 & 0.2 & -0.22 & 0.29 & -0.07 \\
	-0.41 & 0.2 & -0.22 & 0.29 & -0.07 & 0.18 & 0.71 & 0.31 \\
	\end{array}
	\right)$$}\\
%\scalebox{.4}{$\left(
%\begin{array}{cccccccc}
% 1 & 0 & 0 & 0 & 0 & 0 & 0 & 0 \\
% 0 & 1 & 0 & 0 & 0 & 0 & 0 & 0 \\
% 0 & 0 & 1 & 0 & 0 & 0 & 0 & 0 \\
% 0 & 0 & 0 & 1 & 0 & 0 & 0 & 0 \\
% 0.317592 & -0.41789 & 0.199192 & -0.22485 & 0.294565 & -0.0693683 & 0.188651 & 0.712108 \\
% 0.712108 & 0.317592 & -0.41789 & 0.199192 & -0.22485 & 0.294565 & -0.0693683 & 0.188651 \\
% 0.188651 & 0.712108 & 0.317592 & -0.41789 & 0.199192 & -0.22485 & 0.294565 & -0.0693683 \\
% -0.41789 & 0.199192 & -0.22485 & 0.294565 & -0.0693683 & 0.188651 & 0.712108 & 0.317592 \\
%\end{array}
%\right)$}\\
which has rank 7, that is, rank deficient. It follows that $g \cdot L_{S} \cap L_{S'} \neq \{0\}$
and thus we expect that every translate of $L_S$ in the identity component
of $G$ contains sparse vectors. A similar calculation can \rev{be made}  using
  a random element of the other components.
%\tamir{Don't we need to check different unconnected componenets in the real case?}

\end{example}

\section{Higher dimensional auto-correlations} \label{sec:higher_dimensions}
Our analysis can also be carried out for higher dimensional periodic auto-correlations.
Here, a signal is function $x \colon [0,N-1]^M \to \K$. We
denote by
$x[\ell_0,\ell_2,\ldots \ell_{M-1}]$
the value of $x$ at $(\ell_0, \ldots , \ell_{M-1}) \in [0,N-1]^M$. 
The periodic auto-correlation function $a_x \colon [0,N-1]^M \to \K$ is given by
$$a_x[n_0, \ldots , n_{M-1}] =  \sum_{(\ell_0, \ldots , \ell_{M-1}) \in [0,N-1]^M} x[\ell_0,
\ldots , \ell_{N-1}] \overline{x[\ell_0 + n_0 , \ldots, \ell_{M-1} + n_{M-1}]},$$
where all indices are considered modulo $N$.
By definition, the periodic auto-correlation obeys \rev{a conjugation-reflection} $\Z_2$ symmetry $a_x[n_0, \ldots , n_{M-1}] = \overline{a_x[N-n_0, \ldots, N-n_{M-1}]}.$

If $K = \C$, then the group $D = (S^1 \times (\Z_N)^M) \ltimes \Z_2$  preserves
the auto-correlation. Here,  
$e^{\iota \phi} \in S^1$ acts by global phase change, $h=(n_0, \ldots, n_{M-1}) \in \Z_N^{M}$
acts by cyclic shift, i.e.,
$$(hx)[\ell_0, \ldots , \ell_{M-1}] = x[\ell_0 + n_0, \ldots , \ell_{M-1} + n_{M-1}],$$  and $(-1) \in \Z_2$ acts
by reflection and conjugation; i.e., $$(-1)\cdot x[\ell_0, \ldots , \ell_{M-1}]
= \overline{x[N-\ell_0, \ldots , N- \ell_{M-1}]}.$$
Similarly, if $\K = \R$
then the group $D = (\pm 1 \times (\Z^N)^M) \ltimes \Z_2$
preserves the periodic auto-correlatoin. In either case we refer to $D$ as the
($M$-dimensional) group of intrinsic symmetries.
Two signals $x \colon [0,N-1]^M \to \K$ are {\em equivalent}
if they are in the same orbit of the group $D$ of intrinsic symmetries. 

Given a subset $S \subset [0,N-1]^M$, we let $L_S$ be
the subspce of signals $[0,N-1]^M \to \K$ whose
support is contained in $S$. 
Let $C$ the set of equivalence classes of $[0,N-1]^M$ modulo
the equivalence relation $(n_0, \ldots , n_{M-1}) \sim (N-n_0, \ldots, N- n_{M-1})$ and let 
$S-S$ to be the cyclic difference set
$\{ (n_0 - m_0, \ldots , n_{M-1}- m_{M-1})\,|\, (n_0, \ldots , n_{M-1}), (m_0,
\ldots m_{M-1}) \} \subset C$. For a generic $x \in L_S$, 
the auto-correlation $a_x$ has $|S-S|$ distinct entries up to reflection and
conjugation. Again for dimension reasons we cannot recover a generic  signal if $|S-S| < |S|$ since
the auto-correlation function, restricted to the subspace $L_S$, can be viewed as a  polynomial function from
$\K^{|S|} \to \K^{|S-S|}$.

As in the one-dimensional case, we expect to be able recover a generic
vector 
in $L_S$ (up to an action of the group $D$ of intrinsic symmetries)
from its higher dimensional auto-correlation $a_x$
provided $|S-S|> |S|$. In other 
words, we expect that the analogue of Conjectures~\ref{conj.main.technical},~\ref{conj.support_recovery}, and \ref{conj.vector_recovery} when $S$ is a subset of $[0, N-1]^M$ with the property that $|S-S| > |S|$ to hold true.
For any specific $S \subset [0,N-1]^M$, this can be verified in a manner
similar to the one-dimensional computational tests, by computing the Hilbert polynomial
of an appropriate incidence variety as in Sections~\ref{subsec.support_recovery},~\ref{subsec.vector_recovery}.

The problem of recovering a signal from its periodic auto-correlation can be extended to signals defined on any finite abelian group $A$ as discussed in Section~\ref{sec:general_group}. 
Under this more general framework, the setups considered in this paper are just special cases: in the one one-dimensional case $A=\Z_N$ and in the multi-dimensional case 
$A=\Z_N^M$.

\appendix

%%% Algorithms %%%

\section{Phase retrieval algorithms and computational complexity}
\label{sec:numerics}

While this work focuses on the question of uniqueness, we would like to
briefly discuss phase retrieval algorithms and the computational complexity of the crystallographic phase retrieval problem;  we refer the reader to~\cite{elser2007searching,elser2017complexity,elser2018benchmark,levin2019note} for further insights. 

\subsection{Phase retrieval algorithms}
Recall that our goal is to find a signal in the intersection of two non-convex sets $x_0\in \Sm\cap\B$~\eqref{eq:problem}.  
We thus define projectors onto these sets; these projectors are simple and can be computed efficiently.  
The projection onto $\B$~\eqref{eq:B} of a general signal $x\in\C^N$ combines the observed Fourier magnitude $y_0$ from~\eqref{eq:model} with the current estimate of the Fourier phase. Formally, the projector onto $\B$ is defined by
\begin{equation}
P_\B(x) = F^{-1} (y_0\odot \sign(Fx)),
\end{equation}
where $'\odot'$ denotes an element-wise product and $\sign(x)[n]=\frac{x[n]}{|x[n]|}$  for any $x[n]\neq 0$ and  $\sign(x)[n]=0$ otherwise. The projector onto $\Sm$ leaves the $K$ entries with the largest absolute values intact, and zeros out all other entries. Therefore, $P_\Sm(x)$ is a K-sparse signal by definition. 

A naive approach to solve the X-ray crystallography phase retrieval problem, and phase retrieval in general, is to apply the two projectors iteratively, i.e.,
\begin{equation}
x \mapsto P_\Sm P_\B (x).
\end{equation}
This scheme is called alternating projection in the mathematics literature, and Gerchberg-Saxton in the phase retrieval literature.
Unfortunately, for hard problems such as  crystallographic phase retrieval, this scheme tends to stagnate quickly in points far away from a solution. 

Alternatively, algorithmic schemes which are close relatives of the Douglas-Rachford splitting algorithm~\cite{douglas1956numerical,lindstrom2018survey,li2016douglas} \rev{and the alternating direction method of multipliers (ADMM)} have been proven to be highly effective. 
These algorithms are based on the reflection operators, defined as $R_\B = 2P_\B-I$ and $R_\Sm = 2P_\Sm-I$, where $I$ is the identity operator.
One representative, simple yet effective, algorithm is called \emph{relaxed reflect reflect} (RRR). For a fixed parameter $\beta\in(0,2)$, the RRR iterations read:
\begin{equation}
x \mapsto x + \frac{1}{2}(I + \beta R_\B R_\Sm)\rev{(x)}, 
\end{equation} 
or, more explicitly, 
\begin{equation}
x \mapsto x + \beta (P_\B(2P_\Sm(x)-x ) - P_\Sm(x)).
\end{equation} 
For $\beta=1$ this algorithm coincides with Douglas-Rachford. Other variations of Douglas-Rachford that are used in practice include 
Fienup's hybrid input-output (HIO) algorithm~\cite{Fienup1982}, 
the difference map algorithm~\cite{elser2003phase}, and the relaxed averaged alternating reflections (RAAR) algorithm~\cite{Luke2005}.
In addition to phase retrieval, these algorithms seem to be surprisingly effective  for a variety of challenging feasibility problems, such as the Diophantine equations, sudoku, and protein conformation determination~\cite{elser2007searching}; recently, it was even applied to deep learning~\cite{elser2019learning}. 
One specific interesting property of RRR (and most of its relatives) is that|in contrast to optimization-based algorithms|it stagnates only when it finds a point \rev{from which the intersection  $\Sm\cap\B$ can be found trivially by projection}.  Note that this property does not guarantee finding a solution in a finite number of steps.

\subsection{Computational complexity}
Strong empirical evidence suggest that the computational complexity of  RRR for the crystallographic phase retrieval
	 problem increases exponentially fast with $K$~\cite{elser2018benchmark}, however, rigorous theoretical analysis is lacking.

To illustrate the computational complexity, we ran RRR with step size parameter $\beta=1/2$ (chosen empirically), $N=50$ and varying $K$, and counted how many iterations are required to reach a solution from a random initialization\footnote{The code to reproduce this experiment, as well as to re-generate all other figures in the paper, is publicly available at \url{https://github.com/TamirBendory/crystallographicPR}.}.
To measure the error while taking symmetries into account, we define
\begin{equation} \label{eq:error}
\text{error} = \min_{g\in D}\frac{\|g\cdot x_{\text{est}}-x_0\|_2^2}{\|x_0\|_2^2},
\end{equation}
where $x_{\text{est}}$ is the estimated signal and $D=\Z_2\times D_{2N}$|the group of intrinsic symmetries. 
A solution was declared when the error dropped below $10^{-8}$.
Plainly, this measure cannot be used in practice since it requires knowing the sought signal, but it suffices for the purposes of this work. 
In practice, a natural error measure is $\eta = \frac{\|P_\Sm(x)\|_2^2}{\|x\|_2}$: 
this index measures the portion of the signal's energy concentrated in the dominant $K$  entries of the current estimate~\cite{elser2018benchmark}. 
To generate the underlying signal, we drew a random  set of $K$ indices from $[0,N-1]$ to form the support set $S$.
Then, each entry $x[i]$ for $i\in S$ was drawn i.i.d.\ from a uniform distribution over $[0,1]$. The rest of the entries were set to zero. 
Figure~\ref{fig:iter_count} shows that the median number of iterations required to reach a solution grows exponentially fast.
We believe that this is not a flaw of RRR, but an indication for the  computational hardness of the crystallographic phase retrieval problem, regardless of any specific algorithm. In particular, as far as we know, there are no  polynomial-time algorithms for this problem.
The iteration counts also
display a considerable variability. 

\begin{figure}[ht]
	\begin{subfigure}[h]{0.5\columnwidth}
	\centering
	\includegraphics[width=\columnwidth]{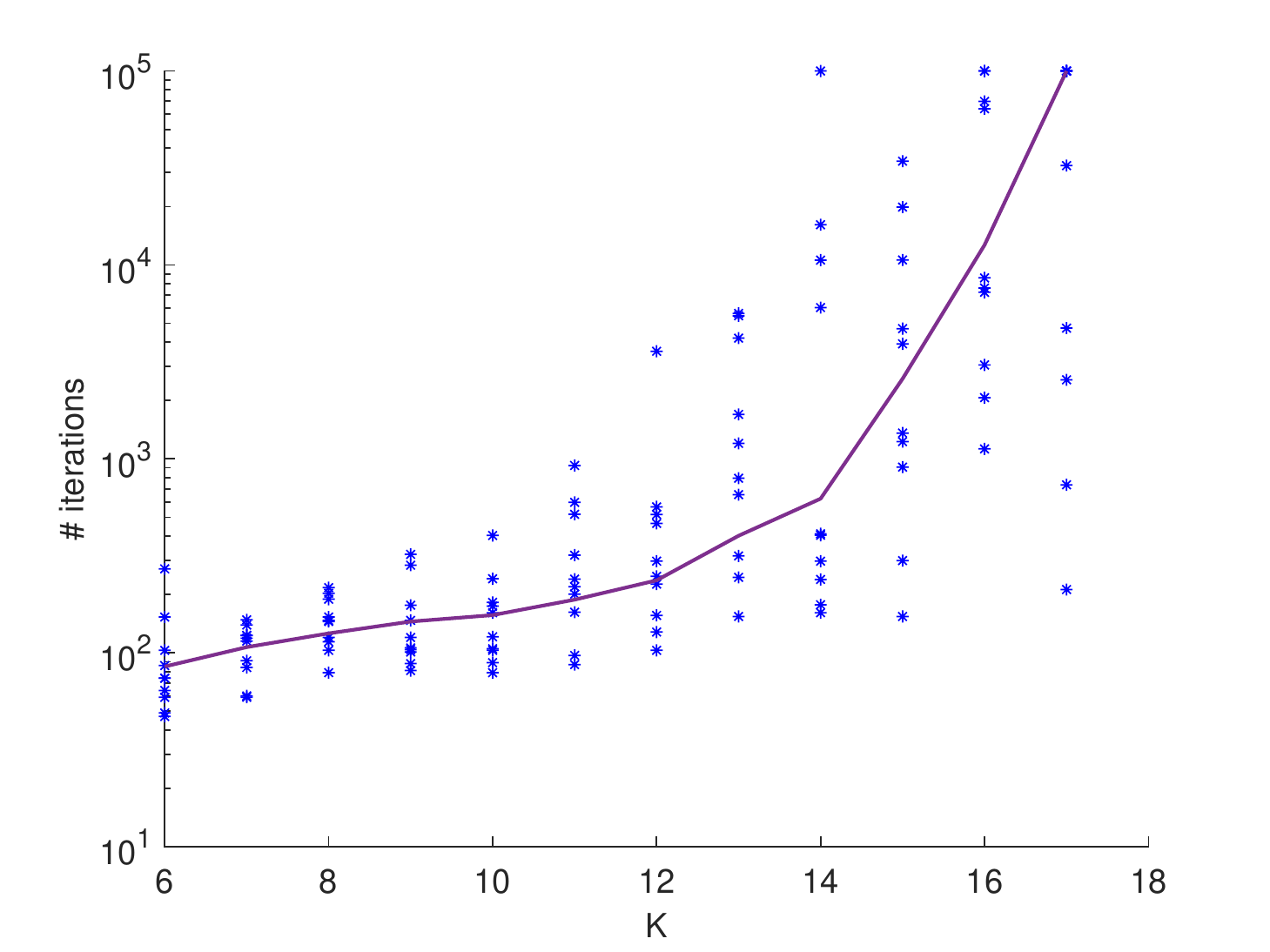}
	\caption{N = 50}
\end{subfigure}
\hfill
	\begin{subfigure}[h]{0.5\columnwidth}
	\centering
	\includegraphics[width=\columnwidth]{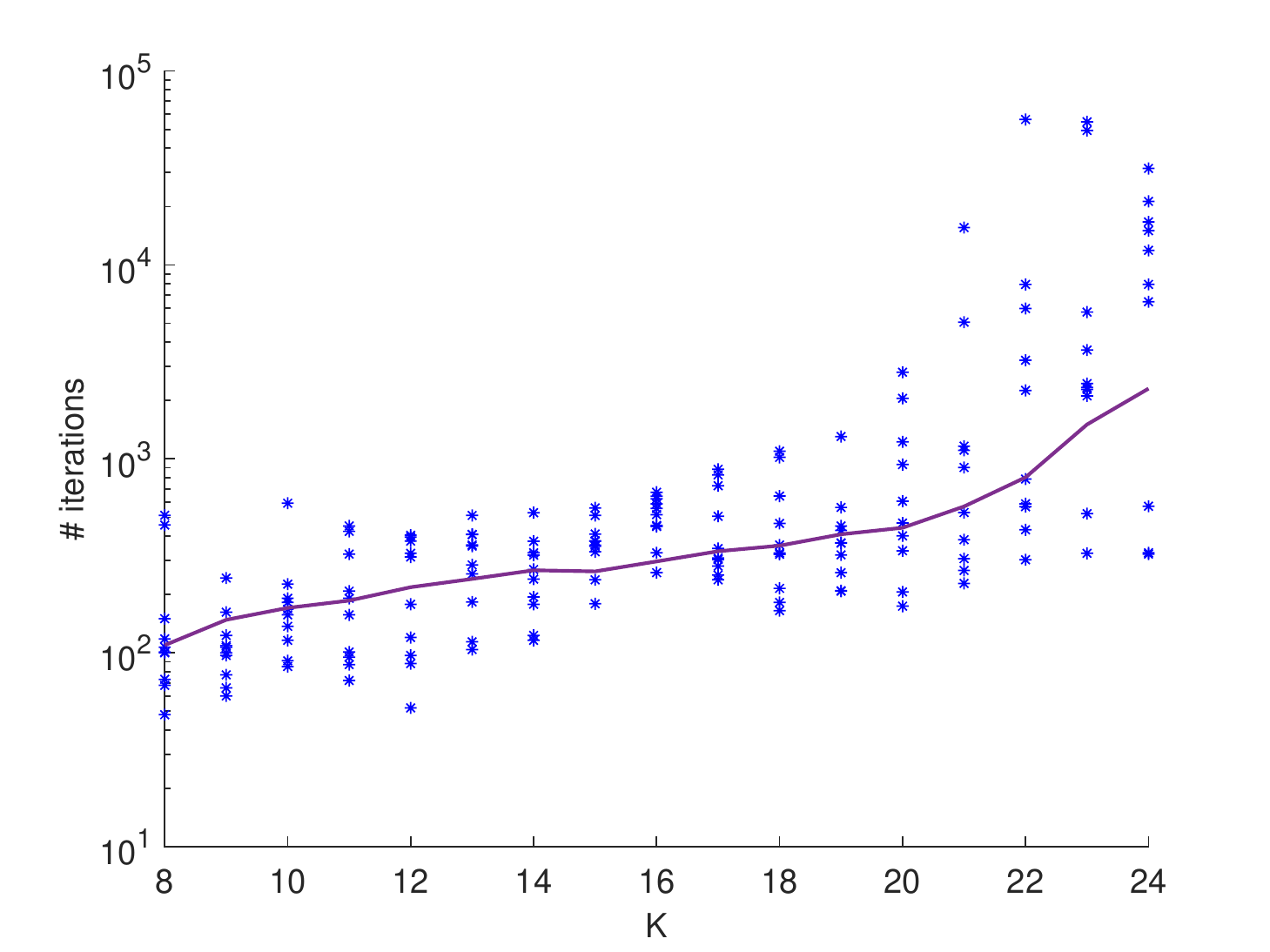}
	\caption{N = 100}
\end{subfigure}
\hfill
\caption{\label{fig:iter_count} The median number of RRR
 iteration counts (running time) over 500 trials per $K$ for $N=50$ (left) and $N=100$ (right). As can be seen, the iteration counts  grow exponentially fast with $K$.
		The blue asterisks present the specific iteration count of 10 individual trials per~$K$, and are used only to illustrate the high variability of the results. }  
\end{figure}

The exponential computational complexity of RRR  restricts our ability to empirically verify the conjectured uniqueness limit  $K\approx N/2$ for large  values of  $N$. \rev{Unfortunately, for small $N$, there are few subsets $S$ that satisfy the necessary condition $|S-S|>K$.} 
As a compromise, we conducted an experiment with $N=8$ and $K=3,4$. For each $K$, we ran 1000 trials with random support sets that satisfy $|S-S|>K$.
The maximum number of iterations was set to $10^7$. 
 For $K=3$, 77 trials out of 1000 reached the maximal number of iterations. In other words, $85\%$ of the trials were declared successful. For $K=4$, only $60\%$ of the trials were successful. Figure~\ref{fig:iter_hist} shows the empirical distribution of the iteration counts, which decays in exponential rate.   %Together with the results of Figure~\ref{fig:iter_count}, it seems to be prohibitively hard to attain successful recovery close to the uniqueness limit, even for small values of $N$. 

\begin{figure}[ht]
	\begin{subfigure}[h]{0.5\columnwidth}
		\centering
		\includegraphics[width=\columnwidth]{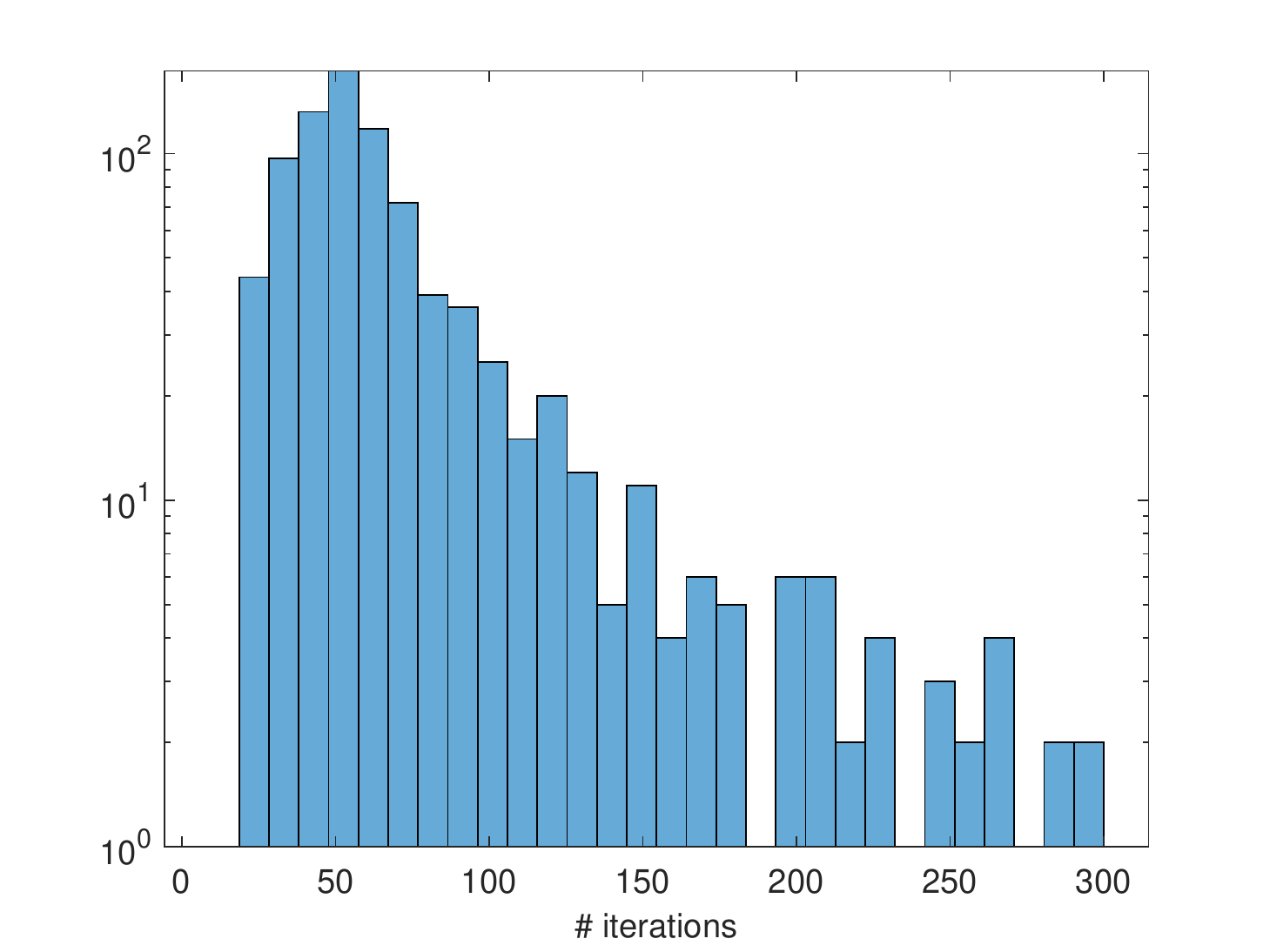}
		\caption{K=3}
	\end{subfigure}
	\hfill
	\begin{subfigure}[h]{0.5\columnwidth}
		\centering
		\includegraphics[width=\columnwidth]{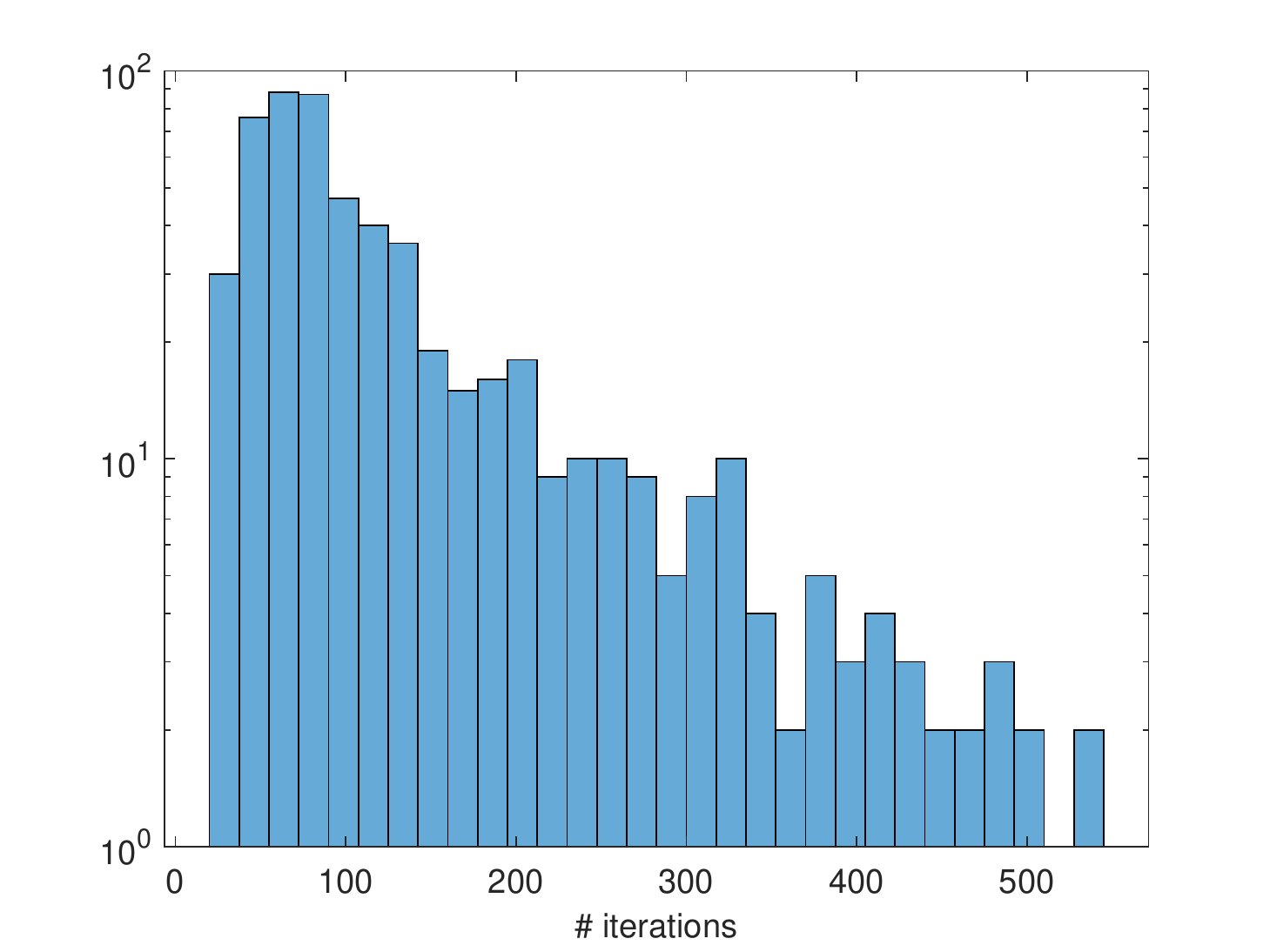}
		\caption{K=4}
	\end{subfigure}
	\caption{\label{fig:iter_hist} Iteration counts (running time) histograms  over 1000 trials for $N=8$ and $K=3,4$. The iteration counts decay in exponential rate. Only the left ends of the histograms (that include almost all trials) are presented for clear visualization.}  
\end{figure}

\section{Density of sets with small difference sets} \label{app.difference}
%\paragraph{Sets with small difference sets}
For a given value of $K$ and $N$, an important mathematical question
is to estimate the number of $K$-element subsets $S$ with the property
that $|S-S| \leq K$. 
This question is quite subtle and relates to some deep problems in additive number theory. It is beyond the scope of this paper to obtain this analysis, but classic results of Kemperman~\cite{kemperman1960sumset} (see also~\cite{vsevolod1999sumset}) give a technique for enumerating the
sets $S$ with this property. This classification is somewhat involved and depends on the prime factorization of $N$. However, if $N$ is prime then
Kemperman's results imply the following.
\begin{prop} \label{prop.kemperman}
	If $N$ is prime,  then $|S-S| \leq |S|$ if and only if $S$ is an arithmetic progression.
\end{prop}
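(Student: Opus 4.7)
The plan is to reduce the statement to the classical Cauchy--Davenport and Vosper theorems in $\Z_N$. Since $N$ is odd, the involution $n\mapsto -n\bmod N$ fixes only $0$, and the raw difference set $D := \{j-i \bmod N : i,j\in S\}\subset\Z_N$ is symmetric under this involution. Consequently every non-zero orbit has exactly two elements, giving
\[
|S-S| \;=\; \frac{|D|+1}{2},
\]
so the hypothesis $|S-S|\leq K$ is equivalent to $|D|\leq 2K-1$.

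The ``if'' direction is immediate. If $S = \{c_0+\ell d \bmod N : 0\leq \ell\leq K-1\}$, then every difference $j-i$ is of the form $\ell d\bmod N$ with $|\ell|\leq K-1$, so $|D|\leq 2K-1$ and therefore $|S-S|\leq K$.

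For the converse I would apply Cauchy--Davenport to $A=S$ and $B=-S$ in $\Z_N$, both of size $K$, using that $N$ is prime: $|D|=|A+B|\geq\min(N,2K-1)$. In the regime $K\leq N/2$ of interest, $2K-1\leq N-1$, so the lower bound is $2K-1$ and combined with the hypothesis forces $|D|=2K-1$. Provided $K\geq 2$ and $|D|\leq N-2$ (i.e.\ $K\leq (N-1)/2$), Vosper's theorem then asserts that equality in Cauchy--Davenport forces $A$ and $B$ to be arithmetic progressions in $\Z_N$ with the same common difference; in particular $S$ is an arithmetic progression. The cases $K=1,2$ are APs trivially.

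The main obstacle is the boundary $K\geq (N+1)/2$: here Cauchy--Davenport forces $|D|=N$ for every $K$-subset, so $|S-S|=(N+1)/2\leq K$ is automatic while $S$ need not be an AP (e.g.\ $\{0,1,2,3,4,6\}\subset\Z_{11}$). The proposition should therefore be read as implicitly restricted to the sparse regime $K\leq N/2$ that is the focus of the paper's main conjecture; the boundary case would require either an additional hypothesis or a direct appeal to the full Kemperman structure theorem, which handles the degenerate case $D=\Z_N$ separately.
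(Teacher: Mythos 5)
Your proof is correct on the range where the proposition is actually used and takes essentially the same route as the paper: the paper also passes to the full difference set $S-^{\Z_N}S=S+(-S)$ in $\Z_N$, uses the bound $|S-^{\Z_N}S|\leq 2|S-S|-1$, and then invokes the Corollary on p.~74 of Kemperman's paper, which for groups of prime order is precisely the Vosper-type statement you cite. The one substantive difference is that you explicitly check the hypotheses of Vosper's theorem and correctly observe that they fail when $K\geq (N+1)/2$ (where $S+(-S)$ can be all of $\Z_N$, as in your example $\{0,1,2,3,4,6\}\subset\Z_{11}$), a degenerate case the paper's proof passes over silently; since the proposition is only applied under $K/N\leq 1/2$, i.e.\ $2K-1\leq N-2$, this does not affect the paper's conclusions, but your caveat is a genuine and worthwhile correction to the statement as written.
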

\begin{proof}
	Denote by $S-^{\Z_N} S \subset \Z_N$ the set of
	differences $\{i-j| i , j \in S\} \subset \Z_N$. (Here we do not identify
	an element and its negative in $\Z_N$.) Because $0 \in S-^{\Z_N} S$ is its own negative, it follows that 
	$|S-^{\Z_N} S| \leq 2|S-S|-1$. Hence, if $|S-S| \leq  |S|$
	then $|S-^{\Z_N}S| < 2|S| = |S| + |-S|$ where $-S = \{-i| i \in S\}$.
	In this case~\cite[Corollary, p.~74]{kemperman1960sumset}  implies that $S$
	is an arithmetic progression.
	
	Conversely, if $S = \{a_0, a_0 + d, \ldots , a_0 + (d-1)K\}$
	then $S-S = \{0, \overline{d}, \ldots , \overline{(d-1)K}\}$, where $\overline{m}$ indicates the element of $[0,  N/2 ]$
	corresponding to the equivalence class of $m$ under the equivalence
	relation $m \sim -m$.
\end{proof}
Let $\S_K$ be the set of $K$-element subsets of $[0,N-1]$ and
let $\T_K$ be the set of $K$-element subsets of $[0,N-1]$ such
that $|S-S| < |S|$.
The following corollary says that, at least when $N$ is prime, the probability
of picking a subset with $|S-S| < |S|$ drops quickly to 0 as $N \to \infty$.
\begin{prop} \label{pro:prime}
	For prime $N$ and $K/N \leq 1/2$,
	the ratio $|\T_K|/|\S_K|$ tends to 0 as $N \to \infty$.
\end{prop}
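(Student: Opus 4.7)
The plan is a direct counting argument that combines Proposition~\ref{prop.kemperman} with the growth of binomial coefficients. The idea is that the hypothesis ``$N$ is prime'' is exactly what forces the small-difference-set subsets to be a \emph{very} low-dimensional family (arithmetic progressions), parametrized by only two integers, whereas $\S_K$ itself grows at least like $N^3$ already when $K \geq 3$.

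First, I would invoke Proposition~\ref{prop.kemperman}: for prime $N$, any $K$-subset $S \subset [0,N-1]$ satisfying $|S-S|\leq |S|$ is an arithmetic progression in $\Z_N$. Every element of $\T_K$ satisfies this condition, so $\T_K$ is contained in the set of arithmetic progressions of length $K$ in $\Z_N$.

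Second, I would bound the number of such arithmetic progressions. An AP is determined by an ordered pair $(c_0,d)\in [0,N-1]^2$ via $S=\{c_0+\ell d:\,\ell=0,\ldots,K-1\}\MOD N$, giving at most $N^2$ parametrizations. Hence $|\T_K|\leq N^2$. (A small case analysis, using that $N$ is prime and $K\geq 3$, shows each set AP has exactly two ordered parametrizations, yielding the sharper bound $N(N-1)/2$, but the crude bound is enough.)

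Third, I would handle the small-$K$ and intermediate-$K$ cases against $|\S_K|=\binom{N}{K}$. For $K=1,2$ a direct check gives $\T_K=\emptyset$, so the ratio is $0$. For $3\leq K\leq N/2$, monotonicity of binomial coefficients on $[0,N/2]$ yields $\binom{N}{K}\geq \binom{N}{3}=N(N-1)(N-2)/6$, and therefore
\begin{equation*}
\frac{|\T_K|}{|\S_K|}\;\leq\;\frac{N^2}{\binom{N}{3}}\;=\;\frac{6N}{(N-1)(N-2)}\;\xrightarrow{N\to\infty}\;0,
\end{equation*}
uniformly in $K$ over the range $3\leq K\leq N/2$.

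There is no serious obstacle: once Proposition~\ref{prop.kemperman} is applied, the argument reduces to comparing the two-parameter family of APs against the combinatorially rich family $\binom{N}{K}$. The only mild care needed is to separate the trivial small-$K$ cases and to note that the $\binom{N}{3}$ lower bound kicks in precisely where the AP count starts being dwarfed. The primality of $N$ is essential, since for composite $N$ other configurations (e.g., cosets of non-trivial subgroups of $\Z_N$) also have small difference sets, and Proposition~\ref{prop.kemperman} no longer holds in such a clean form.
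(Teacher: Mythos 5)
Your proof is correct and follows essentially the same route as the paper's: both invoke Proposition~\ref{prop.kemperman} to reduce $\T_K$ to (a subset of) the arithmetic progressions, bound their number by $\sim N^2$, and compare against $\binom{N}{K}$. Your version is slightly more careful about uniformity in $K$ (separating out $K=1,2$ and using $\binom{N}{K}\geq\binom{N}{3}$ on the range $3\leq K\leq N/2$), while the paper reaches the same $\sim N^2$ count by enumerating equivalence classes of progressions under the dihedral action; the substance is identical.
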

\begin{proof}
	By Proposition~\ref{prop.kemperman}, when $N$ is prime
	$S \in \T_K$ if and only if $S$ is an arithmetic progression of length $K$.
	The equivalence class of an arithmetic progression is determined
	by its difference~$d \in \Z_N$. Moreover, any progression with   difference $d$ is equivalent under the action of the dihedral group
	to a progression with difference $N-d$. Thus, the number
	of equivalence classes of arithmetic progressions equals $ N/2 $. Since the dihedral group $D_{2N}$ has $2N$ elements we see that
	that the total number of arithmetic progressions is $\sim N^2$.
	On the other hand, the total number of $K$-element subsets is
	${N \choose K}$. Thus, the ratio
	$|\T_K|/|\S_K| \sim {N^2\over{{N \choose K}}}$ which goes quickly to 0
	as $N \to \infty$.
\end{proof}

We expect that a more refined analysis using Kemperman's classification will
show that the number of equivalence classes of $S$ such that $|S-S| \leq K$
is asymptotic to 0 even for composite $N$; see Figures~\ref{fig:hist} for supporting empirical evidences.
However, deriving such a results analytically is beyond the scope of the current work. 

\section{Proof of Proposition~\ref{prop:collisions}}
\label{sec:proof_prop_collisions}
%\paragraph{Collision free sets}
%We have seen that sets with small difference sets are rare. At other
%extreme, we can consider collision free sets. For sparse sets whose support is collision free the phase retrieval problem is easy. Unfortunately, the following
%proposition shows that for any fixed value of $K/N$ collision free sets do not exist if $N$ is sufficiently large.
%\begin{prop}  \label{}
%	For any $R \in (0,1]$, for $N$ sufficiently large there does not exist
%	a collision-free subset of size $K \geq R N$.
%\end{prop}
%\begin{proof}
	In order for $S$ to be collision-free we need that every non-zero element
	of the multi-set $S-S$ appears with multiplicity exactly one so
	that $|S-S|$ is maximized. If $|S| = K$
	then the number of  non-zero differences (counted with multiplicity)
	in $S-S$ is ${K \choose 2}$. Thus, $S$ is collision free if and only
	if $|S-S| = {K \choose 2} +1$. (We add one because $0 \in |S-S|$).
	Thus, a necessary condition of $[0,N-1]$ to contain any collision-free subset
	is that ${K \choose 2} +1  \leq N$. For any fixed value of $R$,
	the function ${RN \choose 2} + 1$ grows quadratically in $N$. Therefore,	for $N$ sufficiently large ${K \choose 2} + 1 > N$ so there
	can be no collision-free subsets.
%\end{proof}

%
%\begin{example}
%  We consider $N = 500, K =5$
%  and randomly sampled $10^{5}$ $5$-element subsets of $[0,499]$ and obtained
%  the following histogram for the values of $|S-S|$. In this
%  case the minimum value for $|S-S|$ was $7$ which appreared $23$ times. By contrast the maximum value for $|S-S|$ was $11$ which appeared $91,333$ times.
%  Note that $11 = {5\choose 2} + 1$ is the maximal possible value of $S-S$,
%  which corresponds to the auto-correlation being collision free. In this
%  case  we see that a support set with collision free periodic auto-correlation is highly likely. However, if $K=40$, $N = 400$ then a sample of $10^5$
%  random subsets yields no subsets whose difference sets corresponds to
%  collision free auto-correlations.
%%  \includegraphics{DifferenceSetHistogram.pdf}
%%  \includegraphics{40Elelements4000Histogram.pdf}
%%  \dan{Tamir, you can move these pdf as needed.}
%\end{example}

%\begin{figure}[ht]
%	\centering
%	\includegraphics[width=1\linewidth]{code/XP_cardinality.pdf}
%	\caption{Histograms of $|S-S|$ for $N=500$ and $10^5$ randomly sampled subsets of size $|S|=K=\{5,10,20,50\}$. Note that the maximum of $S-S$ is given by ${K\choose 2} + 1$, corresponding to collision-free events. Thus, we see that collision-free trials are likely when $K=5$ and still happen even when $K=10$. However, for $K=20$ all trials have collisions. See also Figures~\ref{fig:collision} and~\ref{fig:collision_fixed_ratio}. \tamir{This figure will be removed.}}   
%\end{figure}

\section{Hilbert polynomial, dimensions and degrees of varieties}
\label{sec:hilbert}
Consider the polynomial ring $R=\K[x_0, \ldots , x_n]$
where, $\K$ is a field.
For each $d$, the set $R_d$ consisting of homogeneous polynomials
of degree $d$ is a finite dimensional $\K$-vector subspace with basis consisting of
the monomials of degree $d$ in $x_0, \ldots , x_n$. A well known
combinatorial formula for the number of monomials implies that 
$\dim_\K R_d = {{n+d}\choose{d}}$. For example, if $n =1$
then $\dim R_d$ is the number of binary forms of degree 2 in $(n+1)$-variables
which is $d+1 = {{d+1}\choose{d}}$. Note that function
$d \mapsto \dim R_d$ is a polynomial in $d$ of degree $n$.

Given a set of homogenous polynomials $f_1, \ldots , f_r$, 
let $I=(f_1, \ldots , f_r)$ be the ideal they generate. The {\em Hilbert function} $H_I$ is defined as the function
$d \mapsto \dim(R/I)_d$, where $(R/I)_d$ denotes the subspace
of $R/I$, consisting of homogeneous elements of degree $d$.
The {\em Hilbert-Serre Theorem} \cite[Theorem I.7.5]{hartshorne1977algebraic_geometry} states that there exists an integer
valued polynomial
$P_I$ such that for $d\gg 0$,
$H_I(d) = P_I(d)$. The polynomial $P_I$ is called the {\em Hilbert polynomial}
of $I$. If we set $P_k$ to be the polynomial $P_k(d) = {{k+d} \choose {d}}$
then we can write
$$P_I = a_\ell P_l + a_{\ell -1} P_{l-1} + \ldots a_0 P_0$$
with $a_0, \ldots , a_{\ell}$ integers and $a_{\ell} > 0$.

In addition, the Hilbert-Serre theorem implies that 
$\deg P_I$ equals the dimension of the subvariety of
the projective space $\Pro^n$ defined by the homogeneous polynomials
$f_1, \ldots , f_r$.
Equivalently if we consider $Z(f_1, \ldots , f_r)$ as a subset
of $\K^n$, then $\deg P_I = \dim Z(I) -1$.
Moreover, the coefficient $a_{\ell}$ is positive and equals to the degree of
$Z(I)$ as a projective variety,
where the degree of a projective variety $Z(I)$ of dimension $\ell$
is defined as the number of points
in the intersection $Z(I) \cap L_{n-\ell}$, where $L_{n -\ell}$ is a general
linear subspace of dimension $n-\ell$~\cite[Theorem 7.7]{hartshorne1977algebraic_geometry}.

Using the Hilbert function we can obtain the following proposition
which we use in Section \ref{subsec.vector_recovery}.
\begin{prop} \label{prop.linear_strip}
	Suppose that $\dim Z(I) = \ell$ and has degreee $a$. If $Z(I)$
	contains $a$  $\ell$-dimensional linear subspaces $L_1, \ldots , L_a$ then
	$\dim Z(I) \setminus (L_1 \cup \ldots \cup L_a) < \ell$.
\end{prop}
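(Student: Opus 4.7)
The plan is to invoke the additivity of degree on top-dimensional irreducible components of $Z(I)$, together with the fact that a linear subspace of dimension $\ell$ is itself irreducible and has projective degree one. I will show that the $a$ given linear subspaces must exhaust the top-dimensional part of $Z(I)$, so that everything not contained in their union lies in components of strictly smaller dimension.

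First, I would decompose $Z(I)$ into its distinct irreducible components $W_1, \ldots, W_m$, grouped so that $W_1, \ldots, W_k$ have dimension exactly $\ell$ and $W_{k+1}, \ldots, W_m$ have dimension strictly less than $\ell$. Using the Hilbert-Serre theorem recalled in Appendix~\ref{sec:hilbert}, only the top-dimensional components contribute to the leading coefficient $a_\ell$ of $P_I$, and this contribution is additive: $\sum_{i=1}^{k} \mu_i \deg W_i = a$, where each $\mu_i$ is a positive integer equal to the length of the local ring of $\mathrm{Spec}(R/I)$ at the generic point of $W_i$.

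Next, I would argue that each given $\ell$-dimensional linear subspace $L_j \subset Z(I)$ must coincide with one of the $W_i$'s. Indeed $L_j$ is irreducible of the top dimension $\ell$, so any irreducible component of $Z(I)$ containing $L_j$ has dimension at least $\ell$, which by maximality forces equality $L_j = W_{i_j}$ for some $i_j \in \{1, \ldots, k\}$. Distinctness of the $L_j$ yields distinctness of the indices $i_j$. Since $\deg L_j = 1$, each $L_j$ contributes $\mu_{i_j}\deg W_{i_j} = \mu_{i_j} \geq 1$ to the sum $\sum_{i=1}^{k} \mu_i \deg W_i = a$. Summing over the $a$ distinct subspaces already accounts for the full degree, which forces $k = a$, each $\mu_{i_j} = 1$, and the absence of any further top-dimensional component. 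Consequently, $Z(I) \setminus (L_1 \cup \cdots \cup L_a) \subseteq W_{k+1} \cup \cdots \cup W_m$, a union of varieties of dimension strictly less than $\ell$, yielding the claim.

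The only subtlety I anticipate is ensuring that the argument handles the case where $I$ is not reduced, so that $P_I$ encodes scheme-theoretic multiplicities rather than purely geometric ones. This is handled by the primary decomposition of $I$: the minimal primes above $I$ cut out precisely the irreducible components of the set $Z(I)$, and each contributes a positive integer multiple of the degree of that component to the leading term of $P_I$. Since $\mu_{i_j}\deg L_j \geq 1$ regardless of reducedness, the counting step is unaffected. I expect this bookkeeping with multiplicities to be the only place where care is required; the rest of the argument is a direct application of the standard dimension-and-degree decomposition for projective (or affine-cone) varieties.
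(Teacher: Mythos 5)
Your proof is correct, but it is organized differently from the paper's. You decompose $Z(I)$ into its irreducible components, observe that each $L_j$ (being irreducible of the top dimension $\ell$) must itself be one of the $\ell$-dimensional components, and then use the additivity of degree with multiplicities over top-dimensional components, $\sum_i \mu_i \deg W_i = a$, to conclude that the $a$ degree-one subspaces exhaust the entire top-dimensional locus; everything else lives in strictly smaller-dimensional components. The paper instead avoids the component decomposition: it first notes that the union $L_1 \cup \cdots \cup L_a$ has dimension $\ell$ and degree $a$, writes $Z(I) = Y \cup (L_1 \cup \cdots \cup L_a)$ with $Y$ the closure of the complement, and derives a contradiction from $\deg Z(I) = \deg Y + \deg(L_1 \cup \cdots \cup L_a)$ (Proposition 7.6b of \cite{hartshorne1977algebraic_geometry}) if $\dim Y$ were equal to $\ell$. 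The two arguments rest on the same underlying fact---degree is additive over the top-dimensional part---but your version is more explicit about the scheme-theoretic bookkeeping: it makes visible that each $L_j$ is a reduced component of multiplicity one and that no other top-dimensional component can exist, whereas the paper's shorter contradiction argument leaves implicit the hypothesis of 7.6b that $Y$ and the union of the $L_j$ share no common top-dimensional piece (which holds because $Y$, as the closure of the complement, contains no $L_j$ as a component). Your route is slightly longer but arguably more robust, since it does not require invoking the two-set degree formula with its intersection-dimension hypothesis.
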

\begin{proof}
	Let $I_L$ be the ideal generated by the linear forms defining
	the subspaces $L_1, \ldots , L_a$. By~\cite[Proposition 7.6]{hartshorne1977algebraic_geometry},  $Z(I_L)$ has dimension $\ell$ and degree $a_\ell$.
	Thus, $P_{I_L} = a_{\ell} P_\ell + \tilde{P}$ for some lower degree terms $\tilde{P}$.
	Hence, $Z(I)$ and $L_1 \cup \ldots \cup L_{a}$ have the same degree and dimension.
	Let $Y$ be the closure of  $Z(I) \smallsetminus (L_1 \cup \ldots \cup L_a)$
	in $\Pro^n$.
	Then $Z(I) = Y \cup (L_1 \cup \ldots \cup L_a)$. Since
	$Y \subset Z(I)$ we know that $\dim Y \leq \dim Z(I)$. Suppose
	that $\dim Y = \dim Z(I)$. Then by~\cite[Proposition 7.6b]{hartshorne1977algebraic_geometry}, $\deg Z(I) = \deg Y + \deg (L_1 \cup \ldots \cup L_a)$.
	But since $\deg Z(I) = a$ this a contradiction. Hence, $\dim Y < \dim Z(I)$.
\end{proof}

The Hilbert function of an ideal generated by polynomials $f_1, \ldots
, f_r$ with rational coefficients can be be computed exactly using a
computer algebra system. This is automated in two steps, which are executed by the command ${\tt hilbertPolynomial}$ in 
Macaulay2~\cite{M2}.
The first
step is to replace the generators $f_1, \ldots , f_r$ of the ideal $I$
with new generators $g_1, \ldots , g_t$ called a Gr\"obner
basis; see~\cite[Chapter 15]{eisenbud1995commutative_algebra} for
the definition of a Gr\"obner basis.
Given a Gr\"obner basis, the problem of computing the Hilbert
polynomial of an ideal is combinatorial.  Both steps can be computed
to infinite precision using a computer algebra system. 
Although neither step can be performed in polynomial time, implemented
algorithms are efficient when the number of variables is relatively
small. 
%\dan{Since we have a concern about length I did not elaborate on the definition of Grobner basis. If you want I could supply more details, but I think it would take significant part of a page to make it comprehensible.}

\section{Sparse periodic phase retrieval in finite abelian groups} \label{sec:general_group}
The sparse phase retrieval problem can be generalized to any finite abelian group. Let $A$ be a finite abelian group. We denote the composition operation by $+$,
the identity by $0$. and the inverse of an element $a$ as $-a$.
Let $V$ be the $\K$-vector space of functions $x \colon A \to \K$.
In the case of one-dimensional phase retrieval $A = \Z_N$ is a cyclic group, and 
in the case of higher dimensional phase retrieval $A = \Z_N^M$ is
a product of cyclic groups of the same order.
The auto-correlation of $x \in V$
is the function $a_x \colon A \to \K$ defined
by the formula 
\begin{equation}
a_x[\ell] =\sum_{\ell' \in A} x[\ell']\overline{x[\ell+\ell']}.
\end{equation}
The function $a \colon V \to V, x \mapsto a_x$
is invariant under the group $D_A = (S^1 \times A) \ltimes \Z_2$ if $\K = \C$
or $D_A = (\pm 1 \times A) \ltimes \Z_2$ and $\K = \R$.
Here,  $S^1$ (resp. $\pm 1$) acts by a scalar multiplication,
$A$ acts by translation, that is, $$(\ell \cdot x)[\ell'] = x[\ell'+\ell],$$ for some $\ell\in A$,  and
$\Z_2$ acts by conjugation and reflection, i.e., 
\begin{equation*}
(-1\cdot x)[\ell] = \overline{x[\ell]}.
\end{equation*}

If we let $C = A/\Z_2$, where $\Z_2$ acts on $A$ by $(-1\cdot \ell) = -\ell$,
then we can define the ``difference set'': $S-S = \{ \ell_1 - \ell_2 \,|\, \ell_1, \ell_2 \in A\} \subset C$.
With this setup, our main conjecture is as follows.
\begin{conj} \label{conj.technical.generalgroup}
	Suppose that $S$ is a subset of $A$ such that $|S-S| > |S|$ and $x \in L_S$ is a generic signal. Then,
	$a_x = a_{x'}$ implies that
	$x'$ is obtained from $x$ by an action of the group~$D_A$ of  intrinsic symmetries. 
\end{conj}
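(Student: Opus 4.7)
The plan is to mirror the strategy developed in Sections~\ref{subsec.support_recovery} and~\ref{subsec.vector_recovery} for the cyclic group $\Z_N$ and split Conjecture~\ref{conj.technical.generalgroup} into two parallel sub-statements. Write $D_{A,S} \subset D_A$ for the subgroup of intrinsic symmetries that preserves $L_S$ setwise. I would separately establish: (a)~\emph{support recovery}, namely, for any $K$-element subset $S' \subset A$ not in the $D_A$-orbit of $S$ with $|S' - S'| = |S-S|$, a generic $x \in L_S$ admits no $x' \in L_{S'}$ with $a_x = a_{x'}$; and (b)~\emph{signal recovery given the support}, namely, a generic $x \in L_S$ is determined by $a_x$ up to the action of $D_{A,S}$. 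Combining (a) and (b) yields the conjecture, since any $x'$ with $a_{x'} = a_x$ must lie in some $L_{S'}$ with $S' - S' = S - S$, is forced by (a) to have support $D_A$-equivalent to $S$, and is then forced by (b) to lie in the $D_{A,S}$-orbit of $x$.

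For (a), I would form the incidence variety
\[
I_{S,S'} = \{(x,x') \in L_S \times L_{S'} : a_x = a_{x'}\},
\]
cut out of $\K^{|S|} \times \K^{|S'|} = \K^{2K}$ by $|S-S|$ quadratic equations. As in the cyclic case, it suffices to show $\dim I_{S,S'} < |S|$, since then the projection $\pi_S(I_{S,S'})$ is not Zariski dense in $L_S$, and intersecting the complements over the finitely many $D_A$-equivalence classes of admissible $S'$ gives the support recovery statement. A naive constraint count gives expected dimension $2K - |S-S| < K$, so the nontrivial step is to verify that the $|S-S|$ quadratic equations meet in the expected codimension for every admissible $(S,S')$; this is exactly what the Hilbert polynomial computation does in the cyclic setting.

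For (b), I would form the self-incidence $I_S \subset L_S \times L_S$. It contains the $K$-dimensional linear subspaces $L_g = \{(x, g \cdot x) : x \in L_S\}$ for each $g \in D_{A,S}$. The plan is to compute the Hilbert polynomial $P_{I_S}$ of the defining ideal and to verify $\dim I_S = K$ with leading coefficient equal to $|D_{A,S}|$. Proposition~\ref{prop.linear_strip} then forces $I_S \smallsetminus \bigcup_{g \in D_{A,S}} L_g$ to have strictly smaller dimension, which is precisely what (b) asserts. For any specific $(A,S)$, this step is tractable algorithmically via a Gr\"obner basis computation, exactly as in Example~\ref{example.vector_recovery}; the theoretical content lies in carrying this out uniformly in $A$ and $S$.

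The main obstacle is the leap from case-by-case Hilbert polynomial verification to a uniform proof across all finite abelian $A$ and all admissible $S \subset A$. Subsets with $|S-S|$ close to $|S|$ are highly structured: Kneser's theorem and Freiman-type results force them to be essentially coset progressions inside subgroups of $A$. Proposition~\ref{prop.progression} already shows that arithmetic progressions in $\Z_N$ are a genuine obstruction, and for composite $|A|$ additional obstructions will come from $S$'s that are unions of cosets of proper subgroups of $A$. A clean proof therefore needs two ingredients: (i)~a combinatorial classification of the exceptional $S$'s with $|S-S| \leq |S|$, generalizing Proposition~\ref{pro:prime}, in order to confirm that the hypothesis $|S-S| > |S|$ really excludes all obstructions; and (ii)~a uniform algebraic argument, most plausibly exploiting the Fourier-domain decomposition of $L^2(A)$ into characters of $A$ together with the group-theoretic reformulation of Section~\ref{sec:group_theory}, that simultaneously controls the ideals of $I_{S,S'}$ and $I_S$ for all $(A,S)$. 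I expect (i) to be accessible with existing additive combinatorics tools, whereas (ii) is the genuinely hard part and is where a full proof appears to be out of reach with present methods.
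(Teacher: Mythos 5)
The statement is a conjecture that the paper does not prove: it only formulates the two general-group sub-conjectures (support recovery, Conjecture~\ref{conj.support_recovery_general_group}, and recovery given the support, Conjecture~\ref{conj.vector_recovery_general_group}) together with the incidence-variety/Hilbert-polynomial verification scheme for specific instances. Your decomposition into (a) and (b) and your verification strategy reproduce the paper's own program essentially verbatim, and you correctly flag that the uniform argument over all $A$ and all admissible $S$ --- the step that would turn the program into a proof --- remains open, which matches the paper's own state of knowledge.
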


Similarly, we can formulate  general group-theoretic versions of Conjectures
~\ref{conj.support_recovery} and~\ref{conj.vector_recovery}.
To establish notation we note that the group $A \ltimes \Z_2$ (the analog of the dihedral group)
acts on the set of subsets of $A$, where $a \in A$ acts by
translation; i.e., $a + S = \{ a+ s | s\in S\}$ 
and the non-trivial element in $\Z_2$ acts by ``reflection,''
i.e., it maps $S$ to $-S=\{-s| s\in S\}$. 
We say that two subsets $S,S'$ are equivalent if $S' = g \cdot S$ for
some $g \in A \ltimes \Z_2$. Given a subset $S \subset A$,
we denote by $D_{S,A}$ the subgroup of $D_A$
that preserves $L_S$ and again refer to it as the group of intrinsic symmetries
of the subspace $L_S$.

\rev{
\begin{conj} \label{conj.support_recovery_general_group}
	Suppose that $S$ and $S'$ are two non-equivalent $K$-element subsets of
	an abelian group $A$
	with $|S-S| =|S' -S'| > K$. Then, for generic $x \in L_S$, $a(x)$ is not in $a(L_{S'})$. Namely, the support of $x$ is determined up to equivalence
        under the action of the group $A \ltimes \Z_2$ by the periodic auto-correlation of $x$.
\end{conj}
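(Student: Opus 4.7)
The plan is to imitate the strategy used to attack Conjecture~\ref{conj.support_recovery} in the cyclic case, replacing $\Z_N$ everywhere by the finite abelian group $A$. First I would define the incidence variety
\begin{equation*}
I_{S,S'} = \{(x,x') \in L_S \times L_{S'} \,|\, a_x = a_{x'}\},
\end{equation*}
cut out inside the $2K$-dimensional space $L_S \times L_{S'}$ by the system of degree-two polynomial equations $a_x[\ell] - a_{x'}[\ell] = 0$ for $\ell$ running over the set $(S-S) \cup (S'-S') \subset C$. The projection $\pi_S \colon I_{S,S'} \to L_S$ has image exactly those $x \in L_S$ for which some $x' \in L_{S'}$ produces the same auto-correlation, so if $\dim I_{S,S'} < K = \dim L_S$, then $\pi_S(I_{S,S'})$ is contained in a proper Zariski-closed subset of $L_S$ and hence has Lebesgue measure zero. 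Intersecting the complements of these closed sets over the finitely many equivalence classes of $S'$ with $S'-S' = S-S$ would yield a Zariski-dense open subset of $L_S$ of signals whose support is determined, up to the action of $A \ltimes \Z_2$, by their auto-correlation.

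The heart of the argument is therefore the dimension estimate $\dim I_{S,S'} < K$. The dimension-counting heuristic is as follows: the defining ideal is generated by $|(S-S)\cup(S'-S')|$ polynomials on a space of dimension $2K$, and under the hypothesis $|S-S|=|S'-S'|>K$ the union $(S-S)\cup(S'-S')$ has cardinality at least $|S-S|>K$, so one expects
\begin{equation*}
\dim I_{S,S'} \leq 2K - |(S-S)\cup(S'-S')| < K,
\end{equation*}
provided the defining quadrics form a regular sequence (or at least that the expected codimension is attained). For any particular choice of $A$, $S$ and $S'$ this can be certified by computing the Hilbert polynomial of the ideal generated by the $a_x[\ell] - a_{x'}[\ell]$ inside $\K[\{x_i\}_{i\in S},\{x'_j\}_{j\in S'}]$, exactly as in Section~\ref{subsec.support_recovery}; the only change is that indices now live in $A$ rather than in $\Z_N$, and the equivalence relation on $C=A/\Z_2$ produces the same kind of combinatorial bookkeeping used in Example~\ref{example.support_recovery.appendix}.

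The hard part will be turning this heuristic into an unconditional theorem. Proving that the quadrics $\{a_x[\ell] - a_{x'}[\ell]\}$ cut out $I_{S,S'}$ in the expected codimension is a genuine algebraic-geometry problem: one must rule out unexpected components of $I_{S,S'}$ coming from pairs $(x,x')$ that are not related by an intrinsic symmetry, and the existence of such components is sensitive to the arithmetic of $A$. A natural strategy is to fix a suitable monomial order on the polynomial ring indexed by $A$ and try to show that the leading terms of the $a_x[\ell] - a_{x'}[\ell]$ form a regular sequence, perhaps by induction on $|A|$ or by reducing to the case where $A$ is a $p$-group via the Chinese remainder decomposition. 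The obstruction analogous to arithmetic progressions in $\Z_N$ (Proposition~\ref{prop.progression}) will reappear as cosets of proper subgroups of $A$, so the hypothesis $|S-S|>|S|$ will have to be used precisely to exclude these degenerate configurations, in the spirit of Kemperman-type sumset results (cf.\ Proposition~\ref{prop.kemperman}). Until such a structural result is available, the best we can offer is computational verification of the Hilbert polynomial for each specific triple $(A,S,S')$.
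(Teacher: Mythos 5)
The paper does not actually prove this statement---it is left as a conjecture, with only the per-instance computational verification scheme (the incidence variety $I_{S,S'} \subset L_S \times L_{S'}$, the reduction to the dimension bound $\dim I_{S,S'} < K$, and its certification via Hilbert polynomials) carried over from the cyclic case in Section~\ref{subsec.support_recovery}. Your proposal reproduces exactly that scheme and correctly identifies that the unconditional dimension estimate is the genuinely open part, so it matches the paper's own treatment; the added speculation about regular sequences, reduction to $p$-groups, and Kemperman-type obstructions goes beyond what the paper offers but is appropriately flagged as unproven.
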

}
\begin{conj} \label{conj.vector_recovery_general_group}
	Suppose that $|S - S| > |S| $. 
	If $x \in L_S$ is a generic
	vector and $x' \in L_S$ is another vector (in the same subspace) such that $a(x) = a(x')$,
	then $x' = g \cdot x$ for some~$g \in D_{S,A}$.
\end{conj}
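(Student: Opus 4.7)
The plan is to follow the computational and algebraic-geometric strategy developed in Section~\ref{subsec.vector_recovery} for the cyclic group case and adapt it to the general finite abelian group setting. First, form the incidence variety
\[
I_S = \{(x, x') \in L_S \times L_S : a_x = a_{x'}\},
\]
which is an algebraic subset of $L_S \times L_S \cong \K^{|S|} \times \K^{|S|}$ cut out by the $|S-S|$ quadratic equations $a_x[\ell] = a_{x'}[\ell]$ indexed by $\ell \in S-S \subset C$. By construction, $I_S$ contains the $|S|$-dimensional linear subspaces $L_g = \{(x, g\cdot x) : x \in L_S\}$ for every $g$ in the group $D_{S,A}$ of intrinsic symmetries of $L_S$. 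Conjecture~\ref{conj.vector_recovery_general_group} is equivalent to the assertion that
\[
Z := I_S \setminus \bigcup_{g \in D_{S,A}} L_g
\]
has dimension strictly less than $|S|$: for then the image of $Z$ under the first projection is a proper algebraic (hence measure-zero) subset of $L_S$, and every $x$ outside this set has only trivial preimages in $I_S$.

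Next I would carry out a parameter count to motivate the expected bound. The restricted auto-correlation $a|_{L_S} \colon L_S \to \K^{|S-S|}$ sends an $|S|$-dimensional source into an $|S-S|$-dimensional target, and the condition $a_x = a_{x'}$ imposes $|S-S|$ equations on the $2|S|$-dimensional product $L_S \times L_S$. Since $|S-S| > |S|$ by hypothesis, the naive expected dimension $2|S| - |S-S|$ is strictly less than $|S|$, consistent with the conjecture. To turn this heuristic into a proof, one would compute or bound the Hilbert polynomial $P_I$ of the ideal defining $I_S$, write it as $P_I = a_\ell P_\ell + (\text{lower order})$, and show $\ell \leq |S| - 1$ together with $a_\ell = |D_{S,A}|$. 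Proposition~\ref{prop.linear_strip} then immediately yields $\dim Z < |S|$.

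I expect the main obstacle to be bounding the degree and leading coefficient of $P_I$ uniformly over the pair $(A, S)$. In the cyclic case treated in the body of the paper, no uniform analytic argument is known, and each pair $(N, K)$ is verified separately by computer algebra; nothing about passing to a general abelian group $A$ makes the problem structurally easier. A potentially more intrinsic route is the group-theoretic reformulation analogous to Conjecture~\ref{conj.genericrecovery-group}: the fiber of $a$ over $a_x$ is contained in $G_A \cdot x \cap L_S$, where $G_A$ is the analog of the group of non-trivial symmetries built from the Pontryagin dual $\hat{A}$. One would then upgrade Conjecture~\ref{conj.transversality} to the abelian-group setting and argue that for generic $x$ the intersection $G_A \cdot x \cap L_S$ reduces to $D_{S,A} \cdot x$. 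The transversality step is where the real difficulty lies: the hypothesis $|S-S| > |S|$ controls only the infinitesimal Jacobian behavior at the trivial orbit, whereas ruling out spurious higher-dimensional components of $G_A \cdot L_S \cap L_S$ appears to require a finer combinatorial understanding of how characters in $\hat{A}$ interact with indicator vectors supported on $S$. At best this plan produces an algorithmic verification protocol for each specific $(A, S)$, mirroring the state of affairs already achieved in the cyclic case.
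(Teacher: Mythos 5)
This statement is a conjecture in the paper, not a theorem: the paper offers no proof, only the case-by-case computational verification scheme of Section~\ref{subsec.vector_recovery} (form the incidence $I_S$, compute its Hilbert polynomial, check that the dimension is at most $|S|-1$ with leading coefficient $|D_{S,A}|$, and apply Proposition~\ref{prop.linear_strip}), which it asserts carries over to the general abelian setting. Your proposal reproduces exactly this strategy and correctly concludes that it yields only an algorithmic verification protocol for each specific $(A,S)$ rather than a proof, so it matches the paper's own state of knowledge.
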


\section*{Acknowledgments}
The authors thank Ti-Yen Lan for his comments on an early draft of this paper, \rev{and the anonymous referees for their valuable comments and suggestions.}

\bibliographystyle{siamplain}
%\bibliography{ref}

\end{document}